\documentclass[11pt]{article}

 \usepackage{algorithm}
\usepackage{tcolorbox}
\usepackage{amssymb,xspace,graphicx,relsize,bm,xcolor,amsmath,breqn,algpseudocode,multirow}
\usepackage{tikz,amsmath}
\usepackage{tcolorbox}
\newcommand{\Exp}{\mathbb{E}}

\usepackage[margin=1in]{geometry}
\newcommand{\braketbra}[3]{\langle #1|#2| #3 \rangle}
\newcommand{\E}{\mathcal{E}}
\newcommand{\M}{\mathcal{M}}

\usepackage{graphicx}
\usepackage{amsmath}
\usepackage{amssymb}
\usepackage{amsthm}
\usepackage{dsfont}
\usepackage{array}
\usepackage{makecell}

\newcommand{\Sh}{\ensuremath{\mathcal{S}}}
\newcommand{\X}{\ensuremath{\mathcal{X}}}

\newcommand{\C}{\ensuremath{\mathcal{C}}}

\newcommand{\id}{\ensuremath{\mathbb{I}}}

\usepackage[pagebackref]{hyperref}
\usepackage[usestackEOL]{stackengine}
\usepackage{thm-restate,mathrsfs}
\usepackage{enumerate}
\usepackage{array}
\usepackage{parskip}
\def\01{\{0,1\}}

\newcommand{\ket}[1]{|#1\rangle}
\newcommand{\bra}[1]{\langle#1|}
\newcommand{\ketbra}[2]{|#1\rangle\langle#2|}
\hypersetup{
	colorlinks,
	linkcolor={blue!100!black},
	citecolor={red!100!black},
}

\newcommand{\be}{\begin{equation}}
\newcommand{\ee}{\end{equation}}
\newcommand{\ba}{\begin{array}}
\newcommand{\ea}{\end{array}}
\newcommand{\bea}{\begin{eqnarray}}
\newcommand{\eea}{\end{eqnarray}}



\newcommand{\braket}[3]{\langle{#1}|{#2}|{#3}\rangle}

\usepackage{mathtools}


\newcommand{\ra}{\rangle}
\newcommand{\la}{\langle}
\newcommand{\bias}{\mathrm{bias}}

\newcommand{\Tr}{\text{Tr}}

\newcommand{\norm}[1]{\left\lVert#1\right\rVert}
\newcommand{\mathF}{\mathbb{F}_2}
\newcommand{\Fset}[1]{\{0,1\}^{#1}}

\newcommand{\calD}{{\cal D }}

\newcommand{\calP}{{\cal P }}
\newcommand{\calM}{{\cal M }}
\newcommand{\FF}{\mathbb{F}}
\newcommand{\EE}{\mathbb{E}}
\newcommand{\ZZ}{\mathbb{Z}}

\newcommand{\RR}{\mathbb{R}}

\newtheorem{dfn}{Definition}
\newtheorem{prop}{Proposition}
\newtheorem{lemma}{Lemma}

\newtheorem{corollary}{Corollary}
\newtheorem{fact}{Fact}
\newtheorem{claim}{Claim}

\newtheorem{theorem}{Theorem}

\global\long\def\argmin{\operatornamewithlimits{argmin}}

\title{Optimal algorithms for learning quantum phase states}

\date{}

\author{
Srinivasan Arunachalam$^{1}$\thanks{srinivasan.arunachalam@ibm.com}
\quad 
Sergey Bravyi$^{1}$\thanks{sbravyi@us.ibm.com}
 \quad 
Arkopal Dutt$^{1,2,3}$\thanks{arkopal@mit.edu}
 \quad 
Theodore J. Yoder$^{1}$\thanks{ted.yoder@ibm.com}
 \\ \ \\
\small $^{1}$IBM Quantum, Thomas J Watson Research Center, Yorktown Heights, New York 10598, USA \\
\small $^{2}$MIT-IBM Watson AI Lab, Cambridge, Massachusetts 02142, USA \\
\small $^{3}$Department of Physics, Co-Design Center for Quantum Advantage, \\ 
\small Massachusetts Institute of Technology, Cambridge, Massachusetts 02139, USA
}

\begin{document}
\maketitle
\begin{abstract}
We analyze the complexity of learning $n$-qubit quantum phase states. A degree-$d$ phase state is defined as a superposition of all $2^n$ basis vectors $x$ with amplitudes proportional to $(-1)^{f(x)}$, where $f$ is a degree-$d$ Boolean polynomial over $n$ variables. We show that the sample complexity of learning an unknown degree-$d$ phase state is $\Theta(n^d)$ if we allow separable measurements and $\Theta(n^{d-1})$ if we allow entangled measurements. Our learning algorithm based on separable measurements has runtime $\textsf{poly}(n)$ (for constant $d$) and is well-suited for near-term demonstrations as it  requires only single-qubit measurements in the Pauli $X$ and $Z$ bases. We show similar bounds on the sample complexity for learning generalized phase states with complex-valued  amplitudes. We further consider learning phase states when $f$ has sparsity-$s$, degree-$d$ in its $\FF_2$ representation (with sample complexity $O(2^d sn)$), $f$ has Fourier-degree-$t$ (with sample complexity $O(2^{2t})$), and learning quadratic phase states with $\varepsilon$-global depolarizing noise (with sample complexity $O(n^{1+\varepsilon})$). These learning algorithms give us a  procedure to learn the diagonal unitaries of the Clifford hierarchy and IQP~circuits. 
\end{abstract}

\newpage

\section{Introduction} \label{sec:intro}
Quantum state tomography is the problem of learning an unknown quantum state $\rho$ drawn from a specified class of states by performing measurements on multiple copies of $\rho$. The preeminence of this problem in verification of quantum experiments has motivated an in-depth study of state tomography protocols and their limitations for various classes of quantum states~\cite{haah2017sample,o2016efficient,apeldoorn,yuen2022improved}. The main figure of merit characterizing a state tomography protocol is its {\em sample complexity} defined as the number of copies of $\rho$ consumed by the protocol in order to learn $\rho$. Of particular interest are classes of $n$-qubit quantum states that can be learned efficiently, such that the sample complexity grows only polynomially with $n$. Known examples of  efficiently learnable states include Matrix Product States describing weakly entangled quantum spin chains~\cite{cramer2010efficient}, output states of Clifford circuits~\cite{montanaro2017learning}, output states of Clifford circuits with a single layer of $T$~gates~\cite{lai2022learning}, and  high-temperature Gibbs states of local Hamiltonians~\cite{anshu2021sample,haah2021optimal}. Apart from their potential use in  experiments, efficiently learnable quantum states are of great importance for quantum algorithm design. For example, a quantum algorithm for solving the dihedral hidden subgroup problem~\cite{bacon2005optimal} can be viewed as a tomography protocol for learning so-called hidden subgroup states (although this protocol is efficient in term of its sample complexity, its runtime is believed to be super-polynomial~\cite{bacon2005optimal}).

A natural question to then ask is: What are other classes of $n$-qubit quantum states that are ubiquitous in quantum computing, which can be learned efficiently? In this work, we consider the problem of state tomography for {\em phase states} associated with (generalized) Boolean functions. Phase states are encountered in quantum information theory~\cite{hein2004multiparty}, quantum algorithm design~\cite{bacon2005optimal}, quantum cryptography~\cite{ji2018pseudorandom,brakerski2019pseudo}, and quantum-advantage experiments~\cite{bremner2011classical,bremner2017achieving}.

By definition, an $n$-qubit, degree-$d$ phase state has the form
\be
\label{binary_phase_state}
|\psi_f\ra = 2^{-n/2} \sum_{x\in \{0,1\}^n}\;  (-1)^{f(x)} |x\ra,
\ee
where $f\, : \, \{0,1\}^n \to \{0,1\}$ is a  degree-$d$ polynomial, that is,
\be
\label{binary_polynomial}
f(x) = \sum_{\substack{J\subseteq [n],\, |J|\le d}} \alpha_J \prod_{j\in J} x_j {\pmod 2},
\ee
for some coefficients $\alpha_J\in \{0,1\}$. Phase states associated with homogeneous degree-$2$ polynomials $f(x)$ coincide with graph states that play a prominent role in quantum information theory~\cite{hein2004multiparty}. Such states can be alternatively represented as 
\[
|\psi_f\ra = \prod_{(i,j)\in E} \mathsf{CZ}_{i,j} |+\ra^{\otimes n},
\]
where $n$ qubits live at vertices of a graph, $E$ is the set of graph edges,
$\mathsf{CZ}_{i,j}$ is the controlled-$Z$ gate applied to qubits $i,j$, and  $|+\ra=(|0\ra+|1\ra)/\sqrt{2}$. It is known that the output state of any Clifford circuit is locally equivalent to a graph state for a suitable graph~\cite{schlingemann2001stabilizer}. Our results imply that graph states can be learned efficiently using only single-qubit gates and measurements. The best previously known protocol for learning graph states~\cite{montanaro2017learning} requires entangled measurements across two copies of $\ket{\psi_f}$. Other examples of circuits producing phase states include measurement-based quantum computing \cite{rossi2013quantum} and a subclass of $\mathsf{IQP}$ circuits (Instantaneous Quantum Polynomial-time), which correspond to degree-$3$ phase states \cite{montanaro2017circuits}. $\mathsf{IQP}$ circuits are prevalent in quantum-advantage experiments \cite{bremner2011classical,bremner2017achieving} and are believed to be hard to simulate classically.

We also consider generalized degree-$d$ phase states 
\be
\label{generalized_phase_state}
\ket{\psi_f} =
2^{-n/2} \sum_{x\in \{0,1\}^n}\;  
\omega_q^{f(x)} \ket{x},  \qquad \omega_q=e^{2\pi i /q}
\ee
where $q\ge 2$ is an even integer and 
$f\, : \, \{0,1\}^n\to \ZZ_q$ is a degree-$d$ polynomial, that is,
\be
\label{generalized_polynomial}
f(x) = \sum_{\substack{J\subseteq [n],\, |J|\le d}} \alpha_J \prod_{j\in J} x_j {\pmod {q}}.
\ee
for coefficients $\alpha_J\in \ZZ_q = \{0,1,\ldots,q-1\}$. 
It is also known that generalized degree-$d$ phase states with $q=2^d$ can be prepared from diagonal unitary operators~\cite{cui2017diagonal} in the $d$-th level of the Clifford hierarchy~\cite{gottesman1999demonstrating}. Additionally, it is known that the output state of a random $n$-qubit Clifford circuit is a generalized $q=4$, degree-$2$ phase state with a constant probability~\cite[Appendix~D]{bravyi2016improved}.
 Binary and generalized phase states have also found applications in cryptography~\cite{ji2018pseudorandom,brakerski2019pseudo}, and complexity theory~\cite{irani2021quantum} (we discuss this in the next section).

In this work, we consider learning phase states through two types of tomography protocols based on {\em separable} and {\em entangled} measurements. The former can be realized as a sequence of $M$ independent measurements, each performed on a separate copy of $|\psi_f\ra$ (furthermore our learning algorithms only require single \emph{qubit} measurements).  The latter performs a joint measurement on the state $|\psi_f\ra^{\otimes M}$. Our goal is to then derive upper and lower bounds on the sample complexity~$M$ of learning $f$, as a function of $n$ and $d$. In the next section, we state our main results. Interestingly, our protocols based on separable measurements require only single-qubit gates and single-qubit measurements making them well suited for near-term demonstrations.

\subsection{Summary of contributions and applications}
We first introduce some notation before giving an overview of our contributions. For every $n$ and $d\leq n/2$, let $\calP(n,d)$ be the set of all degree-$d$ polynomials of the form Eq.~(\ref{binary_polynomial}). Let $\calP_q(n,d)$ be the set of all degree-$d$ $\ZZ_q$-valued polynomials of the form Eq.~(\ref{generalized_phase_state}). By definition, $\calP_2(n,d)\equiv \calP(n,d)$. To avoid confusion, we shall refer to states defined in Eq.~(\ref{binary_phase_state}) as binary phase states and in Eq.~(\ref{generalized_phase_state}) as generalized phase states. Our learning protocol takes as input integers $n,d$ and $M$ copies of a degree-$d$ phase state $\ket{\psi_f}$ with unknown $f\in \calP(n,d)$ (or $f\in \calP_q(n,d)$). The protocol outputs a classical description of a polynomial $g\in \calP(n,d)$ (or $g\in \calP_q(n,d)$) such that $f=g$ with high~probability. 

The main result in this work are optimal algorithms for learning phase states if the algorithm is allowed to make separable or entangled measurements. Prior to our work, we are aware of only two works in this direction (i)  algorithms for efficiently learning degree-$1$ and degree-$2$ phase states; (ii) Montanaro~\cite{montanaro2012quantum} considered learning multilinear polynomials $f$, assuming we have \emph{query access} to $f$, which is a stronger learning model than the sample access model that we assume for our learning algorithm. In this work, we show that if allowed separable measurements, the \emph{sample} complexity of learning binary phase states and generalized phase states is $O(n^d)$. If allowed entangled measurements, we obtain a sample complexity of $O(dn^{d-1})$ for learning binary phase states. We further consider settings where the unknown function $f$ we are trying to learn is known to be sparse, has a small Fourier-degree and the setting when given noisy copies of the quantum phase state.  In Table~\ref{tab:summary_results_paper}, we summarize all our main results (except the first two rows, which include the main prior work in this direction).

\begin{table}[h]
\small
\centering
 \begin{tabular}{|c | c  | c  |c|} 
 \hline
  & \makecell{Sample complexity} & \makecell{Time complexity}  & Measurements\\ [0.5ex] 
 \hline
 \makecell{ Binary phase state $\FF_2$-degree-$1$~\cite{bernstein1997quantum}} & $\Theta(1)$  & $O(n^3)$   &Separable\\ 
 \hline
 \makecell{Binary phase state $\FF_2$-degree-$2$\\~\cite{montanaro2017learning,rotteler2009quantum}}   & $O(n)$ & $O(n^3)$ &Entangled \\ \hline
 \makecell{ Binary phase state $\FF_2$-degree-$d$} & \makecell{$\Theta(n^{d})$\\ Theorem~\ref{thm:binaryphasedegreed},~\ref{thm:lowerboundbinaryphase}} & $O(n^{3d-2})$ &Separable \\ 
 \hline
  \makecell{Binary phase state $\FF_2$-degree-$d$ } & \makecell{$\Theta(n^{d-1})$\\Theorem~\ref{thm:entangledupperbound}}  & $O(\exp(n^d \log 2))$ &Entangled \\ \hline
  \makecell{Generalized phase states degree-$d$} & \makecell{$\Theta(n^{d})$\\Theorem~\ref{thm:generalizedphasedegreed}} & $O(\exp(n^d \log q))$ &Separable \\
   \hline
    \makecell{\emph{Sparse} Binary phase state\\ $\FF_2$-degree-$d$, $\FF_2$-sparsity $s$} & \makecell{${O}(2^d s n )$ \\Theorem~\ref{thm:sample_complexity_sparse_degree_d_polynomial_qs}} & $O(2^{3d}s^3 n)$ &Separable \\ \hline
    \makecell{Binary phase state $\FF_2$-degree-$2$\\ with global depolarizing noise $\varepsilon$} & \makecell{$n^{1+O(\varepsilon)}$ \\Theorem~\ref{lem:noisystablearning}} & $O(2^{n/\log n})$ &Entangled\\ \hline
    \makecell{Binary phase state $\FF_2$-degree-$2$\\ with local depolarizing noise $\varepsilon$} & \makecell{$\Theta((1-\varepsilon)^n)$ \\Theorem~\ref{thm:localdepolarizing}}& \makecell{$O(2^{n/\log n})$} & Entangled\\ \hline
    \makecell{ Binary phase state Fourier-degree-$d$} & \makecell{$O(2^{2d})$ \\ Theorem~\ref{thm:fourierbinaryphasedegreed}} & \makecell{$O(\exp(n^2))$} &Entangled \\ 
 \hline
 \end{tabular}
 \caption{Upper and lower bounds of sample complexity for exact learning of $n$-qubit phase states with degree-$d$. For precise statements of the bounds, we refer the reader to the theorem statements.}
 \label{tab:summary_results_paper}
\end{table}
\newpage 
Before we give a proof sketch of these results, we first discuss a couple of motivations for considering the task of learning phase states and corresponding applications.  

\paragraph{Quantum complexity.} Recently, there has been a few results in quantum cryptography~\cite{ji2018pseudorandom,ananth2021cryptography,brakerski2019pseudo} and complexity theory~\cite{irani2021quantum} which used the notion of phase states. 

Ji et al.~\cite{ji2018pseudorandom} introduced the notion of \emph{pseudorandom quantum states} as states of the form $\ket{\phi}=\frac{1}{\sqrt{2^n}}\sum_{x\in \01^n}\omega_N^{F(x)}\ket{x}$ where $F$ is a pseudorandom function.\footnote{We do not discuss the details of pseudorandom functions here, we refer the interested reader to~\cite{ji2018pseudorandom}.} Ji et al.~showed that states of the form $\ket{\phi}$ are efficiently preparable and statistically indistinguishable from a Haar random state, which given as input to a polynomial-time quantum algorithm. A subsequent work of Brakerski~\cite{brakerski2019pseudo} showed that it suffices to consider $\ket{\phi'}=\frac{1}{\sqrt{2^n}}\sum_{x\in \01^n}(-1)^{F(x)}\ket{x}$ (where $F$ again is a pseudorandom function) and  such states are also efficiently preparable and statistically indistinguishable from Haar random states. Subsequently, these states have found applications in proposing many  cryptosystems~\cite{ananth2021cryptography}. Although none of these works discuss the degree of the phase function $F$, our result shows implicitly that when $F$ is low-degree, then $\ket{\phi}$ is exactly learnable and hence distinguishable from Haar random states, implying that they cannot be quantum pseudorandom states. 
In another recent work, Irani et al.~\cite{irani2021quantum} considered the power of quantum witnesses in proof systems. In particular, they showed that in order to construct the witness to a \textsf{QMA} complete problem, say the ground state $\ket{\phi}$ to a local-Hamiltonian problem, it suffices to consider a phase state $\frac{1}{\sqrt{2^n}}\sum_x (-1)^{F(x)}\ket{x}$ which has a good overlap to $\ket{\phi}$. To this end, they show a strong property that, for \emph{every} state $\ket{\tau}$ and a random Clifford operator $U$ (or, more generally, an element of some unitary $2$-design), the state $U\ket{\tau}$ has constant overlap with a phase state~\cite[Lemma~A.5]{irani2021quantum}.
Our learning result implicitly shows that, assuming  $\textsf{QMA}\neq \textsf{QCMA}$, then the phase state that has constant overlap with the ground space energy of the local Hamiltonian problem, cannot be of low degree.

\paragraph{Learning quantum circuits.} Given access to a quantum circuit $U$, the goal of this learning task is to learn a circuit representation of $U$. The sample complexity for learning a general $n$-qubit quantum circuit is known to be $2^{\Theta(n)}$ \cite{chuang1997prescription,mohseni2008quantum}, which is usually impractical.

If we restrict ourselves to particular classes of quantum circuits, there are some known results for efficient learnability. Low \cite{low2009learning} showed that an $n$-qubit Clifford circuit can be learned using $O(n)$ samples. However, this result was only an existential proof and requires access to the conjugate of the circuit. Constructive algorithms were given in Low \cite{low2009learning}, and Lai and Cheng \cite{lai2022learning}, both of which showed that Clifford circuits can be learned using $O(n^2)$ samples. Both these algorithms require entangled measurements with the former algorithm using pretty-good measurement~\cite{harrow2012many}, and the latter using Bell sampling. In this work, we show that Clifford circuits producing degree-$2$ binary phase states, can be learned in $O(n^2)$ samples, matching their result but only using separable measurements. Moreover, Low \cite{low2009learning} also gave an existential proof of algorithms for learning circuits in the $d$-th level of the Clifford hierarchy, using $O(n^{d-1})$ samples. In this work, we give constructive algorithms for learning the diagonal elements of the Clifford hierarchy in $O(n^d)$ samples using separable measurements. A direct result of this is that a subset of $\textsf{IQP}$ circuits, which are also believed to be hard to simulate classically \cite{bremner2011classical,bremner2016average}, are shown to be efficiently learnable. Our learning result thus gives an efficient method for verifying $\textsf{IQP}$ circuits that may be part of quantum-advantage experiments \cite{bremner2017achieving,novo2021quantum}.

\paragraph{Learning hypergraph states.} We finally observe that degree-$3$ (and higher-degree) phase states have appeared in works~\cite{rossi2013quantum,takeuchi2019quantum} on measurement-based quantum computing (MBQC), wherein they refer to these states as \emph{hypergraph states}. These works show that single-qubit measurements in the Pauli $X$ or $Z$ basis performed on a suitable degree-3 hypergraph state are sufficient for universal MBQC. Our learning algorithm gives a procedure for learning these states in polynomial-time and could potentially be used as a subroutine for verifying MBQC.

\subsection{Proof sketch}
In this section we briefly sketch the proofs of our main results. 
\subsubsection{Binary phase states} 

As we mentioned earlier, Montanaro~\cite{montanaro2017learning} and Roettler~\cite{rotteler2009quantum} showed how to learn degree-$2$ phase states using $O(n)$ copies of the state. Crucial to both their learning algorithms was the following so-called Bell-sampling procedure: given two copies of $\ket{\psi_f}=\frac{1}{\sqrt{2^n}}\sum_{x} (-1)^{f(x)}\ket{x}$ where $f(x)=x^\top A x$ (where $A\in \FF_2^{n\times n}$), perform $n$ CNOTs from the first copy to the second, and measure the second copy. One obtains a uniformly random $y\in \FF_2^n$ and the state 
$$
\frac{1}{\sqrt{2^n}}\sum_x (-1)^{f(x)+f(x+y)}\ket{x}=\frac{(-1)^{y^\top Ay}}{\sqrt{2^n}}\sum_x (-1)^{x^\top(A+A^\top)\cdot y}\ket{x}.
$$
Using Bernstein-Vazirani~\cite{bernstein1997quantum} one can apply $n$-qubit Hadamard transform to obtain the bit string $(A+A^\top)\cdot y$. Repeating this process $O(n\log n)$ many times, one can learn $n$ linearly independent constraints about $A$, and along with Gaussian elimination, allows one to learn~$A$.\footnote{It remains to learn the diagonal elements of $A$, but one can learn those using an extra step, which we discuss further in Theorem~\ref{lem:noisystablearning}.}  

Applying this same Bell-sampling procedure to degree-$3$ phase states does not easily learn the phase function. In this direction, from two copies of the degree-3 phase state $\ket{\psi_f}$ one obtains a uniformly random $y\in\FF_2^n$ and the state $\ket{\psi_{g_y}}=\frac{1}{\sqrt{2^n}}\sum_x (-1)^{g_y(x)}\ket{x}$ for a degree-$2$ polynomial $g_y(x)=f(x)+f(x+y)$. One might now hope to apply the degree-$2$ learning algorithm from above, but since the single copy of $\ket{\psi_{g_y}}$ was randomly generated, it takes $\Omega(\sqrt{2^n})$ copies of $\ket{\psi_f}$ to obtain enough copies of $\ket{\psi_{g_y}}$. Our main idea is to circumvent this Bell-sampling approach and instead propose two techniques that allow us to learn binary phase states using separable and entangled measurements which we discuss further below.  

\paragraph{Separable measurements, upper bound.}
Our first result is that we are able to learn binary phase states using separable measurements with sample complexity $O(n^d)$. In order to prove our upper bounds of sample complexity for learning with separable measurements, we make a simple observation. Given one copy of $\ket{\psi_f}=\frac{1}{\sqrt{2^n}}\sum_x (-1)^{f(x)}\ket{x}$, measure qubits $2,3,\ldots,n$ in the computational basis. Suppose the resulting string is $y \in \{0,1\}^{n-1}$. The post-measurement state of qubit $1$ is then given~by  
$$
\ket{\psi_{f,y}} = \frac{1}{\sqrt{2}} \left[ (-1)^{f(0y)}\ket{0} + (-1)^{f(1y)} \ket{1}\right].
$$
By applying a Hadamard transform to  $\ket{\psi_{f,y}}$ and measuring, the algorithm obtains $p_1(y)=f(0y)+f(1y) \mod 2$, which can be viewed as the derivative of $f$ in the first direction at point~$y$. Furthermore observe that $p_1$ is a degree $\leq d-1$ polynomial over $(n-1)$ variables. Hence, the learning algorithm repeatedly measures the last $(n-1)$ qubits and obtains $y^{(1)},\ldots,y^{(M)}$ for $M=n^{d-1}$ and obtains $(y^{(k)},p_1(y^{(k)}))$ for all $k=1,2,\ldots,M$ using the procedure above, which suffices to learn~$p_1$ completely. Then the algorithm repeats the same procedure by measuring all the qubits except the second qubit in the computational basis and learns the derivative of $f$ in the second direction. This is repeated over all the $n$ qubits. Through this procedure, a learning algorithm learns the partial derivatives of $f$ in the $n$ directions and a simple argument shows that this is sufficient to learn $f$ completely. This gives an overall sample complexity of $O(n^d)$. The procedure above only uses single qubit measurements in the $\{X,Z\}$ basis.  

\paragraph{Separable measurements, lower bound} Given the algorithm for learning binary phase states using separable measurements, a natural question is: Is the upper bound on sample complexity we presented above tight? Furthermore, suppose the learning algorithm was allowed to make arbitrary $n$-qubit measurements on a {single copy} of $\ket{\psi_f}$, instead of \emph{single qubit} measurements (which are weaker than single \emph{copy} measurements), then could we potentially learn $f$ using fewer than $O(n^d)$~copies?

Here we show that if we allowed \emph{arbitrary} single copy measurements, then a learning algorithm needs $\Omega(n^d)$ many copies of $\ket{\psi_f}$  to learn $f$. In order to prove this lower bound, our main technical idea is the following. Let $f$ be 
a degree-$d$ polynomial with $n$ variables sampled uniformly at random. 
Suppose a learning algorithm measures the phase state $\ket{\psi_f}$ in an arbitrary orthonormal basis $\{U\ket{x}\}_x$.
We show that 
the distribution describing the measurement outcome $x$ is ``fairly" uniform. In particular, 
\begin{align}
\label{eq:abstractentropy}
    \mathop{\mathbb{E}}_f [H(x|f)]\ge n - O(1),
\end{align} 
where $H(x|f)$ is the Shannon entropy of
a distribution $P(x|f)=|\la x|U^*|\psi_f\ra|^2$.
Thus, for a typical $f$, measuring one copy
of the phase state $\ket{\psi_f}$ provides at most $O(1)$ bits of information about $f$. Since a random uniform
degree-$d$ polynomial $f$ with $n$ variables has entropy~$\Omega(n^d)$, one has to measure $\Omega(n^d)$
copies of $\psi_f$ in order to learn $f$.
To prove Eq.~\eqref{eq:abstractentropy}, we first lower bound the Shannon entropy by Renyi-two entropy
and bound the latter by deriving 
an explicit formula for $\EE_f [|\psi_f\ra\la \psi_f|^{\otimes 2}]$.

\paragraph{Entangled measurements.} After settling the sample complexity of learning binary phase states using separable measurements, one final question question remains: Do entangled measurements help in reducing the sample complexity? For the case of quadratic polynomials, we know that Bell measurements (which are entangled measurements) can be used to learn these states in sample complexity $O(n)$. However, as mentioned earlier, it is unclear how to extend the Bell measurement procedure for learning larger degree polynomials.

Here, we give a learning algorithm based on the so-called pretty-good measurements (PGM) that learns $\ket{\psi_f}$ for a degree-$d$ polynomial $f$
using $O(n^{d-1})$ copies of $\ket{\psi_f}$. In order to prove this bound, we follow the following three step approach: (a) we first observe that in order to learn degree-$d$ binary phase states, the \emph{optimal} measurement is the pretty good measurement since the ensemble $\Sh=\{\ket{\psi_f}\}_f$ is geometrically uniform. By geometrically uniform, we mean that $\Sh$ can be written as $\Sh=\{U_f\ket{\phi}\}_{f}$ where $\{U_f\}_f$ is an Abelian group. 
(b) We next observe a property about the geometrically uniform state identification problem (which is new as far as we are aware): suppose $\Sh$ is a geometrically uniform ensemble, then the success probability of the PGM in correctly identifying $f$, given copies of $\ket{\psi_f}$, is \emph{independent} of $f$, i.e., every element of the ensemble has the same probability of being identified correctly when measured using the PGM. (c) Finally, we need one powerful tool regarding the the weight distribution of Boolean polynomials: it was shown in~\cite{abbe2015reed} that for any degree-$d$ polynomial $f$, the following relation on $\textsf{wt}(f)$ or the fraction of strings in $\{0,1\}^n$ for which $f$ is non-zero holds:
$$|\{f\in \calP(n,d):\textsf{wt}(f)\leq (1-\varepsilon)2^{-\ell}\}|\leq (1/\varepsilon)^{C\ell^4\cdot  \binom{n-\ell}{\leq d-\ell}},
$$
for every $\varepsilon\in (0,1/2)$ and $\ell\in \{1,\ldots,d-1\}$. Using this statement, we can comment on the average inner product of $|\langle \psi_f|\psi_g\rangle|$ over all ensemble members with $f \neq g \in \calP(n,d)$. Combining this with a well-known result of PGMs, we are able to show that, given $M=O(n^{d-1})$ copies of $\ket{\psi_f}$ for $f\in \Sh$, the PGM identifies $f$ with probability $\geq 0.99$. Combining observations (a) and (b), the PGM also has the same probability of acceptance given an arbitrary $f\in \Sh$. Hence, we get an overall upper bound of $O(n^{d-1})$ for sample complexity of learning binary phase states using entangled measurements. 

The lower bound for entangled measurement setting is straightforward: each quantum sample $\frac{1}{\sqrt{2^n}}\sum_{x\in \{0,1\}^n}(-1)^{f(x)}\ket{x}$ provides $n$ bits of information and the goal is to learn $f$ which contains $O(n^d)$ bits of information, hence by Holevo's bound, we need at least $n^{d-1}$ quantum samples in order to learn $f$ with high probability.

\paragraph{Implications for property testing.} We remark that our learning algorithm can also be used in a naive way for property testing phase states. Let $\C=\{\ket{\psi_f}:f\in \calP(n,d)\}$ be the class of degree-$d$ phase states. The property testing question is: How many copies of an unknown $\ket{\phi}$ is sufficient to decide if $\ket{\phi}\in \C$ or $\min_{\ket{\psi_f}\in \C}\|\ket{\phi}-\ket{\psi_f}\|_2\geq 1/3$? As far as we are aware, the only prior work in this direction is when $d=1$ (using Bernstein-Vazirani~\cite{bernstein1997quantum}) and $d=2$ (using~\cite{gross2021schur} which shows how to solve this task using $6$ copies), but for larger $d$ it is unclear what is the sample complexity. It is also unclear how to perform the property testing task (even for $d=2$) using just \emph{separable measurements}. Using our learning result, we get the following: take $n^d$ copies of $\ket{\phi}$ and run our learning procedure using separable measurements.\footnote{We could also use $n^{d-1}$ copies of $\ket{\phi}$ and run our learning procedure using entangled measurements.} If $\ket{\phi}=\ket{\psi_f}$, then our algorithm learns~$f$. If $\ket{\phi}$ is not a phase state, the algorithm may fail, in which case the test classifies the state as a non-phase state. The worst case is if the algorithm succeeds and learns some incorrect phase state $\ket{\psi_f}$ from the non-phase input state. So, after running the learning algorithm and obtaining $\ket{\psi_f}$, use $O(1)$ more copies of $\ket{\phi}$ and run a swap test between $\ket{\phi}$ and $\ket{\psi_f}$, which succeeds with probability~$1$ if $\ket{\phi}=\ket{\psi_f}$ and rejects with probability at least $\Omega(1)$ if $\min_{\ket{\psi_f}\in \C}\|\ket{\phi}-\ket{\psi_f}\|_2\geq 1/3$. 

\subsubsection{Generalized phase states}
As far as we are aware, ours is the first work that considers the learnability of generalized phase states (using either entangled or separable measurements). The sample complexity upper bounds follow the same high-level idea as that in the binary phase state setting. However, we need a few more technical tools for the generalized setting which we discuss below. 

\paragraph{Separable bounds.} At a high-level, the learning procedure for generalized phase states is similar to the procedure for learning binary phase states with the exception of a couple of subtleties that we need to handle here. Suppose we perform the same procedure as in binary phase states by measuring the last $(n-1)$ qubits in the computational basis. 
We then obtain a uniformly random $y\in \FF_2^{n-1}$, and the post-measurement state for a generalized phase state is given by
$$
    \ket{\psi_{f,y}} =\frac1{\sqrt{2}} ( \omega_q^{f(0y)} \ket{0} + \omega_q^{f(1y)} \ket{1}).
$$
This state is proportional to $(|0\ra+\omega_q^c|1\ra)/\sqrt{2}$, where $c=f(1y)-f(0y) {\pmod q}$.
In the binary case, $q=2$, the states associated with $c=0$ and $c=1$ are orthogonal, so that the value of $c$
can be learned with certainty by measuring $\ket{\psi_{f,y}}$  in the Pauli $X$ basis.
However, in the generalized case, $q>2$, the states $(|0\ra+\omega_q^c|1\ra)/\sqrt{2}$ with $c\in \ZZ_q$
are not pairwise orthogonal. It is then unclear how to learn $c$ given a single copy of  $\ket{\psi_{f,y}}$.
However, we observe that it is still possible to obtain a value $b\in \ZZ_q$ such that $b\neq c$ with certainty. 
To this end, consider a POVM whose elements are given by $\M=\{\ketbra{\phi_b}{\phi_b}\}_{b\in \ZZ_q}$,
where $\ket{\phi_b}=\frac1{\sqrt{2}} ( \ket{0} - \omega_q^{b} \ket{1})$.
 Applying this POVM $\M$ onto an unknown state $(|0\ra+\omega_q^c|1\ra)/\sqrt{2}$ we observe that $c$ is the outcome with probability $0$ and furthermore \emph{every} other outcome $b\ne c$ appears with non-negligible probability $\Omega(q^{-3})$.

Hence with one copy of $\frac{1}{\sqrt{2^n}}\sum_{x\in \Fset{n}}\omega_q^{f(x)}\ket{x}$, we obtain uniformly random $y\in \Fset{n-1}$ and $b\in \ZZ_q$ such that $f(1y)-f(0y)\neq b$. We now repeat this process $m=O(n^{d-1})$ many times and obtain $(y^{(k)},b^{(k)})$ for $k=1,2,\ldots,M$ such that $f(1y^{(k)})-f(0y^{(k)})\neq b^{(k)}$ for all $k \in [M]$. We next show a variant of the Schwartz-Zippel lemma in the following sense: that for every $f\in \calP_q(n,d)$ and $c\in \ZZ_q$, then either $f$ is a constant function or the fraction of $x\in \FF_2^n$ for which $f(x)\neq c$ is at least $2^{-d}$. Using this, we show that after obtaining $O(2^d n^{d-1})$ samples, we can find a polynomial $g\in \calP_q(n-1,d-1)$ for which $f(1y)-f(0y)=g(y)$. We now repeat this protocol for $n$ different directions (by measuring each of the $n$ qubits in every iteration) and we learn all the $n$ directional derivatives of $f$, which suffices to learn $f$ completely.

\paragraph{Entangled bounds.} We do not give a result on learning generalized phase states with entangled measurements. We expect the proof of the sample complexity upper bound for learning generalized phase states using entangled measurements should proceed similarly to our earlier analysis of learning binary phase states using entangled measurements. However, we need a new technical tool that generalizes the earlier work on the weight distribution \cite{abbe2020reed} of Boolean functions $f: \FF_2^n \rightarrow \FF_2$ to those of form $f: \FF_2^n \rightarrow \ZZ_q$ with $q=2^d$.
\if0
The proof of the sample complexity upper bound for learning generalized phase states using entangled measurements is very similar to our earlier analysis of learning binary phase states using entangled measurements. However, we need one new technical tool that generalizes the earlier work of~\cite{ben2012random,beame2020bias}: we show that for every $f:\FF_2^n\rightarrow \ZZ_{2^d}$ and $j\in \ZZ_{2^d}^*$, we have that
\begin{equation}
\label{eq:abstractentangledgenrealized}
    \Pr_{f \in \calP_q(n,d)}\left[|\bias_j(f)| > 2^{-c_1 n/d} \right] \leq 2^{-c_2 \binom{n}{\leq d}},
\end{equation}
where the probability is over uniformly random degree-$d$ polynomials $f:\FF_2^n\rightarrow \ZZ_q$ and $\bias_j(f)=\Exp_x[\omega^{j\cdot f(x)}]$. Prior work of~\cite{ben2012random} considered the setting where $f:\FF_2^n\rightarrow \FF_2$, and the work of~\cite{beame2020bias} considered $f:\ZZ_p^n\rightarrow \ZZ_p$ where $p$ is an odd prime, and here we look at another setting where the input is in $\FF_2^n$ and the output is in $\ZZ_q$. Our proof resembles both these works closely along with a few minor technical details that need to be changed in their proofs in order to obtain Eq.~\eqref{eq:abstractentangledgenrealized}. Along with this bias statement, we use the same proof idea for learning binary phase states with entangled measurements and show that the PGM succeeds in identifying generalized phase states when given $O(dn^{d-1})$ copies of $\ket{\psi_f}$. 
\fi

\subsubsection{Learning with further constraints}

\paragraph{Learning sparse and low-Fourier degree states.} A natural constraint to put on top of having low $\FF_2$-degree in the polynomial is the sparsity, i.e., number of monomials in the $\FF_2$ decomposition of $f$. Sparse low-degree phase states appear naturally when learning circuits with few gates. In particular, suppose we are learning a quantum circuit $U$ with $s$ gates from $\{\mathsf{Z},\mathsf{CZ},\ldots,\mathsf{C}^{d-1}\mathsf{Z}\}$ (where $\mathsf{C}^{m}\mathsf{Z}$ is the controlled-$Z$ gate with $m$ controls), then the output of $U\ket{+}^{\otimes n}$ is a phase state with sparsity-$s$ and degree-$d$.

One naive approach to learn sparse $\FF_2$ polynomials is to directly apply our earlier learning algorithm for binary phase states but this ignores the $\FF_2$-sparsity information, and doesn't improve the sample complexity. Instead, here we use ideas from compressed sensing~\cite{draper2009compressed} to propose a linear program that allows us to improve the sample complexity to $O(2^{d}sn)$. Finally we make an observation that, if the function has \emph{Fourier}-degree $d$, then one can learn $f$, given only $O(2^d\log n)$ many copies of $\ket{\psi_f}$, basically using the fact that there are only $2^{2^d}$ many such functions, each having at least a $2^{-d}$ distance between them. 

\paragraph{Learning with depolarizing noise.} One motivation for learning stabilizer states was potential experimental demonstrations of the learning algorithm~\cite{rocchetto2019experimental}. Here, we consider a theoretical framework in order to understand the sample complexity of learning degree-$2$ phase states under global and local depolarizing noise. In this direction, we present two results. Under global depolarizing noise, i.e., when we are given $\rho_f=(1-\varepsilon)\ketbra{\psi_f}{\psi_f}+\varepsilon\cdot \id$, then it suffices to take $O(n^{1+\varepsilon})$ many copies $\rho_f$ in order to learn $f$. The crucial observation is that one can use Bell sampling to reduce learning $\rho_f$ to learning parities with noise, which we can accomplish using $O(n^{1+\varepsilon})$ samples and in time $2^{n/(\log \log n)}$~\cite{lyubashevsky2005parity}. Additionally, however, a simple argument reveals that under local depolarizing noise, the sample complexity of learning stabilizer states is exponential in~$n$.  

\subsection{Open questions}
Our work leaves open a few interesting questions.

\emph{Improving runtime.}
While our algorithms for learning phase states are optimal in terms of the sample complexity, their runtime scales polynomially with the number of qubits only in the case of binary phase states and separable measurements. 
It remains to be seen whether a polynomial runtime can be achieved in the remaining cases, i.e., learning binary phase states with entangled measurements and generalized phase states with either separable or entangled~measurements. 

\emph{Quantum advantage.}
Suppose $U$ is a polynomial size quantum circuit such that 
$U|0^n\ra$ is a low-degree phase state associated with some
Boolean function $f\, :\, \{0,1\}^n\to \{0,1\}$.
Our results imply there exists an efficient quantum algorithm
that learns $f$ given a classical description of $U$.
An interesting open question
is whether the problem of learning $f$ given a description of $U$
is classically hard. If this is the case, our results would imply
a quantum advantage for the considered learning~task.

\emph{Property testing.} What is the sample complexity of property testing phase states? Given $M$ copies of $\ket{\phi}$ with the promise that either $\ket{\phi}$ is a degree-$d$ phase state or $\varepsilon$-far from the set of degree-$d$ phase states, what is an upper and lower bound on $M$? For $d=1$, we can learn the entire state using $M=1$ copy and for $d=2$, Gross et al.~\cite{gross2021schur} showed that $M=6$ copies suffice for this testing question. For larger $d$, understanding the complexity of testing phase states is an intriguing open question left open by our work, in particular does the sample complexity of testing $n$-qubit degree-$d$ phase states scale as $n^{d-2}$ (for $d\geq 2$) or does it scale as $\textsf{poly}(c^d,n)$ for some $c>1$? 

\emph{Learning more expressive quantum states.} 
We leave as an open question whether our learning algorithms can be extended to binary phase states with a small {\em algebraic degree}. Such states have amplitudes proportional to $(-1)^{\mathsf{tr}F(x)}$, where
$F(x)=\sum_{i=0}^d a_i x^i$ is a degree-$d$ polynomial with coefficients  $a_i\in \FF_{2^n}$ and 
$\mathsf{tr}\, : \, \FF_{2^n} \to \FF_2$ is the trace function defined 
as $\mathsf{tr}(x) = \sum_{j=0}^{n-1} x^{2^j}$.
Here all arithmetic operations use the field $\FF_{2^n}$.
What is the sample complexity of learning $n$-qubit states produced by circuits containing \emph{non-diagonal} unitaries in the $k$-th level in the Clifford hierarchy, on the $\ket{+}^n$ input?  Similarly, what is the complexity of learning a state which has stabilizer rank $k$?\footnote{We know how to learn stabilizer states and stabilizer-rank $2$ states in polynomial time, what is the complexity as a function of rank-$k$?} Similarly can we PAC learn these classes of quantum states in polynomial time?\footnote{For stabilizer circuits, we have both positive and negative results in this direction~\cite{rocchetto2017stabiliser,liang2022clifford} but for more generalized circuits, it remains an open question.}

\paragraph{Organization.} In Section~\ref{sec:prelim}, we introduce phase states, discuss separable and entangled measurements. 
In Section~\ref{sec:learning_bps}, we prove our upper and lower bounds for learning binary phase states with separable and entangled measurements. In Section~\ref{sec:learning_sparse_bps}, we prove our results on learning sparse and low-Fourier-degree phase states.  In Section~\ref{sec:generalizedphase}, we prove our upper bound for learning generalized phase states using separable and entangled measurements. Finally, in Section~\ref{sec:applications} we explicitly discuss the connection between phase states, and the diagonal unitaries in the $d$-th level of the Clifford hierarchy and $\mathsf{IQP}$ circuits. 

\paragraph{Acknowledgements.} AD was supported in part by the MIT-IBM Watson AI Lab, and in part by the U.S. Department of Energy, Office of Science, National Quantum Information Science Research Centers, Co-Design Center for Quantum Advantage under contract DE-SC0012704. AD thanks Isaac L Chuang for suggesting applications of the learning algorithms presented here and for useful comments on the draft. SA thanks Giacomo Nannicini and Chinmay Nirkhe for useful discussions. SA, SB, and TY were supported in part by the Army Research Office under Grant Number W911NF-20-1-0014.

\section{Preliminaries}
\label{sec:prelim}
\subsection{Notation.} Let $[n]=\{1,\ldots,n\}$. Let $e_i$ be an $n$-dimensional vector with $1$ in the $i$th coordinate and $0$s elsewhere. We denote the finite field with the elements $\{0,1\}$ as $\FF_2$ and the ring of integers modulo $q$ as $\ZZ_q=\{0,1,\ldots,q-1\}$ with $q$ usually being a power of $2$ in this work. For a Boolean function $f:\FF_2^n\rightarrow \FF_2$, the bias of $f$ is defined as
$$
\bias(f)=\mathop{\mathbb{E}}_{x} [(-1)^{f(x)}],
$$
where the expectation is over a uniformly random $x\in \01^n$. For $g:\FF_2^n\rightarrow \ZZ_{2^d}$, the bias of $g$ in the coordinate $j\in\FF_{2^d}^\star$ is defined as
$\bias_j(g)=\mathop{\mathbb{E}}_{x} [(\omega_{2^d})^{j\cdot g(x)}]$. For a function $f:\FF_2^n\rightarrow \FF_2$, $y\in \FF_2^{n-1}$ and $k\in [n]$, we denote 
$
(D_k f)(y)=f({y}^{k=1})+f({y}^{k=0}), 
$ where ${y}^{i=1},{y}^{i=0}\in \FF_2^n$ is defined as: the $i$th bit of $y^{i=1}$ equals $1$ and $y^{i=0}$ equals $0$ and otherwise equals $y$.

\subsection{Boolean Functions}

\paragraph{$\FF_2$ representation.} A Boolean function $f:\FF_2^n \rightarrow \FF_2$ can be uniquely represented by a polynomial over $\FF_2$ as follows:
\begin{equation}
    f(x) = \sum_{J \subseteq [n]} \alpha_J \prod_{i \in J}x_i \pmod 2,
    \label{eq:anf_boolean_function}    
\end{equation}
where $\alpha_J \in \{0,1\}$. Similar to Eq.~\eqref{eq:anf_boolean_function}, we can write Boolean functions $f:\FF_2^n \rightarrow \ZZ_{q}$ as 
\begin{equation}
    f(x) =\sum_{J \subseteq  [n]} \alpha_J  \prod_{i \in J} x_i {\pmod q}
    \label{eq:anf_boolean_function_gps}
\end{equation}
for some integer coefficients $\alpha_J \in \{0,1,\ldots ,q-1\}$. Throughout this paper, unless explicitly mentioned, we will be concerned with writing Boolean functions as a decomposition over $\FF_2$ or $\ZZ_{q}$ with $q=2^d$.  The $\FF_2$ degree of $f$ is defined as
$$
\deg(f)=\max\{|J|:\alpha_J\neq 0\}.
$$
Similarly for polynomials over $\ZZ_{2^d}$, we can define the degree as the size of the largest monomial whose coefficient $\alpha_J$ is non-negative.

We will call $g:\FF_2^n \rightarrow \FF_2$ with $g=\prod_{i \in J} x_i$ as monic monomials over $n$ variables of at most degree-$d$, characterized by set $J \subseteq [n]$, $|J| \leq d$. We will denote the set of these monic monomials by $\calM(n,d)$. Note that $|\calM(n,d)| = \sum_{j=0}^d {n \choose j} = O(n^d)$. We will denote the set of polynomials over $n$ variables of $\FF_2$-degree $d$ as $\mathcal{P}(n,d)$. Note that these polynomials are just linear combinations of monomials in $\calM(n,d)$. We will denote the set of polynomials over $n$ variables of $\FF_2$-degree $d$ with sparsity $s$ as $\mathcal{P}(n,d,s)$. Similarly, we will denote $\calP_q(n,d)$ as the set of all degree-$d$ Boolean polynomials $f:\FF_2^n \rightarrow \ZZ_{q}$ with $n$ variables. In particular, one can specify any polynomial $f\in \calP_q(n,d)$ by $O(dn^d)$ bits and $|\calP_q(n,d)|\le 2^{O(dn^d)}$.

Consider a fixed $d$, and any $x \in \FF_2^n$. Let the $d$-evaluation of $x$, denoted by $\mathrm{eval}_d(x)$, be a column vector in $\FF_2^{|\calM(n,d)|}$ with its elements being the evaluations of $x$ under different monomials $g \in \calM(n,d)$. This can be expressed as follows:
\begin{equation}
    \mathrm{eval}_d(x) = \left( \prod_{i \in J \subseteq [n], |J| \leq d} x_i \right)^\top
    \label{eq:d-evaluation_at_point_x}   
\end{equation}
For a set of points $\mathbf{x} = (x^{(1)},x^{(2)},\ldots,x^{(m)}) \in (\FF_2^n)^m$, we will call the matrix in $\FF_2^{|\calM(n,d)| \times m}$ with its $k$th column corresponding to $d$-evaluations of $x^{(k)}$, as the $d$-evaluation matrix of $\mathbf{x}$, and denote it by $Q_{\mathbf{x}}$.

\paragraph{Fourier Decomposition}
A Boolean function $f: \mathF^n \rightarrow \mathF$ admits the following Fourier decomposition
\begin{equation}
    f(x) = \sum_{J \subseteq [n]} \widehat{f}_J \chi_J(x),
    \label{eq:fourier_decomposition_polynomial_qs}
\end{equation}
where $J$ are subsets of $[n]=\{1,2,\ldots ,n\}$ and $\chi_J(x)=(-1)^{x_J}$ where $x_J=\sum_{i\in J}x_i$. Additionally the Fourier coefficients are defined as $\widehat{f}_J =\mathop{\mathbb{E}}_{x} [f(x)\chi_J(x)]$. The \emph{Fourier degree} of $f$ is defined as~$\max_{J}\{|J|:\widehat{f}_J\neq 0\}$.
Note that here all arithmetic operations use the field of real numbers $\RR$,
as opposed to the modular arithmetics used in the previous subsections. 

\subsection{Phase states}
\paragraph{Binary Phase State}   For a Boolean function $f:
\{0,1\}^n\rightarrow \{0,1\}$, we define a binary phase state as the   $n$-qubit state given by
\begin{equation}
    \ket{\psi_f} = \frac{1}{\sqrt{2^n}} \sum \limits_{x \in \Fset{n}} (-1)^{f(x)} \ket{x}.
    \label{eq:binary_phase_state}
\end{equation}
%
We use the subscript $f$ since $\ket{\psi_f}$ is characterized by $f$.

\paragraph{Generalized Phase State} We will also consider degree-$d$  generalized phase states of the form
\begin{equation}
    \ket{\psi_f} = \frac{1}{\sqrt{2^n}}\sum_{x\in \Fset{n}}\omega_q^{f(x)}\ket{x},
    \label{eq:generalized_phase_state}
\end{equation}
where $\omega_q=e^{2\pi i/q}$ and $f:\FF_2^n\rightarrow \ZZ_{q}$, with $q=2^d$, is a degree-$d$ polynomial. We consider $\ZZ_q=\{0,1,\ldots,q-1\}$ to be the ring of integers modulo $q$. 

\subsection{Useful Lemmas}

Let $e_i\in\FF_2^n$ denote the vector of all zeros except for a 1 in the $i^{\text{th}}$ coordinate.
\begin{fact}\label{lemma:existence_polynomial}
 Let $d\in [n]$, $s \leq |\calM(n,d)| = \sum_{k=1}^d {n \choose k}$, and $f \in \mathcal{P}(n,d,s)$. There exists $g_i \in \mathcal{P}(n,d-1,s)$ such that
 $   g_i(x)= f(x + e_i) + f(x) \pmod 2$  for all $x \in \{0,1\}^{n}$.
\end{fact}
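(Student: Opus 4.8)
The plan is to prove the statement directly by computing the finite difference $f(x+e_i)+f(x)$ monomial-by-monomial using the $\FF_2$ representation of $f$ from Eq.~\eqref{eq:anf_boolean_function}, and then verifying that the result has degree at most $d-1$ and sparsity at most $s$. First I would write $f(x)=\sum_{J\subseteq[n],\,|J|\le d}\alpha_J \prod_{j\in J}x_j \pmod 2$ with at most $s$ nonzero coefficients $\alpha_J$, and define $g_i(x):=f(x+e_i)+f(x)\pmod 2$. The key observation is that adding $e_i$ only affects the $i$th coordinate, flipping $x_i\mapsto x_i+1$, so the difference operator $D_i$ annihilates every monomial not containing the variable $x_i$.

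The central computation is to evaluate $D_i$ on a single monomial $\prod_{j\in J}x_j$. If $i\notin J$, then the monomial does not depend on $x_i$, so $\prod_{j\in J}(x+e_i)_j = \prod_{j\in J}x_j$ and the contribution to $g_i$ is zero. If $i\in J$, write the monomial as $x_i\prod_{j\in J\setminus\{i\}}x_j$; substituting $x_i\mapsto x_i+1$ and adding gives
\[
(x_i+1)\prod_{j\in J\setminus\{i\}}x_j + x_i\prod_{j\in J\setminus\{i\}}x_j = \prod_{j\in J\setminus\{i\}}x_j \pmod 2,
\]
since the $x_i$ terms cancel mod $2$. Thus each monomial indexed by $J$ with $i\in J$ contributes exactly the monomial indexed by $J\setminus\{i\}$, which has degree $|J|-1\le d-1$. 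Summing over all $J$ with $\alpha_J\neq 0$ and $i\in J$, I obtain $g_i(x)=\sum_{J:\,i\in J}\alpha_J\prod_{j\in J\setminus\{i\}}x_j \pmod 2$, an explicit polynomial of degree at most $d-1$.

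It remains to bound the sparsity. The monomials appearing in $g_i$ are exactly those $J\setminus\{i\}$ for which $i\in J$ and $\alpha_J\neq 0$; the map $J\mapsto J\setminus\{i\}$ on the set $\{J: i\in J\}$ is injective, so the number of nonzero terms in $g_i$ is at most the number of monomials $J$ of $f$ containing $i$, which is itself at most the total number $s$ of monomials of $f$. Hence $g_i\in\mathcal{P}(n,d-1,s)$, as required. (There is a mild subtlety worth noting: distinct monomials $J\setminus\{i\}$ could in principle coincide only if the original $J$'s coincided, which they do not, so no additional cancellation reduces—nor can it increase—the count beyond $s$.)

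I do not anticipate a genuine obstacle here, as the argument is an elementary computation with the discrete derivative over $\FF_2$; the only point requiring minor care is the sparsity bookkeeping, namely confirming that the correspondence $J\mapsto J\setminus\{i\}$ is injective so that no two surviving monomials collapse and the term count is controlled by $s$. Everything else follows from the cancellation of the $x_i$-dependent parts mod $2$ and the degree dropping by exactly one on each surviving monomial.
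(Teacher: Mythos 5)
Your proof is correct and follows essentially the same route as the paper's: the paper writes $f(x)=x_i\,p_1 + p_2$ (with $p_1$ collecting the monomials containing $x_i$, stripped of $x_i$) and observes that the difference equals $p_1$, which is exactly your monomial-by-monomial cancellation yielding $g_i=\sum_{J\ni i}\alpha_J\prod_{j\in J\setminus\{i\}}x_j$. Your version is in fact slightly more complete, since you explicitly verify the sparsity bound via injectivity of $J\mapsto J\setminus\{i\}$, a point the paper's proof leaves implicit.
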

The proof of this fact is straightforward. Without loss of generality, consider $i=1$. For every $f(x)=\sum_S \alpha_S \prod_{i\in S} x_i$, we can express it~as
$$
f(x)=x_1 p_1(x_2,\ldots,x_n)+p_2(x_2,\ldots,x_n),
$$
where $p_1$ has degree $\leq d-1$ and $p_2$ has degree $\leq d$. Observe that $f(x+e_1)-f(x)$ is either $p_1(x_2,\ldots,x_n)$ or $-p_1(x_2,\ldots,x_n)$ which has degree $d- 1$ and corresponds to the polynomial $g_1$ in the fact statements. This applies for every coordinate $i$. 

Note that the polynomial $g_i$ above is also often called the \textit{directional} derivative of $f$ in direction $w$ and is denoted as $D_i f$. 

\begin{fact}
\label{fact:binomial}
Let $N,s\geq 1$ such that $\gamma=s/N\leq 1/2$. Then we have
$$    
\sum \limits_{\ell=1}^{s} \binom{N}{\ell} \leq 2^{H_b(\gamma)N} \leq 2^{2\gamma \log (1/\gamma)}. 
$$
where we used above that $H_b(\gamma)={\gamma \log \frac{1}{\gamma} + (1-\gamma) \log \frac{1}{1-\gamma}}\leq 2 \gamma \log \frac{1}{\gamma}$ (for $\gamma\leq 1/2$).
\end{fact}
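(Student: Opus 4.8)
The statement to prove is Fact~\ref{fact:binomial}, namely the binomial sum bound
$$
\sum_{\ell=1}^{s} \binom{N}{\ell} \leq 2^{H_b(\gamma)N} \leq 2^{2\gamma \log(1/\gamma)},
$$
where $\gamma = s/N \leq 1/2$ and $H_b$ is the binary entropy function. This is a standard entropy bound on partial binomial sums, so the plan is to reproduce the classical argument rather than invent anything new.

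The plan is to prove the first inequality via the standard probabilistic/generating-function trick. First I would observe that it suffices to bound $\sum_{\ell=0}^{s}\binom{N}{\ell}$ (including the $\ell=0$ term only strengthens the claim). The key step is to exploit the fact that $\gamma \leq 1/2$, which means the ratio $\gamma/(1-\gamma) \leq 1$. Concretely, for each $\ell \leq s$ one has $\left(\frac{\gamma}{1-\gamma}\right)^{\ell} \geq \left(\frac{\gamma}{1-\gamma}\right)^{s}$ since the base is at most $1$ and we are raising to a smaller power. Therefore I would write
$$
\left(\frac{\gamma}{1-\gamma}\right)^{s} \sum_{\ell=0}^{s}\binom{N}{\ell}
\leq \sum_{\ell=0}^{s}\binom{N}{\ell}\left(\frac{\gamma}{1-\gamma}\right)^{\ell}
\leq \sum_{\ell=0}^{N}\binom{N}{\ell}\left(\frac{\gamma}{1-\gamma}\right)^{\ell}
= \left(1+\frac{\gamma}{1-\gamma}\right)^{N}
= (1-\gamma)^{-N},
$$
using the binomial theorem in the last equality. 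Rearranging gives $\sum_{\ell=0}^{s}\binom{N}{\ell} \leq (1-\gamma)^{-N}\left(\frac{\gamma}{1-\gamma}\right)^{-s} = \gamma^{-\gamma N}(1-\gamma)^{-(1-\gamma)N}$, where I substitute $s = \gamma N$. Taking logarithms, the right-hand exponent is exactly $\gamma \log(1/\gamma) + (1-\gamma)\log(1/(1-\gamma)) = H_b(\gamma)$ times $N$, which yields the first inequality $\sum_{\ell=1}^{s}\binom{N}{\ell} \leq 2^{H_b(\gamma)N}$.

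For the second inequality, I would simply invoke the elementary bound $H_b(\gamma) \leq 2\gamma \log(1/\gamma)$ for $\gamma \leq 1/2$, which is already asserted in the fact statement itself. This follows because $(1-\gamma)\log\frac{1}{1-\gamma} \leq \gamma\log\frac1\gamma$ in the relevant range (the function $(1-\gamma)\log\frac{1}{1-\gamma}$ is bounded by $\gamma\log(1/\gamma)$ since $-(1-\gamma)\log(1-\gamma) \leq \gamma$ and $\log(1/\gamma)\geq 1$ would need care, but the cleaner route is the concavity/monotonicity estimate $-(1-\gamma)\ln(1-\gamma) \le \gamma \le \gamma \ln(1/\gamma)/\ln 2$). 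I note, however, that as written the final bound in the statement appears to be missing a factor of $N$ in the exponent—it reads $2^{2\gamma\log(1/\gamma)}$ rather than $2^{2\gamma\log(1/\gamma)\,N}$—so I would flag this as a likely typo and prove the dimensionally consistent version $2^{H_b(\gamma)N}\leq 2^{2\gamma\log(1/\gamma)N}$.

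I do not anticipate a genuine obstacle here: the entire argument is a textbook manipulation. The only mild subtlety worth care is the direction of the inequality $\left(\frac{\gamma}{1-\gamma}\right)^{\ell} \geq \left(\frac{\gamma}{1-\gamma}\right)^{s}$, which relies crucially on $\gamma \leq 1/2$ so that the base lies in $(0,1]$; if $\gamma > 1/2$ the inequality reverses and the bound fails, which is exactly why the hypothesis $\gamma \leq 1/2$ is imposed. The secondary point requiring attention is verifying $H_b(\gamma) \leq 2\gamma\log(1/\gamma)$ cleanly, but since this is stated as given in the fact, I would treat it as a short auxiliary estimate rather than the crux.
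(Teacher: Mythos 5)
The paper states Fact~\ref{fact:binomial} without proof, so there is no internal argument to compare against; your derivation of the first inequality is the standard generating-function argument and it is correct: multiplying the partial sum by $(\gamma/(1-\gamma))^{s}$, using that the base $\gamma/(1-\gamma)\le 1$ (this is where $\gamma\le 1/2$ enters), completing to the full binomial sum, and substituting $s=\gamma N$ gives exactly $2^{H_b(\gamma)N}$. Your diagnosis of the typo is also right: the last exponent should be $2\gamma\log(1/\gamma)\,N=2s\log(N/s)$, which is precisely the form in which the Fact is invoked in the proof of Theorem~\ref{thm:sample_complexity_sparse_degree_d_polynomial_qs}.

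The one genuine flaw is your sketch of the auxiliary estimate $H_b(\gamma)\le 2\gamma\log(1/\gamma)$, equivalently $(1-\gamma)\log\frac{1}{1-\gamma}\le\gamma\log\frac{1}{\gamma}$. The chain you call the ``cleaner route,'' namely $-(1-\gamma)\ln(1-\gamma)\le\gamma\le\gamma\ln(1/\gamma)/\ln 2$, consists of two true inequalities, but it only yields $-(1-\gamma)\ln(1-\gamma)\le\gamma\ln(1/\gamma)/\ln 2$, whereas the target written consistently in nats is $-(1-\gamma)\ln(1-\gamma)\le\gamma\ln(1/\gamma)$; since $1/\ln 2>1$ you have proved a strictly weaker statement, which translates into $H_b(\gamma)\le\bigl(1+\tfrac{1}{\ln 2}\bigr)\gamma\log(1/\gamma)\approx 2.44\,\gamma\log(1/\gamma)$, not the constant $2$. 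The unit-consistent variant, $-(1-\gamma)\ln(1-\gamma)\le\gamma$ followed by $\gamma\le\gamma\ln(1/\gamma)$, requires $\ln(1/\gamma)\ge 1$, i.e.\ $\gamma\le 1/e$, and fails on $(1/e,1/2]$. No linearization of this type can succeed, because the target inequality is an \emph{equality} at $\gamma=1/2$ (both sides equal $1/2$), so any slack introduced by replacing $-(1-\gamma)\ln(1-\gamma)$ with $\gamma$ is fatal near that endpoint. A correct short proof: set $\phi(\gamma)=-\gamma\ln\gamma+(1-\gamma)\ln(1-\gamma)$; then $\phi''(\gamma)=\frac{2\gamma-1}{\gamma(1-\gamma)}\le 0$ on $(0,1/2]$, so $\phi$ is concave there, and since $\phi(0^{+})=\phi(1/2)=0$, concavity forces $\phi\ge 0$ on the whole interval, which is exactly the needed inequality. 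With that substitution your proof is complete.
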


\begin{lemma}[The Schwartz-Zippel Lemma]\label{lem:schwartz_zippel}
Let $p(y_1, \ldots, y_n)$ be a nonzero polynomial on $n$ variables with
degree $d$. Let $S$ be a finite subset of $\mathbb{R}$, with at least $d$ elements in it. If we assign $y_1,\ldots, y_n$ values from $S$
independently and uniformly at random, then
\begin{equation}
\Pr[p(y_1, \ldots , y_n) = 0] \leq \frac{d}{|S|}.
\label{eq:schwartz_zippel_lemma_inequality}
\end{equation}
\end{lemma}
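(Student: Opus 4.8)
The plan is to prove this by induction on the number of variables $n$, which is the classical route. For the base case $n=1$, the polynomial $p(y_1)$ is a nonzero univariate polynomial of degree at most $d$ over $\RR$; since $\RR$ is a field, such a polynomial has at most $d$ distinct roots, so the probability that a uniformly random $y_1\in S$ is a root is at most $d/|S|$, which is exactly the claimed bound.

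For the inductive step I would single out one variable, say $y_1$, and rewrite $p$ as a univariate polynomial in $y_1$ whose coefficients are themselves polynomials in the remaining variables:
\[
p(y_1, \ldots, y_n) = \sum_{i=0}^{k} y_1^{\,i}\, c_i(y_2, \ldots, y_n),
\]
where $k\le d$ is the largest power of $y_1$ appearing with a nonzero coefficient polynomial $c_k$. By construction $c_k$ is a \emph{nonzero} polynomial in $n-1$ variables of degree at most $d-k$, which is precisely what will let the induction hypothesis apply.

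The heart of the argument is then a case analysis via the law of total probability, partitioning on whether the leading coefficient $c_k(y_2,\ldots,y_n)$ vanishes. First, applying the induction hypothesis to $c_k$ gives $\Pr[c_k(y_2,\ldots,y_n)=0]\le (d-k)/|S|$. Second, conditioned on the event $c_k(y_2,\ldots,y_n)\ne 0$, the polynomial $p$ viewed as a function of $y_1$ alone (with $y_2,\ldots,y_n$ fixed) is a nonzero univariate polynomial of degree exactly $k$, so by the base-case reasoning the probability over the random $y_1$ that it vanishes is at most $k/|S|$. Combining these through a union bound yields
\[
\Pr[p=0] \le \Pr[c_k=0] + \Pr[p=0 \mid c_k\ne 0] \le \frac{d-k}{|S|} + \frac{k}{|S|} = \frac{d}{|S|},
\]
which closes the induction.

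The main subtlety to get right is the bookkeeping in the inductive step: one must take $k$ to be the top degree of $y_1$ so that $c_k$ is genuinely nonzero, and verify that $\deg(c_k)\le d-k$ so that the two contributions sum to exactly $d/|S|$ rather than something larger. The conditioning step also quietly uses the independence of the coordinates --- namely that $y_1$ stays uniform on $S$ and independent of $y_2,\ldots,y_n$ after conditioning on an event depending only on $y_2,\ldots,y_n$ --- which is guaranteed by the hypothesis that the $y_i$ are drawn independently. No step here is genuinely difficult; the only things to handle cleanly are avoiding double counting in the union bound and treating the degenerate case $k=0$, where $p$ does not depend on $y_1$ and the claim follows immediately from the induction hypothesis applied to $p$ itself.
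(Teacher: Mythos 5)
Your proof is correct, and it is the classical textbook argument for the Schwartz--Zippel lemma: induction on the number of variables, with the base case reduced to univariate root counting over a field, and the inductive step done by isolating the top power $k$ of $y_1$, applying the hypothesis to the (nonzero) leading coefficient $c_k$ of degree at most $d-k$, and combining the two events via total probability. The degree bookkeeping, the observation that conditioning on an event depending only on $y_2,\ldots,y_n$ leaves $y_1$ uniform and independent, and the degenerate case $k=0$ are all handled correctly; the only cosmetic quibble is that the final combination is the law of total probability (dropping a factor $\Pr[c_k\neq 0]\le 1$) rather than a union bound, though the two give the same inequality here. There is, however, no proof in the paper to compare yours against: the paper states this lemma as a known, imported result and never proves it, and in fact its probability bounds over Boolean inputs (e.g., in Theorems~\ref{thm:binaryphasedegreed} and~\ref{thm:sample_complexity_sparse_degree_d_polynomial_qs}) actually invoke the multilinear $\{0,1\}^n$ variant, Lemma~\ref{lem:schwartz}, whose $1-2^{-d}$ bound does not follow from the statement you proved (the $d/|S|$ bound is vacuous when $|S|=2$ and $d\ge 2$). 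So your argument is a valid, self-contained proof of the stated lemma, but be aware that it supplies a result the paper treats as a black box, not the combinatorial workhorse the paper's learning algorithms rely on.
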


\begin{lemma}[\cite{nisan1994degree}]
\label{lem:schwartz}
Let $p(x_1,\ldots,x_n)$ be a non-zero multilinear polynomial of degree $d$. Then 
$$
\Pr_{x\in \{0,1\}^n} [p(x)=0]\leq 1-2^{-d},
$$ 
where the probability is over a uniformly random distribution on $\{0,1\}^{n}$. 
\end{lemma}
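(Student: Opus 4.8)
The plan is to prove the equivalent lower bound $\Pr_{x\in\{0,1\}^n}[p(x)\neq 0]\geq 2^{-d}$, from which the stated inequality follows by taking complements, and to establish this by induction on the degree $d$. For the base case $d=0$, a nonzero multilinear polynomial of degree $0$ is a nonzero constant, so $\Pr[p(x)\neq 0]=1=2^{0}$ and the bound holds trivially.

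For the inductive step I would assume the bound for all nonzero multilinear polynomials of degree $d-1$ and take $p$ of degree exactly $d\geq 1$. First I would pick a variable $x_i$ occurring in some monomial of degree $d$ and use multilinearity to write $p(x)=x_i\,q(x_{-i})+r(x_{-i})$, where $x_{-i}$ denotes the variables other than $x_i$ and neither $q$ nor $r$ depends on $x_i$. The reason for choosing $x_i$ from a top-degree monomial is that the coefficient polynomial $q$ (the discrete derivative of $p$ in direction $i$, analogous to $D_i$ in Fact~\ref{lemma:existence_polynomial}) then contains a monomial of degree $d-1$; since $p$ is multilinear of degree $d$, no monomial of $q$ can exceed degree $d-1$, so $q$ is a nonzero multilinear polynomial of degree exactly $d-1$. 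Applying the induction hypothesis to $q$ gives $\Pr_{x_{-i}}[q(x_{-i})\neq 0]\geq 2^{-(d-1)}$.

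The key observation is then the following: condition on any assignment to $x_{-i}$ with $q(x_{-i})\neq 0$. Viewed as a function of the single bit $x_i$, we have $p(x_i=0)=r(x_{-i})$ and $p(x_i=1)=q(x_{-i})+r(x_{-i})$, and since these two values differ by the nonzero quantity $q(x_{-i})$ they cannot both vanish; hence $\Pr_{x_i}[p(x)\neq 0\mid x_{-i}]\geq 1/2$ whenever $q(x_{-i})\neq 0$. Combining this with the induction hypothesis yields
\[
\Pr_{x}[p(x)\neq 0]\;\geq\;\Pr_{x_{-i}}[q(x_{-i})\neq 0]\cdot\tfrac{1}{2}\;\geq\;2^{-(d-1)}\cdot\tfrac{1}{2}\;=\;2^{-d},
\]
which closes the induction. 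I do not expect a genuine obstacle here: the argument is elementary, and the only point requiring care is verifying that $q$ is nonzero of degree exactly $d-1$ so that the induction hypothesis applies, which is precisely why $x_i$ must be drawn from a degree-$d$ monomial. I would also note that the proof uses only that the values lie in an additive group, so that adding a nonzero element changes the value; consequently the same argument applies verbatim over $\RR$, over $\FF_2$, or over $\ZZ_q$, which is the regime relevant to the Schwartz-Zippel variant invoked later for generalized phase states.
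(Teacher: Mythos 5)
Your proof is correct. One thing to note: the paper does not actually prove Lemma~\ref{lem:schwartz}; it imports it from Nisan and Szegedy~\cite{nisan1994degree}. The closest proof the paper does contain is that of Lemma~\ref{lemma:characterization_f_gps}, the $\ZZ_q$-valued analogue used for generalized phase states, and your argument takes a genuinely different route from it. You induct on the degree $d$, splitting off a variable $x_i$ that lies \emph{inside} a top-degree monomial, writing $p = x_i q + r$, and applying the inductive hypothesis to the discrete derivative $q$; the factor $\tfrac12$ comes from conditioning on $q(x_{-i})\neq 0$. The paper instead inducts on the number of variables $n$ (base case $n=d$): it chooses $x_i$ \emph{outside} the support of a maximal-degree monomial, restricts $x_i=0$ and $x_i=1$, argues via Proposition~\ref{prop:constant} that both restrictions stay non-constant (the coefficient $\alpha_{S\cup\{i\}}$ must vanish by maximality of $|S|$), and averages the two conditional probabilities. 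Your route is the classical DeMillo--Lipton--Schwartz--Zippel induction; it is self-contained, gives the per-degree strengthening $\Pr[p\neq 0]\ge 2^{-\deg p}$, and, as you observe, works over any coefficient group, so it would simultaneously recover Lemma~\ref{lemma:characterization_f_gps} by applying it to $f-c$, which is a nonzero polynomial of degree at most $d$ whenever $f$ is non-constant. The paper's variable-restriction induction is instead tailored to the ``non-constant modulo a shift'' formulation natural in the $\ZZ_q$ setting, at the price of the auxiliary Proposition~\ref{prop:constant}. The one step in your write-up that genuinely needs care --- that $q$ is nonzero of degree exactly $d-1$, so the inductive hypothesis applies --- is exactly the point you handle by drawing $x_i$ from a degree-$d$ monomial.
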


We will also need the following structural theorem about Reed-Muller codes which comments on the weight distribution of Boolean functions $f: \FF_2^n \rightarrow \FF_2$.
\begin{theorem}[{\cite[Theorem~3]{abbe2020reed}}]
\label{thm:weight}
Let $n\geq 1$ and $d\leq n/2$. Define
$
|f|=\sum_{x\in \01^n}[f(x)=1]$ and $\textsf{wt}(f)=|f|/2^n$.  
Then, for every $\varepsilon\in (0,1/2)$ and $\ell\in \{1,\ldots,d-1\}$, we have that
$$|\{f\in P(n,d):\textsf{wt}(f)\leq (1-\varepsilon)2^{-\ell}\}|\leq (1/\varepsilon)^{C\ell^4\cdot  \binom{n-\ell}{\leq d-\ell}}.
$$
Fix $w = (1 - \varepsilon)2^{n-\ell}$ and we get
$$|\{f\in P(n,d):|f|\leq w \}|\leq (1 - w/2^{n-\ell})^{-C\ell^4\cdot  \binom{n-\ell}{\leq d-\ell}}.
$$
\end{theorem}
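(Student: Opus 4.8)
The bound is quoted from \cite{abbe2020reed}; below I outline the route I would take to establish it. The second displayed inequality is only a restatement of the first: with $w=(1-\varepsilon)2^{n-\ell}$ one has $\textsf{wt}(f)=|f|/2^n\le(1-\varepsilon)2^{-\ell}$ exactly when $|f|\le w$, and $1/\varepsilon=(1-w/2^{n-\ell})^{-1}$, so the two statements carry identical content. The real task is to bound the number of low-weight codewords of the Reed--Muller code $\mathcal{P}(n,d)$.

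The plan is \emph{structure-then-count}: first show that every $f\in\mathcal{P}(n,d)$ with $\textsf{wt}(f)\le(1-\varepsilon)2^{-\ell}$ admits a short description, then bound the number of such descriptions. The structural input generalizes the Kasami--Tokura classification of codewords near the minimum distance. Since the minimum relative weight of $\mathcal{P}(n,d)$ is $2^{-d}$, attained by products of $d$ independent affine forms, a codeword of relative weight below $2^{-\ell}$ should, after an invertible affine change of variables $x\mapsto Ax+b$ (which preserves both degree and weight), be essentially a function of only $\ell$ affine forms $L_1,\dots,L_\ell$ together with a free degree-$(d-\ell)$ polynomial $h$ on the remaining $n-\ell$ coordinates. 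The parameter $\varepsilon$ measures the slack: as $\varepsilon\to 0$ the threshold approaches $2^{-\ell}$ and strictly more codewords qualify, which is why the final bound grows like a power of $1/\varepsilon$.

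A natural engine for the structural step is the directional derivative of Fact~\ref{lemma:existence_polynomial}: for any direction $a$, $(D_a f)(x)=f(x+a)+f(x)$ has $\FF_2$-degree at most $d-1$ and support inside $\mathrm{supp}(f)\,\triangle\,(\mathrm{supp}(f)+a)$, so $\textsf{wt}(D_a f)\le 2\,\textsf{wt}(f)$. Low weight therefore forces many derivatives to vanish, and the translation-invariance directions $\{a:D_af=0\}$ localize $f$ to depend on only $\ell$ affine forms plus a free degree-$(d-\ell)$ polynomial, while driving a recursion in which $d$ and $\ell$ both decrease by one. Given such a normal form the count is direct: the choice of the $\ell$ affine forms contributes at most $2^{O(\ell n)}$, which is absorbed because $\binom{n-\ell}{\le d-\ell}\ge n-\ell$ for $\ell\le d-1$; the free polynomial $h$ ranges over $\mathcal{P}(n-\ell,d-\ell)$, contributing $2^{\binom{n-\ell}{\le d-\ell}}$; and the multiplicity of admissible correction terms contributes the factor $(1/\varepsilon)^{C\ell^4}$, giving the claimed $(1/\varepsilon)^{C\ell^4\binom{n-\ell}{\le d-\ell}}$.

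The main obstacle is the approximate structural theorem itself. The exact Kasami--Tokura classification only governs weights strictly below $2\cdot2^{-d}$, whereas here the weight may be as large as $(1-\varepsilon)2^{-1}$, so $f$ need not factor exactly and one must instead argue that it is \emph{close} to a factored form with a residual that is cheap to encode. Pushing this approximation through the recursion without leaking a spurious factor of $n$ into the exponent, and in particular recovering the tight $\binom{n-\ell}{\le d-\ell}$ rather than $\binom{n}{\le d-\ell}$ along with the polynomial-in-$\ell$ loss $\ell^4$, is the delicate part and is the technical heart of \cite{abbe2020reed}. Once it is in place, both the counting above and the reformulation into the $|f|\le w$ form are routine.
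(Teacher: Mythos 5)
The paper does not prove this theorem either: it imports it verbatim from \cite{abbe2020reed}, and the only content the paper adds is the substitution $w=(1-\varepsilon)2^{n-\ell}$ (so that $|f|\le w$ iff $\textsf{wt}(f)\le(1-\varepsilon)2^{-\ell}$ and $1/\varepsilon=(1-w/2^{n-\ell})^{-1}$), which you verify correctly. Your structure-then-count sketch is a reasonable account of how the cited proof proceeds, but it is not needed here and, as you yourself note, its structural core is left unproved --- exactly as in the paper, which treats the result as a black box.
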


\begin{lemma}[Fano's inequality]
\label{lem:fano}
Let $\mathsf{A}$ and $\mathsf{B}$ be classical random variables taking values in $\X$ (with $|\X|=r$) and let $q = \Pr[\mathsf{A} \neq \mathsf{B}]$. Then,
$$
    H(\mathsf{A}|\mathsf{B}) \leq  H_b(q) + q \log(r-1),
$$
where $H(\mathsf{A}|\mathsf{B})$ is the conditional entropy and $H_b(q)$ is the standard binary entropy.
\end{lemma}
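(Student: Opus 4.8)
The plan is to prove this via the standard argument that introduces an error-indicator variable and applies the chain rule for conditional entropy in two different orders. First I would define a binary random variable $\mathsf{E}$ that equals $1$ when $\mathsf{A}\neq \mathsf{B}$ and $0$ otherwise, so that $\Pr[\mathsf{E}=1]=q$ and $\Pr[\mathsf{E}=0]=1-q$. The key structural observation is that $\mathsf{E}$ is a deterministic function of the pair $(\mathsf{A},\mathsf{B})$, which immediately forces $H(\mathsf{E}\mid \mathsf{A},\mathsf{B})=0$.

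Next I would expand the joint conditional entropy $H(\mathsf{E},\mathsf{A}\mid \mathsf{B})$ using the chain rule in two ways. Expanding as $H(\mathsf{A}\mid \mathsf{B})+H(\mathsf{E}\mid \mathsf{A},\mathsf{B})$ and invoking the observation above yields $H(\mathsf{E},\mathsf{A}\mid \mathsf{B})=H(\mathsf{A}\mid \mathsf{B})$. Expanding in the opposite order yields $H(\mathsf{E},\mathsf{A}\mid \mathsf{B})=H(\mathsf{E}\mid \mathsf{B})+H(\mathsf{A}\mid \mathsf{E},\mathsf{B})$. Since conditioning cannot increase entropy, $H(\mathsf{E}\mid \mathsf{B})\leq H(\mathsf{E})=H_b(q)$, which produces the first term of the claimed bound.

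It then remains to bound $H(\mathsf{A}\mid \mathsf{E},\mathsf{B})$, which I would split according to the value of $\mathsf{E}$:
$$
H(\mathsf{A}\mid \mathsf{E},\mathsf{B}) = (1-q)\,H(\mathsf{A}\mid \mathsf{E}=0,\mathsf{B}) + q\,H(\mathsf{A}\mid \mathsf{E}=1,\mathsf{B}).
$$
On the event $\mathsf{E}=0$ we have $\mathsf{A}=\mathsf{B}$, so $\mathsf{A}$ is determined by $\mathsf{B}$ and the first term vanishes. On the event $\mathsf{E}=1$ we have $\mathsf{A}\neq \mathsf{B}$, so given $\mathsf{B}$ the variable $\mathsf{A}$ ranges over the $r-1$ symbols of $\X\setminus\{\mathsf{B}\}$, whence $H(\mathsf{A}\mid \mathsf{E}=1,\mathsf{B})\leq \log(r-1)$. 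Chaining the two expansions together then gives $H(\mathsf{A}\mid \mathsf{B})=H(\mathsf{E}\mid \mathsf{B})+H(\mathsf{A}\mid \mathsf{E},\mathsf{B})\leq H_b(q)+q\log(r-1)$, as claimed.

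I do not anticipate a genuine obstacle, as this is a textbook inequality; the one step requiring care is the bound $H(\mathsf{A}\mid \mathsf{E}=1,\mathsf{B})\leq \log(r-1)$, which crucially uses that an error confines $\mathsf{A}$ to the $r-1$ symbols distinct from $\mathsf{B}$ rather than to all $r$ symbols of $\X$. This is precisely the source of the $\log(r-1)$ (rather than $\log r$) appearing in the statement.
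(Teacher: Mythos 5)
Your proof is correct: the error-indicator variable, the two chain-rule expansions, the bound $H(\mathsf{E}\mid\mathsf{B})\le H(\mathsf{E})=H_b(q)$, and the confinement of $\mathsf{A}$ to the $r-1$ symbols of $\X\setminus\{\mathsf{B}\}$ on the error event are all exactly right, and together they yield the stated inequality. Note that the paper itself offers no proof of this lemma --- it is cited as a standard fact from information theory --- so there is nothing to compare against; what you have written is the canonical textbook argument, and it is complete.
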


\subsection{Measurements}
Throughout this paper we will be concerned with learning algorithms that use either separable or entangled measurements. Given $\ket{\psi_f}^{\otimes k}$, a learning algorithm for $f$ is said to use separable measurements if it only measure each copy of $\ket{\psi_f}$ separately in order to learn $f$. Similarly, a learning algorithm for $f$ is said to use entangled measurements if it makes an entangled measurement on the $k$-fold tensor product $\ket{\psi_f}^{\otimes k}$. In this direction, we will often use two techniques which we discuss in more detail below: sampling random partial derivatives in order to learn from separable measurements and Pretty Good Measurements in order to learn from entangled measurements. 
\subsubsection{Separable Measurements}
Below we discuss a subroutine that we will use often to learn properties about $f:\FF_2^n\rightarrow \FF_2$: given a single copy of $\ket{\psi_f}=\frac{1}{\sqrt{2^n}}\sum_{x \in \Fset{n}} (-1)^{f(x)}\ket{x}$, the subroutine produces a uniformly random $y\in \FF_2^{n-1}$ and $f(1y) + f(0y) \pmod{2}$. To this end, suppose we measure qubits $2,3,\ldots,n$ of  $\ket{\psi_f}$ in the usual $Z$ basis. We denote the resulting string as $y \in \{0,1\}^{n-1}$. The post-measurement state of qubit $1$ is then given by 
\begin{equation}
    \ket{\psi_{f,y}} = \frac{1}{\sqrt{2}} \left[ (-1)^{f(0y)}\ket{0} + (-1)^{f(1y)} \ket{1}\right].
    \label{eq:post_measurement_state}
\end{equation}
We note that $\ket{\psi_{f,y}}$ is then an $X$-basis state ($\ket{+}$ or $\ket{-}$) depending on the values of $f(1y)$ and $f(0y)$. If $f(1y) = f(0y)$, then $\ket{\psi_{f,y}}=\ket{+}$ and if $f(1y) = f(0y) + 1 \pmod 2$, then $\ket{\psi_{f,y}}=\ket{-}$. Measuring qubit $1$ in the $X$-basis and qubits $2,3,\ldots,n$ in the $Z$-basis thus produces examples of the form $(y,b)$ where $y \in \{0,1\}^{n-1}$ is uniformly random and $b = f(0y) + f(1y) \pmod{2}$. Considering Fact~\ref{lemma:existence_polynomial} with the basis of $e_1$, we note that theses examples are of the form $(y,D_{1}f(y))$, where $D_1 f(y)=f(1y) + f(0y) \pmod{2}$ is the partial derivative of $f$ along direction $e_1$. Changing the measurement basis chosen above to $ZZ\cdots X_k\cdots Z$ such that we measure all the qubits in the $Z$ basis except for the $k$th qubit which is measured in the $X$ basis, will allow us to obtain random samples of the form $(y,D_k f(y))$. Accordingly, we introduce a new subroutine.
\begin{dfn}[Random Partial Derivative Sampling (RPDS) along $e_k$] \label{def:rpds} For every $k\in[n]$, measuring every qubit of $\ket{\psi_f}$ in the $Z$ basis, except the $k$th qubit which is measured in the $X$ basis, we obtain a uniformly random $y\in \FF_2^{n-1}$ and $(D_k f)(y)$.
\end{dfn}

\subsubsection{Entangled Measurements} In general one could also consider a joint measurement applied to multiple copies of $\ket{\psi_f}$, which we refer to as entangled measurements. In this work, we will generally consider two types of entangled measurements, Bell sampling and the pretty-good measurement (which we discuss in more detail in the next section). Bell sampling is a procedure that involves measuring a quantum state (or in this case, two copies of a quantum state) in the Bell basis.\footnote{The Bell basis is the basis given by $\Big\{\frac{1}{\sqrt{2}}(\ket{00}+\ket{11}),\frac{1}{\sqrt{2}}(\ket{01}+\ket{10}),\frac{1}{\sqrt{2}}(\ket{01}-\ket{10}),\frac{1}{\sqrt{2}}(\ket{00}-\ket{11})\Big\}$.}  We will use the following version of Bell sampling that applies to the scenario where we are given noisy copies of degree-$2$ phase states (for global depolarizing noise), and covers the core idea of standard Bell sampling.  

\begin{lemma}[Bell sampling]
\label{lem:bellsampling}
Let $f:\FF_2^n\rightarrow \FF_2$ be a degree-$2$ polynomial, i.e., $f(x)=x^\top Ax$ (for an upper triangular $A\in \FF_2^{n\times n}$). Using two copies of $\rho=(1-\varepsilon)\ketbra{\psi_f}{\psi_f}+\varepsilon \id$, there exists a procedure that  outputs a uniformly random $z\in \01^n$ and $(A+A^\top)\cdot z\in \01^n$. Additionally, the same procedure, when given two copies of~$\psi_n$, outputs a uniformly random $z\in \01^n$ and $w_z \in \01^n$ that satisfies:
\[ w_z=\begin{cases} 
      (A+A^\top)\cdot z & \text{w.p. }  (1-\varepsilon)^2\\
      \text{uniformly random } v & \text{w.p. } 1-(1-\varepsilon)^2.
   \end{cases}
\]
\end{lemma}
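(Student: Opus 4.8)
The plan is to write down the Bell-sampling circuit explicitly, analyze it first on noiseless copies, and then use that the distribution over the pair of outputs is a \emph{linear} functional of the input state in order to push the analysis through the global depolarizing channel. The procedure I would use is: given the two registers, apply a transversal $\mathsf{CNOT}$ (qubit $i$ of the first register controlling qubit $i$ of the second), measure the second register in the computational basis to get a string $z\in\01^n$, then apply $H^{\otimes n}$ to the first register and measure it to get $w_z\in\01^n$.

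For two noiseless copies $\ket{\psi_f}\otimes\ket{\psi_f}$, the transversal $\mathsf{CNOT}$ produces $2^{-n}\sum_{x,z}(-1)^{f(x)+f(x\oplus z)}\ket{x}\ket{z}$, so the second-register measurement returns a uniformly random $z$ and collapses the first register (after renormalizing) to $2^{-n/2}\sum_x(-1)^{f(x)+f(x\oplus z)}\ket{x}$. The one genuine computation is the quadratic identity over $\FF_2$, namely $f(x)+f(x\oplus z)=x^\top(A+A^\top)z + z^\top Az \pmod 2$, which is affine-linear in $x$ with the $x$-independent constant $z^\top Az$. Thus the collapsed state equals $(-1)^{z^\top Az}2^{-n/2}\sum_x(-1)^{\langle (A+A^\top)z,\,x\rangle}\ket{x}$, and $H^{\otimes n}$ sends it to $\pm\ket{(A+A^\top)z}$, so the final measurement deterministically yields $w_z=(A+A^\top)z$. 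This settles the noiseless statement.

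For the noisy state I would take $\rho=(1-\varepsilon)\ketbra{\psi_f}{\psi_f}+\varepsilon\,\id/2^n$ (reading $\varepsilon\,\id$ as $\varepsilon$ times the maximally mixed state, so that $\rho$ is a genuine density operator) and expand $\rho\otimes\rho$ bilinearly into four tensor terms. Because the output distribution of $(z,w_z)$ is linear in the input operator, the $(1-\varepsilon)^2\,\ketbra{\psi_f}{\psi_f}^{\otimes 2}$ term contributes exactly the noiseless outcome $w_z=(A+A^\top)z$. In each of the three remaining terms at least one register is maximally mixed; writing $\id/2^n$ as a uniform mixture of computational basis states, I would check the two subcases directly: if the first register is $\ket{a}$ then after $\mathsf{CNOT}$ the second-register outcome $z$ is uniform and the first register remains a computational basis state, so $H^{\otimes n}$ followed by measurement gives a $w$ that is uniform and independent of $z$; if instead the second register is $\ket{b}$ (first clean), the second-register outcome is again uniform and the first register collapses to a single computational basis state, so $w$ is once more uniform and independent of $z$ (the all-mixed term being trivial). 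In every corrupted term the marginal of $z$ is uniform, so conditioned on any fixed $z$ they jointly contribute a uniformly random $w$. Assembling the four contributions gives precisely the claimed mixture: $w_z=(A+A^\top)z$ with probability $(1-\varepsilon)^2$ and a uniformly random string otherwise.

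The routine parts are the $\FF_2$ identity for $f(x)+f(x\oplus z)$ and the Hadamard step, both of which are short. The step I expect to require the most care is the noisy bookkeeping: one must confirm that replacing \emph{either} copy by the maximally mixed state fully randomizes $w$ (uniform and independent of $z$) while keeping the marginal of $z$ uniform, so that the four tensor terms reassemble into a clean $z$-independent conditional mixture rather than some correlated distribution; the only other subtlety is the harmless normalization of $\id$.
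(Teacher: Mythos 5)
Your proposal is correct and follows essentially the same route as the paper's proof: the same transversal $\mathsf{CNOT}$ + computational-basis measurement + $H^{\otimes n}$ circuit, the same $\FF_2$ identity $f(x)+f(x\oplus z)=x^\top(A+A^\top)z+z^\top Az$, and the same bilinear expansion of $\rho\otimes\rho$ for the noisy case. The only difference is that you spell out the corrupted tensor terms (including the check that $z$ remains uniform in each, so the conditional mixture is $z$-independent), which the paper dismisses with ``it is not hard to see''; your version is simply a more complete write-up of the same argument.
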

\begin{proof}
We first consider the case where $\varepsilon=0$. Then, the following procedure produces $(z,(A+A^\top)\cdot z)$ for a uniformly random $z$. Take {two copies} of $\ket{\psi}$ 
\begin{align*}
\ket{\psi_f}\otimes \ket{\psi_f}&=\sum_{x,y\in \FF_2^n}(-1)^{x^\top Ax+y^\top Ay}\ket{x,y}\\ 
&\stackrel{\textsf{CNOT}}{\longrightarrow} \sum_{x,y}(-1)^{x^\top Ax+y^\top Ay}\ket{x,x+y} \\
&= \sum_{x,z}(-1)^{x^\top Ax+(x+z)^\top A(x+z)}\ket{x,z}=\sum_{x,z}(-1)^{x^\top(A+A^\top)z+z^\top Az}\ket{x,z}
\end{align*}
 {Measure} the second register and suppose we  {obtain $\tilde{z}$},     resulting state is
$$
(-1)^{\tilde{z}^\top A\tilde{z}}\Big(\sum_{x}(-1)^{x^\top(A+A^\top)\tilde{z}}\ket{x}\Big)\ket{\tilde{z}} \stackrel{\textsf{H}^n}{\longrightarrow}\ket{(A+A^\top)\cdot \tilde{z}}\ket{\tilde{z}},
$$   
where $\textsf{H}^n$ is the $n$-qubit Hadamard transform. Let us now consider the case where $\varepsilon >0$. Given $\rho^{\otimes 2}$, with probability $(1-\varepsilon)^2$, we obtain $(A+A^\top)\cdot \tilde{z}$ and it is not hard to see that on input $\id^{\otimes 2}$ or $\ketbra{\psi_f}{\psi_f}\otimes \id$, the output of the procedure above produces a uniformly random bit string $v\in \FF_2^n$.
\end{proof}

\paragraph{Pretty Good Measurements.}
Consider an ensemble of quantum states, $\E=\{(p_i,\ket{\psi_i})\}_{i\in[m]}$, where $p=\{p_1,\ldots,p_m\}$ is a probability distribution. In the quantum state identification problem, a learning algorithm is given an unknown quantum state $\ket{\psi_i}\in \E$ sampled according to the distribution $p$ and the learning algorithm needs to identity $i$ with probability $\geq 2/3$. In this direction, we are interested in maximizing the average probability of success to identify $i$. For a POVM specified by positive semidefinite  matrices $\M=\{M_i\}_{i\in[m]}$, the probability of obtaining outcome $j$ equals $\braketbra{\psi_i}{M_j}{\psi_i}$ and the average success probability is given by
$$
P_{\M}(\E) = \sum_{i=1}^m p_i\braketbra{\psi_i}{M_i}{\psi_i}.
$$
Let $P^{opt}(\E)=\max_{\M} P_{\M}(\E)$ denote the optimal average success probability of $\E$, where the maximization is over the set of valid $m$-outcome POVMs. For every ensemble $\E$, the so-called \emph{Pretty Good Measurement} (PGM) is a specific POVM (depending on the ensemble $\E$) that does \emph{reasonably} well against $\E$. In particular, it is well-known that 
$$
P^{opt}(\E)^2 \leq P^{PGM}(\E)\leq  P^{opt}(\E).
$$
We now define the POVM elements of the pretty-good measurement. Let $\ket{\psi'_i}=\sqrt{p_i}\ket{\psi_i}$, and $\E'=\{\ket{\psi'_i} : i\in [m]\}$ be the set of states in~$\E$, renormalized to reflect their probabilities. Define $\rho=\sum_{i\in [m]} \ketbra{\psi'_i}{\psi'_i}$. The PGM is defined as the set of measurement operators $\{\ketbra{\nu_i}{\nu_i}\}_{i\in [m]}$ where $\ket{\nu_i}=\rho^{-1/2}\ket{\psi'_i}$ (the inverse square root of $\rho$ is taken over its non-zero eigenvalues). We will use the properties of these POVM elements later on and will also need the following theorems about~PGMs.
\begin{theorem}[\cite{harrow2012many}]
\label{thm:pgm}
Let $\Sh=\{\rho_1,\ldots,\rho_m\}$. Suppose $\rho \in \Sh$ is an unknown quantum state picked from $\Sh$. Let $\max_{i\neq j}\|\sqrt{\rho_i}\sqrt{\rho_j}\|_1\leq F$. Then, given 
$$
M=O\big((\log (m/\delta)) /\log(1/F)\big)
$$
copies of $\rho$, the Pretty good measurement identifies $\rho$ with probability at least $1-\delta$.
\end{theorem}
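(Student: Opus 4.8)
The plan is to apply the PGM directly to the $M$-copy ensemble $\{(p_i,\rho_i^{\otimes M})\}_{i\in[m]}$ and to show its average error probability decays geometrically in $M$. The argument has three ingredients: (i) a single-copy upper bound on the PGM error probability in terms of the pairwise fidelities $\|\sqrt{\rho_i}\sqrt{\rho_j}\|_1$; (ii) multiplicativity of this fidelity under tensor powers, which replaces the bound $F$ by $F^M$; and (iii) a Cauchy--Schwarz estimate that collapses the sum over pairs. Chaining these gives $P^{PGM}_{\mathrm{err}}\le m F^M$, and solving $mF^M\le\delta$ produces exactly $M=O(\log(m/\delta)/\log(1/F))$.

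For ingredient (i) I would use the Gram-matrix description of the square-root measurement already set up above. Writing (for pure $\rho_i=\ketbra{\psi_i}{\psi_i}$, the case relevant to our application) $\ket{\psi_i'}=\sqrt{p_i}\ket{\psi_i}$, $\ket{\nu_j}=\rho^{-1/2}\ket{\psi_j'}$, and $G_{ij}=\braketIP{\psi_i'}{\psi_j'}$, one checks the standard identity $\braket{\psi_i'}{\rho^{-1/2}}{\psi_j'}=(\sqrt G)_{ij}$, so that the transition probabilities are $P(j\mid i)=(\sqrt G)_{ij}^2/p_i$ and hence
\[
P^{PGM}_{\mathrm{err}}=\sum_{i\ne j}\big|(\sqrt G)_{ij}\big|^2 .
\]
The crux is then the matrix inequality $\sum_{i\ne j}|(\sqrt G)_{ij}|^2\le\sum_{i\ne j}|G_{ij}|$, valid for every positive semidefinite $G$; since $|G_{ij}|=\sqrt{p_ip_j}\,|\braketIP{\psi_i}{\psi_j}|=\sqrt{p_ip_j}\,\|\sqrt{\rho_i}\sqrt{\rho_j}\|_1$, this is precisely $P^{PGM}_{\mathrm{err}}\le\sum_{i\ne j}\sqrt{p_ip_j}\,\|\sqrt{\rho_i}\sqrt{\rho_j}\|_1$. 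For general mixed $\rho_i$ the same bound is the Barnum--Knill estimate for the square-root measurement and can be imported directly.

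Ingredient (ii) is the observation that $\sqrt{\rho^{\otimes M}}=(\sqrt\rho)^{\otimes M}$ together with multiplicativity of the trace norm under tensor products, giving $\|\sqrt{\rho_i^{\otimes M}}\sqrt{\rho_j^{\otimes M}}\|_1=\|(\sqrt{\rho_i}\sqrt{\rho_j})^{\otimes M}\|_1=\|\sqrt{\rho_i}\sqrt{\rho_j}\|_1^M\le F^M$. Applying the bound from (i) to the ensemble $\{(p_i,\rho_i^{\otimes M})\}$ and using $\sum_{i\ne j}\sqrt{p_ip_j}\le(\sum_i\sqrt{p_i})^2\le m$ by Cauchy--Schwarz, I get $P^{PGM}_{\mathrm{err}}\le mF^M$ for any prior $\{p_i\}$; the condition $mF^M\le\delta$ is equivalent to $M\log(1/F)\ge\log(m/\delta)$, which is the claimed sample complexity.

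The main obstacle is ingredient (i): passing from the exact expression $\sum_i(\sqrt G)_{ii}^2$ for the success probability to the clean pairwise-fidelity bound requires controlling the off-diagonal entries of $\sqrt G$ by those of $G$. I expect the cleanest self-contained route is the matrix inequality above, provable from $G=(\sqrt G)^2$ and concavity/operator-monotonicity of the square root; once that inequality is in hand, ingredients (ii) and (iii) are routine. Because the states in our setting are pure, the pure-state version of (i) suffices, and I would present that, noting that the mixed-state case follows verbatim from the known square-root-measurement error bounds.
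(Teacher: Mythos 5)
The paper never actually proves this statement: Theorem~\ref{thm:pgm} is imported wholesale from the cited reference \cite{harrow2012many}, and the paper only uses it through Corollary~\ref{cor:pgm}. So the relevant comparison is with the proof in that reference, and your proposal follows essentially the same route as it does: a single-copy bound on the PGM error in terms of pairwise fidelities, the multiplicativity $\|\sqrt{\rho_i^{\otimes M}}\sqrt{\rho_j^{\otimes M}}\|_1=\|\sqrt{\rho_i}\sqrt{\rho_j}\|_1^M$, and the estimate $\sum_{i\ne j}\sqrt{p_ip_j}\le m$, giving $P_{\mathrm{err}}\le mF^M$ and hence $M=O(\log(m/\delta)/\log(1/F))$. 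Your ingredients (ii) and (iii) are correct and routine, and it is worth noting that your intermediate bound $P_{\mathrm{err}}\le\sum_{i\ne j}\sqrt{p_ip_j}\,\|\sqrt{\rho_i}\sqrt{\rho_j}\|_1^M$ is precisely the averaged form (Corollary~\ref{cor:pgm}) that the paper actually invokes.

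The only substantive weakness is the one you flag yourself: ingredient (i). The identity $P^{PGM}_{\mathrm{err}}=\sum_{i\ne j}|(\sqrt{G})_{ij}|^2$ is fine, but the matrix inequality $\sum_{i\ne j}|(\sqrt{G})_{ij}|^2\le\sum_{i\ne j}|G_{ij}|$ carries the entire technical weight of the single-copy statement, and your proposed justification (``from $G=(\sqrt G)^2$ and concavity/operator-monotonicity of the square root'') is not a proof and does not obviously become one. Writing $G_{ij}=(\sqrt G)_{ij}\bigl((\sqrt G)_{ii}+(\sqrt G)_{jj}\bigr)+\sum_{k\ne i,j}(\sqrt G)_{ik}(\sqrt G)_{kj}$ shows that the off-diagonal entries of $G$ can be partially cancelled by the cross terms, so no termwise comparison or triangle-inequality argument closes the gap; a correct proof has to control this cancellation quantitatively. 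The inequality is in fact true, but it is exactly the Barnum--Knill error estimate for the square-root measurement, $P^{PGM}_{\mathrm{err}}\le\sum_{i\ne j}\sqrt{p_ip_j}\,\|\sqrt{\rho_i}\sqrt{\rho_j}\|_1$, for which the paper already has references in its bibliography (\cite{barnum2002reversing}; a short self-contained proof also appears in \cite{montanaro2019pretty}). Since you explicitly offer to import that bound in the mixed-state case, your argument stands as a correct reduction of Theorem~\ref{thm:pgm} to this known lemma---which is exactly how \cite{harrow2012many} proceeds---but it is not self-contained as written: the hinted operator-monotonicity route should be replaced either by a citation or by an actual proof of the Gram-matrix inequality.
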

The above theorem in fact implies the following stronger statement immediately (also stated in \cite{barnum2002reversing}) that we use here.
\begin{corollary}
\label{cor:pgm}
Let $\Sh=\{\rho_1,\ldots,\rho_m\}$. Suppose $\rho \in \Sh$ is an unknown quantum state picked uniformly from $\Sh$. Suppose there exists $k$ such that
$$
\frac{1}{m}\sum_{i\neq j}\|\sqrt{\rho_i^{\otimes k}}\sqrt{\rho_j^{\otimes k}}\|_1\leq \delta,
$$
then given $k$ copies of $\rho$, the Pretty good measurement identifies $\rho$ with probability at least $1-\delta$.
\end{corollary}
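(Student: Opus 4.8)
The plan is to prove the corollary by unwrapping the inequality that powers the proof of Theorem~\ref{thm:pgm}, rather than treating that theorem as a black box: its hypothesis bounds the \emph{worst-case} pairwise overlap $\max_{i\neq j}\|\sqrt{\rho_i}\sqrt{\rho_j}\|_1$, whereas the corollary only bounds an \emph{average}, so a literal application will not suffice. The engine behind Theorem~\ref{thm:pgm} (and the content of \cite{barnum2002reversing}) is the Barnum--Knill-type error bound for the square-root measurement: for any ensemble $\{(p_i,\sigma_i)\}_{i\in[m]}$,
$$
1-P^{PGM}(\{(p_i,\sigma_i)\}_i)\;\leq\;\sum_{i\neq j}\sqrt{p_ip_j}\,\|\sqrt{\sigma_i}\sqrt{\sigma_j}\|_1 .
$$
First I would recall (or reproduce in a few lines) this inequality, using the explicit form of the PGM operators $M_i=\rho^{-1/2}p_i\sigma_i\rho^{-1/2}$ with $\rho=\sum_i p_i\sigma_i$, writing the success probability as $\sum_i p_i\Tr(M_i\sigma_i)$ and estimating the cross terms by the pairwise root-fidelities $\|\sqrt{\sigma_i}\sqrt{\sigma_j}\|_1$.

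Next I would specialize. Since $\rho$ is drawn uniformly from $\Sh$ and we receive $k$ copies, the relevant ensemble is $\{(1/m,\rho_i^{\otimes k})\}_{i\in[m]}$; substituting $\sigma_i=\rho_i^{\otimes k}$ and $p_i=1/m$ (so $\sqrt{p_ip_j}=1/m$) turns the displayed inequality into
$$
1-P^{PGM}\;\leq\;\frac1m\sum_{i\neq j}\|\sqrt{\rho_i^{\otimes k}}\sqrt{\rho_j^{\otimes k}}\|_1 .
$$
The right-hand side is precisely the quantity assumed to be $\leq\delta$ in the hypothesis, so $P^{PGM}\geq 1-\delta$ and the PGM identifies $\rho$ with the claimed probability.

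To close the loop and confirm that this really is the promised consequence of Theorem~\ref{thm:pgm}, I would note that for product states $\sqrt{\rho_i^{\otimes k}}=(\sqrt{\rho_i})^{\otimes k}$ and the trace norm is multiplicative under tensor products, so $\|\sqrt{\rho_i^{\otimes k}}\sqrt{\rho_j^{\otimes k}}\|_1=\|\sqrt{\rho_i}\sqrt{\rho_j}\|_1^{\,k}$; bounding each factor by $F$ and summing the $m(m-1)$ terms recovers $1-P^{PGM}\leq mF^k\leq\delta$ for $k=O(\log(m/\delta)/\log(1/F))$, which is exactly Theorem~\ref{thm:pgm}. The corollary is strictly stronger because it halts at the averaged bound and never passes to the worst case. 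The only real care-point---and the reason the word ``immediately'' deserves a sanity check---is verifying that the proof of Theorem~\ref{thm:pgm} genuinely factors through the averaged Barnum--Knill inequality rather than some cruder worst-case estimate, together with the elementary but essential identities $\sqrt{\rho^{\otimes k}}=(\sqrt\rho)^{\otimes k}$ and $\|A\otimes B\|_1=\|A\|_1\,\|B\|_1$.
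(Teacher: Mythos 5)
Your proposal is correct and follows exactly the route the paper implicitly relies on: the paper gives no explicit proof, asserting the corollary follows ``immediately'' from Theorem~\ref{thm:pgm} and citing \cite{barnum2002reversing}, and what it means by this is precisely the averaged Barnum--Knill error bound $1-P^{PGM}\leq\sum_{i\neq j}\sqrt{p_ip_j}\,\|\sqrt{\sigma_i}\sqrt{\sigma_j}\|_1$ specialized to the uniform ensemble $\{(1/m,\rho_i^{\otimes k})\}_i$, which is what you wrote out. Your care-point is also well taken --- the theorem as literally stated (worst-case maximum overlap) does not imply the averaged statement, so unwrapping the proof rather than black-boxing the theorem is the right move, and your multiplicativity check $\|\sqrt{\rho_i^{\otimes k}}\sqrt{\rho_j^{\otimes k}}\|_1=\|\sqrt{\rho_i}\sqrt{\rho_j}\|_1^{\,k}$ correctly closes the loop back to Theorem~\ref{thm:pgm}.
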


\section{Learning Binary Phase States}\label{sec:learning_bps}
In this section, we consider the problem of learning binary phase states as given by Eq.~\eqref{eq:binary_phase_state}, assuming that $f$ is a Boolean polynomial of $\FF_2$-degree $d$.

\subsection{Learning algorithm using separable measurements} \label{sec:learning_bps_separable_measurements}
We now describe our learning algorithm for learning binary phase states $\ket{\psi_f}$ when $f$ has $\FF_2$-degree $d$, using separable measurements. We carry out our algorithm in $n$ rounds, which we index by $t$. In the $t$-th round, we perform RPDS along $e_t$ (Def.~\ref{def:rpds}) in order to obtain samples of the form $(y,D_t f(y))$ where $y \in \{0,1\}^{n-1}$. For an $m\geq 1$ to be fixed later, we use RPDS on $m$ copies of $\ket{\psi_f}$ to obtain $\{\left(y^{(k)}, D_t f(y^{(k)})\right)\}_{k \in [m]}$ where $y^{(k)} \in \{0,1\}^{n-1}$ is uniformly random. We now describe how to learn $D_t f$ using these $m$ samples. 

Using Fact~\ref{lemma:existence_polynomial}, we know that $D_t f \in \calP(n-1,d-1)$. Thus, there are at most $N = |\calM(n-1,d-1)| = \sum_{k=1}^{d-1} \binom{n}{k} = n^{O(d)}$ monomials in the $\FF_2$ representation of $D_t f$. Let $A_t \in \FF_2^{m \times N}$ be the transpose of the $(d-1)$-evaluation matrix (defined in Eq.~\eqref{eq:d-evaluation_at_point_x}), such that the $k$th row of $A_t$ corresponds to the evaluations of $y^{(k)}$ under all monomials in $\calM(n-1,d-1)$, i.e., $\big(y^{(k)}_{S}\big)_{|S|\leq d-1}$, where $y^{(k)}_S=\prod_{j\in S}y^{(k)}_j$, and let $\beta_t =(\alpha_S)_{|S|\leq d-1}$ be the vector of unknown coefficients. Obtaining $\{(y^{(k)}, D_t f(y^{(k)}))\}_{k \in [m]}$, allows one to solve $A_t \beta_t = D_t f (\mathbf{y})$ for $\beta_t$ (where $\mathbf{y}=(y^{(1)},\ldots,y^{(m)})$ and $(D_t f(\mathbf{y}))_k=D_t f(y^{(k)})$) and learn the $\FF_2$-representation of $D_t f$ completely. Over $n$ rounds, one then learns $D_1 f, D_2 f,\ldots,D_n f$. The $\FF_2$-representations of these partial derivatives can then be used to learn $f$ completely, as show in Fact~\ref{fact:stitchingpoly}. This procedure is shown in Algorithm~\ref{algo:learn_bps_separable_measurements}.
\begin{fact}
\label{fact:stitchingpoly}
Let $f:\FF_2^n\rightarrow \FF_2$ be such that $f\in \calP(n,d)$. Learning $D_1 f,\ldots,D_n f$ suffices to learn $f$.
\end{fact}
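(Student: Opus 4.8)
The plan is to reconstruct the $\FF_2$-representation of $f$ coefficient-by-coefficient from the learned derivatives $D_1 f, \ldots, D_n f$, the one subtlety being the constant term, which I will argue is the only coefficient not recoverable and is in any case an unobservable global phase of $\ket{\psi_f}$.

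First I would record the explicit effect of differencing on the monomial expansion. Writing $f(x) = \sum_{J\subseteq[n],\, |J|\le d} \alpha_J \prod_{j\in J} x_j$ and splitting each monomial according to whether it contains the variable $x_k$, I get $f(x) = x_k\, p_k(x_{-k}) + q_k(x_{-k})$, where $x_{-k}$ denotes the $n-1$ variables other than $x_k$, the polynomial $p_k = \sum_{J\ni k} \alpha_J \prod_{j\in J\setminus\{k\}} x_j$ collects the monomials containing $x_k$ with $x_k$ stripped off, and $q_k$ collects those that do not. Substituting $x_k=1$ and $x_k=0$ and adding modulo $2$ annihilates $q_k$ and gives $(D_k f)(y) = p_k(y)$ for all $y\in\FF_2^{n-1}$. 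Thus each $D_k f$ is exactly the polynomial whose nonzero monomials are $\{\,J\setminus\{k\} : k\in J,\ \alpha_J=1\,\}$, and its coefficients are read directly off the $\FF_2$-representation the algorithm has already learned.

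Next I would recover every nonconstant coefficient of $f$. Fix any nonempty $J$ with $|J|\le d$ and choose some $k\in J$; by the formula above, the coefficient of $\prod_{j\in J\setminus\{k\}} x_j$ in $D_k f$ equals $\alpha_J$. Hence every $\alpha_J$ with $J\neq\emptyset$ is determined — in fact redundantly, once for each of the $|J|$ choices of $k\in J$, which additionally furnishes a consistency check on the learned derivatives.

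The one subtle point, and the only real obstacle, is the constant term $\alpha_\emptyset$: differencing annihilates it, so it appears in none of the $D_k f$. Here I would observe that this coefficient is inconsequential at the level of the state, since $(-1)^{f(x)} = (-1)^{\alpha_\emptyset}\,(-1)^{\sum_{J\neq\emptyset}\alpha_J\prod_{j\in J}x_j}$, so $\alpha_\emptyset$ contributes only the global phase $(-1)^{\alpha_\emptyset}$ to $\ket{\psi_f}$ and is physically unobservable. Consequently $\{D_k f\}_{k\in[n]}$ pins down $f$ up to this global phase, which is all that can be meant by learning $f$ from copies of $\ket{\psi_f}$; fixing $\alpha_\emptyset=0$ by convention then recovers a canonical representative, completing the argument.
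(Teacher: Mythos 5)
Your proof is correct and follows essentially the same route as the paper: both split each monomial of $f$ according to whether it contains $x_k$, observe that $(D_k f)(y)$ is exactly the polynomial collecting the monomials containing $x_k$ with that variable stripped off, and recover every nonconstant coefficient as $\alpha_J = $ (coefficient of $\prod_{j\in J\setminus\{k\}}x_j$ in $D_k f$) for any choice of $k\in J$. The only differences are minor and complementary: you explicitly handle the constant term $\alpha_\emptyset$ (correctly noting that differencing annihilates it, that it contributes only a global phase to $\ket{\psi_f}$, and that it can be fixed to $0$ by convention), a subtlety the paper's proof passes over in silence, while the paper instead spells out the majority-vote rule over the $|J|$ redundant estimates of $\alpha_J$ to tolerate errors in the learned derivatives, which you only gesture at as a consistency check.
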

\begin{proof}
Let the $\FF_2$-representation of the unknown $f$ be
\begin{equation}
    f(x) = \sum \limits_{J \subseteq [n], |J| \leq d} \alpha_J \prod_{i \in J} x_i.
\end{equation}
The $\FF_2$-representation of $D_t f$ for any $t \in \{1,2,\ldots,n\}$ is then given by
\begin{equation}
    D_t f(x) = \sum \limits_{ \substack{J \subseteq [n]:\\ t \in J, |J| \leq d} } \alpha_J \prod_{i \in J \setminus t} x_i,
\end{equation}
where we notice that $D_t f$ only contains those monomials that correspond to sets $J$ containing the component $x_t$. Let the $\FF_2$-representation of $D_t f$ with the coefficient vector $\beta_t$ be given by
\begin{equation}
    D_t f(x) = \sum_{S \subset [n], |S| \leq d-1} (\beta_t)_S \prod_{i \in S} x_i.
\end{equation}
Suppose an algorithm learns $D_1f,\ldots,D_nf$. In order to learn $f$, we must retrieve the coefficients $\alpha_J$ from the learned coefficients $\{\beta_t\}_{t \in \{1,2,\ldots,n\}}$. We accomplish this by noting that $(\beta_t)_S = \alpha_{S \cup t}$ or in other words, $\alpha_J = \{\beta_t\}_{J \setminus t}, \, t \in J$. However, there may be multiple values of $t$ that will allow us retrieve $\alpha_J$. For example, suppose $f$ contains the monomial term $x_1 x_2 x_3$ (i.e., $J=\{1,2,3\}$) then $\alpha_{\{1,2,3\}}$ could be retrieved from $(\beta_1)_{\{2,3\}}$, $(\beta_2)_{\{1,3\}}$, or $(\beta_3)_{\{1,2\}}$. When $D_t f$ (or $\beta_t$) for all $t$ is learned with zero error, all these values coincide and it doesn't matter which learned coefficient is used. When there may be error in learning $D_t f$ (or $\beta_t$), we can carry out a majority vote: $\alpha_J = \mathrm{Majority}(\{(\beta_t)_{J \setminus t} | t \in J\})$ for all $J \subseteq [n], |J| \leq d$. The majority vote is guaranteed to succeed as long as there is no error in at least half of the contributing $\beta_t$ (which is the case in our learning~algorithm).
\end{proof}

\begin{algorithm}[h]
    \caption{Learning binary phase states through separable measurements} \label{algo:learn_bps_separable_measurements}
    \textbf{Input}: Given $M=O( (2n)^d)$ copies of $\ket{\psi_f}$ where $f \in \mathcal{P}(n,d)$ 
    \begin{algorithmic}[1]
    \For{qubit $t=1,\ldots ,n$}
        \State Set $m=M/n$
        \State Perform RPDS along $e_t$ to obtain $\{(y^{(k)},D_t f(y^{(k)})\}_{k \in [m]}$ by measuring $m$ copies of $\ket{\psi_f}$.
        \State Solve the linear system of equations $A_t\cdot \beta_t=D_t f (\mathbf{y})$ to learn $D_t f$ explicitly.
    \EndFor 
    \State Use Fact~\ref{fact:stitchingpoly} to learn $f$ using $D_1 f,\ldots,D_n f$ (let $\tilde{f}$ be the output).
    \end{algorithmic}
    \textbf{Output}: Output $\tilde{f}$
\end{algorithm}
We now prove the correctness of this algorithm.
\begin{theorem}
\label{thm:binaryphasedegreed}
Let $n\geq 2,d\leq n/2$. Algorithm~\ref{algo:learn_bps_separable_measurements} uses  $M=O(2^d n^{d})$ copies of an unknown $\ket{\psi_f}$ for $f\in \calP(n,d)$ and with  high probability identifies $f$ using single qubit $X,Z$ measurements. \end{theorem}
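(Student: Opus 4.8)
The plan is to prove correctness by establishing two things: that the sample count $M=O(2^d n^d)$ suffices to learn each directional derivative $D_t f$ with high enough probability via the linear-algebraic step, and that Fact~\ref{fact:stitchingpoly} correctly reconstructs $f$ from the learned derivatives even allowing for some failures. First I would fix a round $t$ and analyze the inner linear system $A_t \beta_t = D_t f(\mathbf{y})$. By Fact~\ref{lemma:existence_polynomial}, $D_t f \in \calP(n-1,d-1)$, so the unknown coefficient vector $\beta_t$ has $N=|\calM(n-1,d-1)|=O(n^{d-1})$ entries. The matrix $A_t \in \FF_2^{m\times N}$ has rows given by the $(d-1)$-evaluations of the uniformly random points $y^{(1)},\dots,y^{(m)}$. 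The key claim is that with $m=M/n=O(2^d n^{d-1})$ random samples, $A_t$ has rank $N$ (full column rank) with high probability, so the system has a unique solution equal to the true $\beta_t$.

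The heart of the argument is to show that random $(d-1)$-evaluation vectors span $\FF_2^N$ with high probability. I would argue this incrementally: as long as the rows collected so far span a proper subspace $V\subsetneq \FF_2^N$, a fresh random evaluation vector $\mathrm{eval}_{d-1}(y)$ lands inside $V$ only with bounded probability. Concretely, if the current span misses some direction, there is a nonzero degree-$(d-1)$ polynomial $p$ annihilating all collected points; by Lemma~\ref{lem:schwartz} (the Nisan--Szegedy bound), a uniformly random $y$ satisfies $p(y)=0$ with probability at most $1-2^{-(d-1)}$, so a fresh sample enlarges the span with probability at least $2^{-(d-1)}$. Since we need to grow the rank $N=O(n^{d-1})$ times, a coupon-collector / Chernoff argument shows that $m=O(2^d(N+\log(n/\delta)))=O(2^d n^{d-1})$ samples make $A_t$ full-rank except with probability $\delta/n$. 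Summing $M=nm$ gives the claimed $M=O(2^d n^d)$.

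The final step is to combine the $n$ rounds. By a union bound over $t=1,\dots,n$, with each round failing with probability at most $\delta/n$, all $n$ derivatives $D_1 f,\dots,D_n f$ are learned exactly with probability at least $1-\delta$, after which Fact~\ref{fact:stitchingpoly} reconstructs $f$ exactly (the majority vote in that fact is not even needed in the zero-error regime, but it provides robustness). Since the algorithm performs only RPDS (Def.~\ref{def:rpds}), which uses single-qubit $X$ and $Z$ measurements on each copy, the measurement requirement in the statement is met. I would choose $\delta$ to be a small constant to get ``high probability'' while keeping $M=O(2^d n^d)$.

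I expect the main obstacle to be the rigorous spanning/full-rank argument for the random evaluation matrix $A_t$. The subtlety is that the rows $\mathrm{eval}_{d-1}(y^{(k)})$ are not uniformly distributed vectors in $\FF_2^N$ — they are highly structured (products of coordinates), so one cannot directly invoke a generic random-matrix full-rank bound. The cleanest route is the incremental argument above, where the crucial input is precisely Lemma~\ref{lem:schwartz}: any nonzero degree-$(d-1)$ polynomial is nonzero on at least a $2^{-(d-1)}$ fraction of inputs, which is exactly what guarantees each new random point has a constant (in $n$) chance of being linearly independent of the previous ones. Getting the $2^d$ factor in the sample complexity to come out correctly, rather than a worse dependence, hinges on using this $2^{-(d-1)}$ bound tightly in the progress estimate.
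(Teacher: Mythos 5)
Your proposal is correct, and it analyzes the same algorithm with the same skeleton (per-round RPDS, the linear system $A_t\beta_t = D_tf(\mathbf{y})$, a union bound over the $n$ rounds, and reconstruction via Fact~\ref{fact:stitchingpoly}), but the core probabilistic step is argued by a genuinely different route. The paper shows that the solution is unique by a direct union bound over all nonzero polynomials $g\in\calP(n-1,d-1)$ that could lie in the kernel of $A_t$: each such $g$ vanishes at all $m$ i.i.d.\ uniform points with probability at most $(1-2^{-d})^m$ by Lemma~\ref{lem:schwartz}, and there are at most $2^N$ candidates with $N=O(n^{d-1})$, so $m=O(2^dn^{d-1})$ kills the union bound with failure probability $2^{-\Omega(n^{d-1})}$. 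You instead prove that $A_t$ attains full column rank by an incremental spanning argument: whenever the row span is a proper subspace $V\subsetneq\FF_2^N$, any $c\perp V$ defines a nonzero degree-$(d-1)$ polynomial vanishing on the collected points, so by Lemma~\ref{lem:schwartz} a fresh sample escapes $V$ with probability at least $2^{-(d-1)}$; a negative-binomial/Chernoff bound then gives full rank after $m=O(2^d n^{d-1})$ samples. The two arguments establish exactly the same event (trivial kernel of $A_t$) with the same sample complexity and comparable failure probability, and both hinge on Lemma~\ref{lem:schwartz} exactly as you identified. The paper's union bound is shorter and delivers the quantitative failure bound in one line; your sequential argument avoids enumerating the exponentially many kernel candidates, makes transparent \emph{why} the structured (non-uniform) evaluation vectors nevertheless reach full rank, and is somewhat more robust in spirit (it would survive, e.g., mild adaptivity in how sample points are gathered). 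Your anticipated ``main obstacle'' is thus handled correctly, and your accounting of the measurement model and of the final stitching step matches the paper.
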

\begin{proof}
Algorithm~\ref{algo:learn_bps_separable_measurements} learns $f$ by learning $D_1 f,\ldots,D_n f$ and thereby learns $f$ completely. Here we prove that each $D_t f$ can be learned with $m=O(2^d n^{d-1})$ copies of $\ket{\psi_f}$ and an exponentially small probability of error. This results in an overall sample complexity of $O(2^d n^d)$ for learning $f$ and hence $\ket{\psi_f}$.  Let us consider round $t$ in Algorithm~\ref{algo:learn_bps_separable_measurements}. We generate $m$ constraints $\{\left(y^k,(D_t f)(y^{(k})\right)\}_{k \in [m]}$ where $y^{(k)} \in \FF_2^{n-1}$ by carrying out RPDS along $e_t$ on $m$ copies of $\ket{\psi_f}$.

We learn the $\FF_2$-representation of $D_t f$ by setting up a linear system of equations using these $m$ samples: $A_t \beta_t = D_t f(\mathbf{y})$, where $A_t$ is the transposed $(d-1)$-evaluation matrix in round $t$, evaluated over $\mathbf{y} = (y^{(1)}, y^{(2)},\ldots y^{(m)})$, and $\beta_t \in \FF_2^{|\calM(n-1,d-1)|}$ is the collective vector of coefficients corresponding to the monomials in $\calM(n-1,d-1)$. By construction, this system has at least one solution. If there is exactly one solution, then we are done. Otherwise, the corresponding system has a non-zero solution, that is, there exists a non-zero degree-$(d-1)$ polynomial $g\, : \, \FF_2^{n-1} \to \FF_2$ such that $g(y^{(j)})=0$ for all $j=1,2,\ldots,m$. 

Below we prove that the probability of this bad event can be bounded through the Schwartz-Zippel lemma. Applying Lemma~\ref{lem:schwartz} and by noting that $y^j \in \FF_2^{(n-1)}$ are independent and uniformly distributed, we have that
\begin{equation}
    \Pr[g(y^{(1)}) = g(y^{(2)}) = \cdots = g(y^{(m)}) = 0] \leq (1-2^{-d})^{m}\leq e^{-m2^{-d}}
    \label{eq:schwartz_zippel_lemma_multiple_samples}
\end{equation}
Let $\mathcal{P}_{\text{nnz}}(n,d)$ be the set of all degree-$d$ polynomials $g:\FF_2^n \rightarrow \FF_2$ which are not identically zero. Define event 
\be
\mathsf{BAD}(y^1,\ldots,y^m) =[\exists g \in \mathcal{P}_{\text{nnz}}(n-1,d-1) \, : \, g(y^1)=\ldots=g(y^m)=0 {\pmod 2}].
\ee
We note that $|\mathcal{P}_{\text{nnz}}(n-1,d-1)| \leq 2^N$ where $N = O(n^{d-1})$. By union bound and Eq.~\eqref{eq:schwartz_zippel_lemma_multiple_samples}, we have %
\begin{equation}
    \Pr[\mathsf{BAD}(y^{(1)},\ldots,y^{(m)})] \leq |\mathcal{P}_{\text{nnz}}(n-1,d-1)| \cdot (1-2^{-d})^m \leq 2^{n^{d-1} - m2^{-d}(\ln 2)}.
\end{equation}
Thus choosing $m=O((2n)^{d-1})$ is enough to learn all coefficients $\{\alpha_J\}_{t \in J}$ (through $\beta_t$) in the $\FF_2$ representation of $f$ with an exponentially small probability of error. We need to repeat this over all the $n$ qubits in order to learn $D_1 f,\ldots, D_n f$ and then use Fact~\ref{fact:stitchingpoly}  to learn $f$ completely. This gives an overall sample complexity of $O((2n)^d)$ for learning binary phase states. Observe that the only measurements that we needed in this algorithm were single qubit $\{X,Z\}$ measurements. 
\end{proof}

\begin{corollary}
An $n$-qubit state $\ket{\psi_f}$ with the unknown Boolean function $f$ of given Fourier-sparsity $s$ can be learned with Algorithm~\ref{algo:learn_bps_separable_measurements} that consumes $M$ copies of $\ket{\psi_f}$ with probability $1-2^{-\Omega(n)}$ provided that $M \geq O(s n^{\log s})$.
\end{corollary}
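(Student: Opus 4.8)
The plan is to reduce the claim to Theorem~\ref{thm:binaryphasedegreed} by observing that bounded Fourier-sparsity forces bounded $\FF_2$-degree, so that one may simply run Algorithm~\ref{algo:learn_bps_separable_measurements} with the right degree parameter. Concretely, I would first establish the structural fact that any $f:\FF_2^n\to\{0,1\}$ with $\|\widehat{f}\|_0=s$ satisfies $\deg_{\FF_2}(f)\le \log_2 s$, and then set $d=\lceil \log_2 s\rceil$ in Theorem~\ref{thm:binaryphasedegreed}. Since Algorithm~\ref{algo:learn_bps_separable_measurements} only needs the integer $d$ as input and $s$ is given, no knowledge of $f$ beyond its sparsity is required.

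The heart of the argument is this degree bound, which I would prove as follows. Suppose $\deg_{\FF_2}(f)=d$ and, without loss of generality, that the monomial $\prod_{i=1}^d x_i$ appears in the $\FF_2$-representation of $f$. Restrict $f$ to the subcube $x_{d+1}=\cdots=x_n=0$ to obtain $h:\FF_2^d\to\{0,1\}$; this restriction still has $\FF_2$-degree $d$, since the only monomial of $f$ surviving and reducing to $\prod_{i=1}^d x_i$ is that monomial itself. On the Fourier side, fixing variables to $0$ merely groups coefficients, $\widehat{h}_S=\sum_{T:\,T\cap[d]=S}\widehat{f}_T$, so the number of nonzero coefficients cannot grow: $\|\widehat{h}\|_0\le \|\widehat{f}\|_0=s$. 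It then suffices to show a degree-$d$ function on exactly $d$ variables has full Fourier support, $\|\widehat{h}\|_0=2^d$. This follows from a parity observation: $\deg_{\FF_2}(h)=d$ means the coefficient of $\prod_{i=1}^d x_i$, which equals $\sum_x h(x)\bmod 2$, is $1$, so $|h^{-1}(1)|$ is odd; hence for every $S\subseteq[d]$ the quantity $2^d\widehat{h}_S=\sum_{x\in h^{-1}(1)}(-1)^{\langle S,x\rangle}$ is a sum of an odd number of $\pm1$'s and is never zero. Combining gives $2^d\le s$, i.e. $d\le \log_2 s$.

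With the degree bound in hand I would set $d=\lceil \log_2 s\rceil$ and invoke Theorem~\ref{thm:binaryphasedegreed}, whose hypothesis $d\le n/2$ holds whenever $s\le 2^{n/2}$ (the only regime where the target bound is nontrivial). The theorem then learns $f$ from $M=O(2^d n^d)=O\big((2n)^{\lceil\log_2 s\rceil}\big)=O(s\,n^{\lceil\log_2 s\rceil})$ copies using only single-qubit $X,Z$ measurements, matching the claimed $O(s\,n^{\log s})$ up to the rounding of the exponent; its failure analysis, a union bound over the $n$ rounds of the Schwartz--Zippel estimate, yields success probability $1-2^{-\Omega(n)}$ for this choice of $M$. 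I anticipate the only genuine obstacle is the structural lemma itself, and within it the two points needing care are that the subcube restriction preserves the top-degree monomial and that restriction cannot increase Fourier sparsity; the base case is then immediate from the parity of $|h^{-1}(1)|$, and the remainder is a direct substitution into Theorem~\ref{thm:binaryphasedegreed}.
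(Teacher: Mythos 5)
Your proposal is correct and takes essentially the same route as the paper: both reduce the corollary to Theorem~\ref{thm:binaryphasedegreed} via the degree--sparsity bound $\deg_{\FF_2}(f)\le \log_2 s$. The one substantive difference is that the paper simply cites this bound (from Bernasconi--Codenotti) without proof, whereas you prove it from scratch, and your proof is sound: the subcube restriction $x_{d+1}=\cdots=x_n=0$ preserves the top monomial and hence the $\FF_2$-degree; restriction cannot increase Fourier sparsity because $\widehat{h}_S=\sum_{T:\,T\cap[d]=S}\widehat{f}_T$ groups disjoint families of coefficients; and the parity argument (the top $\FF_2$-coefficient equals $\sum_x h(x)\bmod 2$, so $|h^{-1}(1)|$ is odd, making each $2^d\widehat{h}_S$ an odd integer and hence nonzero) correctly yields full Fourier support $2^d\le s$. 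So your write-up is self-contained where the paper's is not. Two minor refinements: since the true degree is an integer bounded by $\log_2 s$, you may take the algorithm's degree parameter to be $\lfloor \log_2 s\rfloor$ rather than $\lceil \log_2 s\rceil$, which removes the rounding slack and gives exactly the claimed $O(s\, n^{\log s})$; and your explicit restriction to the regime $s\le 2^{n/2}$ to satisfy the hypothesis $d\le n/2$ of Theorem~\ref{thm:binaryphasedegreed} is a caveat the paper's two-sentence proof glosses over entirely.
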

 The proof of this corollary simply follows from the following: for a Boolean function, the Fourier sparsity $s$ of $f$ is related to the $\FF_2$-degree $d$ of $f$ \cite{bernasconi1999spectral} as
$d \leq \log s$. Along with Theorem~\ref{thm:binaryphasedegreed} we obtain the corollary.

\subsection{Learning using entangled measurements} \label{sec:learning_bps_entangled_measurements}
We now consider the problem of learning binary phase states using entangled measurements. We have the following result.

\begin{theorem}
\label{thm:entangledupperbound}
Let $n\geq 2,d\leq n/2$. There exists an algorithm that uses  $M=O((2n)^{d-1})$ copies of an unknown $\ket{\psi_f}$ for $f\in \calP(n,d)$ and identifies $f$ using entangled measurements with probability~$\geq 2/3$. There is also a lower bound of $\Omega(n^{d-1})$ for learning these states.
\end{theorem}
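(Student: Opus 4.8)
The plan is to prove the upper bound with a pretty-good-measurement (PGM) argument and the matching lower bound with an information-theoretic (Holevo--Fano) argument; the two are essentially independent.

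\textbf{Upper bound, setup.} First I would write $\ket{\psi_f} = Z_f\ket{+}^{\otimes n}$ (up to global phase), where $Z_f = \sum_x (-1)^{f(x)}\ketbra{x}{x}$, and observe that $\{Z_f : f\in\calP(n,d)\}$ is an abelian group under multiplication, since $Z_f Z_g = Z_{f+g}$ and $f+g \pmod 2$ again has degree $\le d$. Because $\ket{\psi_{f+1}} = -\ket{\psi_f}$, one can only hope to learn $f$ modulo an additive constant, so I restrict to representatives $f$ with no constant term; then $\Sh=\{\ket{\psi_f}^{\otimes M}\}_f$ is a geometrically uniform ensemble generated by the abelian group $\{Z_f^{\otimes M}\}_f$ acting on $\ket{+}^{\otimes nM}$, for which the PGM is the optimal measurement. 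The second ingredient is the $f$-independence of the PGM success probability: since $\{Z_f\}$ is abelian, the ensemble average $\rho=\frac1m\sum_f Z_f\ketbra{+^{\otimes nM}}{+^{\otimes nM}}Z_f^\dagger$ commutes with every $Z_g$ (re-index the sum), hence so does $\rho^{-1/2}$, and a short computation gives $\braket{\psi_f}{E_f}{\psi_f}=\frac1m|\braket{+}{\rho^{-1/2}}{+}|^2$ for the PGM element $E_f$, which is independent of $f$. Thus it suffices to bound the \emph{average} success probability through Corollary~\ref{cor:pgm}.

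\textbf{Upper bound, the core estimate.} Because every state is pure, $\|\sqrt{\rho_f^{\otimes M}}\sqrt{\rho_g^{\otimes M}}\|_1 = |\braketIP{\psi_f}{\psi_g}|^M = |\bias(f+g)|^M$, and the group structure collapses the double average in Corollary~\ref{cor:pgm} to
\[
\frac1m\sum_{f\ne g}\|\sqrt{\rho_f^{\otimes M}}\sqrt{\rho_g^{\otimes M}}\|_1 = \sum_{\substack{h\neq 0,\ h\neq 1\\ \deg h\le d}} |\bias(h)|^M .
\]
I would then show this is $\le 1/3$ for $M=O((2n)^{d-1})$. Using $\bias(h)=1-2\,\textsf{wt}(h)$ and $|\bias(h)|\le 1-2^{1-d}$ (from Lemma~\ref{lem:schwartz}), I split the nonzero nonconstant $h$ into dyadic weight shells $\textsf{wt}(h)\approx 2^{-\ell}$ (together with their complements), $\ell=1,\dots,d$. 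On such a shell $|\bias(h)|^M\le(1-2^{1-\ell})^M\le e^{-M2^{1-\ell}}$, while Theorem~\ref{thm:weight} (with a constant $\varepsilon$) bounds the number of $h$ in the shell by $2^{O(\ell^4\binom{n-\ell}{\leq d-\ell})}=2^{O(\ell^4 n^{d-\ell})}$. Balancing count against bias decay forces $M\gtrsim 2^{\ell-1}\ell^4 n^{d-\ell}$ per level; this requirement is decreasing in $\ell$ for $n\ge 2$, is largest at $\ell=1$, and is uniformly dominated by $(2n)^{d-1}$, so $M=O((2n)^{d-1})$ makes every shell contribute an exponentially small amount and the total is $\le 1/3$. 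The main obstacle is exactly this balancing: tracking the $\ell^4$ and $2^\ell$ factors coming out of the Reed--Muller weight enumerator against the exponential bias decay, and verifying that the level-$\ell$ requirement is genuinely dominated by $\ell=1$ and never exceeds $(2n)^{d-1}$.

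\textbf{Lower bound.} Here I would argue information-theoretically. Draw $f$ uniformly from the $2^{\Theta(n^d)}$ degree-$\le d$ polynomials with no constant term, so $H(f)=\Theta(n^d)$. Any algorithm returning $\tilde f$ with $\Pr[\tilde f=f]\ge 2/3$ must, by Fano's inequality (Lemma~\ref{lem:fano}), satisfy $I(f:\mathrm{outcome})=H(f)-H(f\mid\mathrm{outcome})\ge \tfrac23\,\Theta(n^d)-O(1)=\Omega(n^d)$. On the other hand, the accessible information about $f$ from the $M$-copy ensemble $\{\ket{\psi_f}^{\otimes M}\}$ is at most its Holevo quantity, which for pure states is $S(\bar\rho)\le \log_2(2^{nM})=nM$. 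Combining the two inequalities gives $nM=\Omega(n^d)$, i.e.\ $M=\Omega(n^{d-1})$, as claimed.
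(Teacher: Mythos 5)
Your overall route is the same as the paper's: for the upper bound, geometric uniformity of $\{Z_f\}$, optimality and $f$-independence of the PGM, then Corollary~\ref{cor:pgm} combined with the Reed--Muller weight bound (Theorem~\ref{thm:weight}); for the lower bound, Holevo plus Fano. Those structural parts are correct, and your explicit quotient by the constant polynomial is a nice touch --- the paper handles this only implicitly, by restricting its sum to the non-constant polynomials $\calP^*(n,d)$ (otherwise the difference $h=1$ would contribute $|\bias(1)|^M=1$ and the sum could never be small).

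The gap is in the core estimate, precisely at the shell nearest to weight $1/2$. Two symptoms: (i) your per-shell bias bound $(1-2^{1-\ell})^M$ degenerates to $0$ at $\ell=1$, so as written it says nothing about polynomials with $\textsf{wt}(h)$ just below $1/2$; (ii) more seriously, Theorem~\ref{thm:weight} invoked ``with a constant $\varepsilon$'' only counts polynomials with $\textsf{wt}(h)\le(1-\varepsilon)2^{-1}$, i.e.\ $|\bias(h)|\ge\varepsilon$. The polynomials with $0<|\bias(h)|<\varepsilon$ are untouched by it, and they constitute almost all of $\calP(n,d)$ --- about $2^{\Theta(n^d)}$ of them --- while their bias is not exponentially small. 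Pairing the trivial count $2^{\Theta(n^d)}$ with the bias bound $\varepsilon^M$ forces $M=\Omega(n^d)$, which is exactly the bound you are trying to beat; so with constant $\varepsilon$ the $\ell=1$ shell does not close. The repair (this is what the paper's first term $2^{n-1}\,2^{-2k+C\binom{n-1}{\le d-1}}$ encodes) is to stratify this shell by the exact bias value: for each of the at most $2^{n-1}$ possible weights $w=(1-\varepsilon)2^{n-1}$, apply Theorem~\ref{thm:weight} with that \emph{non-constant} $\varepsilon$, which ranges down to $2^{-(n-1)}$. The count $(1/\varepsilon)^{C\binom{n-1}{\le d-1}}$ is then beaten by the bias factor $|\bias(h)|^{M}\le\varepsilon^{M}$ uniformly in $\varepsilon$ as soon as $M\gtrsim\binom{n-1}{\le d-1}$, since
\begin{equation}
\left(\tfrac1\varepsilon\right)^{C\binom{n-1}{\le d-1}}\varepsilon^{M}
=\varepsilon^{\,M-C\binom{n-1}{\le d-1}}\le 2^{-\left(M-C\binom{n-1}{\le d-1}\right)},
\end{equation}
and summing over the $\le 2^{n-1}$ strata costs only the harmless prefactor $2^{n-1}$, which is absorbed because $\binom{n-1}{\le d-1}=\Theta(n^{d-1})$ dominates $n$ for $d\ge2$. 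With that repair, your shells $\ell\ge2$ (where a constant $\varepsilon$ is fine and the count exponent $\ell^4 n^{d-\ell}$ is beaten by the decay $e^{-M2^{1-\ell}}$) go through essentially as you describe --- modulo the minor caveat that the ``decreasing in $\ell$'' claim needs $n$ larger than a constant, not just $n\ge2$ --- and the rest of the argument, including the Holevo--Fano lower bound, stands.
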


\begin{proof}
In order to prove this theorem, we follow the following  steps. We first observe that the optimal measurement for our state distinguishing problem is the pretty good measurement (PGM). Second we observe that the success probability of the PGM is the same for \emph{every} concept in the ensemble. We  bound the success probability of the PGM  using Corollary~\ref{cor:pgm} we get our upper~bound.

For $f\in \calP(n,d)$, let $U_f$  be the unitary defined as $U_f=\textsf{diag}\big(\{(-1)^{f(x)}\}_x\big)$, that satisfies $U_f \ket{+}^n=\ket{\psi_f}$. Observe that the set $\{U_f\}_{f\in \calP(n,d)}$ is an Abelian group.  The ensemble we are interested in is  $\Sh=\{U_f \ket{+}^n\}_{f\in \calP(n,d)}$ and  such an ensemble is called \emph{geometrically uniform} if the $\{U_f\}$ is an Abelian group. A well-known result of Eldar and Forney~\cite{eldar2001quantum} showed that the optimal measurement for state distinguishing a geometrically uniform (in particular $\Sh$) is  the pretty-good~measurement. We now show that the success probability of the PGM is the same for every state in the ensemble. In this direction, for $M\geq 1$, let $\sigma_f=\ketbra{\psi_f}{\psi_f}^{\otimes M}$. The POVM elements of the pretty good measurement $\{E_f:f\in \calP(n,d)\}$ is given by the POVM elements $E_f=S^{-1/2}\sigma_fS^{-1/2}$ where $S=\sum_{f\in \calP(n,d)}\sigma_f$. The probability that the PGM identifies the unknown $\sigma_f$ is given by
$$
\Pr(f)=\Tr(\sigma_f E_f)=\langle \psi_f^{\otimes M}|S^{-1/2}|\psi_f^{\otimes M}\rangle^2.
$$
Our claim is that $\Pr(f)$ is the same for every $f\in \calP(n,d)$. Using the Abelian property of the unitaries $\{U_f\}_f$, observe that $U_f\ket{\psi_g}=\ket{\psi_{f\oplus g}}$ for every $f,g\in \calP(n,d)$. Thus, we have that 
$
(U_f^{\otimes M})^{\dagger} SU_{f}^{\otimes M}=S,
$
which implies that$
(U_f^{\otimes M})^{\dagger} S^{-1/2}U_{f}^{\otimes M}=S^{-1/2}.
$
Hence it follows~that
$$
\Pr(f)=\big(\langle +|^{\otimes M}(U_f^{\otimes M})^{\dagger} S^{-1/2}U_{f}^{\otimes M}|+\rangle^{\otimes M}\big)^2= \big(\langle +|^{\otimes M} S^{-1/2}|+\rangle^{\otimes M}\big)^2=\Pr(0),
$$
for every $f\in \calP(n,d)$. Finally,  observe that
$
\langle \psi_f |\psi_g\rangle =\mathop{\mathbb{E}}_{x}\big[(-1)^{f(x)+g(x)}\big]=1-2\Pr_x [f(x)\neq g(x)].
$
Let $\calP^*(n,d)$ be the set of non-constant polynomials in $\calP(n,d)$. We now have the following

\begin{align*}
\frac{1}{2^{\binom{n}{\leq d}}}\sum_{\substack{f\neq g:\\f,g\in P(n,d)}}\|\sqrt{\rho_f^{\otimes k}}\sqrt{\rho_g^{\otimes k}}\|_1 &=\sum_{\substack{g\in P^*(n,d)} } \big(1-2\Pr_x[g(x)=1]\big)^{2k}\\
&=\sum_{\substack{g\in P^*(n,d)} } \big(1-2\textsf{wt}(g)\big)^{2k}\\
&=\sum_{\ell=1}^{d-1}\sum_{\substack{g\in P^*(n,d)}} \big(1-2|g|/2^n\big)^{2k}\cdot \Big[|g|\in [2^{n-\ell-1},2^{n-\ell}-1]\Big]\\
&=\sum_{\substack{g\in P^*(n,d)}} \big(1-2|g|/2^n\big)^{2k}\cdot \Big[|g|\in [2^{n-2},2^{n-1}-1]\Big] \\ 
& \hspace{4em} + \sum_{\ell=2}^{d-1}\sum_{\substack{g\in P^*(n,d)}} \big(1-2|g|/2^n\big)^{2k}\cdot \Big[|g|\in [2^{n-\ell-1},2^{n-\ell}-1]\Big]\\
&\leq 2^{n-1}2^{-2k+C\binom{n-1}{\leq d-1}}+\sum_{\ell=2}^{d-1}\big(1-\frac{1}{2^\ell}\big)^{2k}\sum_{\substack{g\in P^*(n,d)}}  \Big[|g| \leq 2^{n-\ell}\Big],
\end{align*}
where the first equality used that the PGM has the same success probability for every $f,g\in \calP(n,d)$, third equality used that $|g|\ge 2^{n-d}$ for any non-zero polynomial $g\in P(n,d)$~\cite{macwilliams1977theory} and last inequality used Theorem~\ref{thm:weight}. For $k=O(n^{d-1})$ (by picking a sufficiently large constant in $O(\cdot)$), the first term is at most $\leq 1/100$.  To bound the second term, using Theorem~\ref{thm:weight} we have
$$
\sum_{\ell=2}^{d-1}\big(1-\frac{1}{2^\ell}\big)^{2k}\sum_{\substack{g\in P^*(n,d)}}  \Big[|g| \leq 2^{n-\ell}\Big]\leq \sum_{\ell=2}^{d-1}2^{n-\ell}\exp(-2k/2^{\ell}+(n-\ell)\ell^4\binom{n-\ell}{\leq d-\ell}).
$$
Each term is $\exp(-n^{d-1})$ for $k=O(n^{d-1})$, so the overall sum is $\leq 1/100$.
Corollary~\ref{cor:pgm} implies our desired upper bound.

In order to see the lower bound, observe that each state $\ket{\psi_f}$ contains $n$ bits of information and the goal of the learning algorithm is to learn an unknown $f$, i.e., obtain $O(n^d)$ bits of information. Hence by Holevo's theorem~\cite{holevo1973bounds}, one requires $\Omega(n^{d-1})$ copies of the unknown state for state identification.\footnote{We refer the reader to Montanaro~\cite[Proposition~1]{montanaro2012quantum} for a detailed exposition of this lower bound proof.}
\end{proof}

\subsection{Lower bounds}
In the last section we saw that $\Theta(n^{d-1})$ many copies of $\ket{\psi_f}$ with degree-$d$ are necessary and sufficient to learn $f$ if we allowed only entangled measurements. Earlier we saw that $O(n^d)$ many copies of $\ket{\psi_f}$ sufficed to learn $f$ using separable measurements. A natural question is: Can we learn $f$ using fewer copies if we are restricted to using only separable measurements? In the theorem below, we provide a lower bound that complements our upper bound, thereby showing $\Theta(n^d)$ copies are necessary and sufficient to learn $f$ using separable measurements.

\begin{theorem}
\label{thm:lowerboundbinaryphase}
Let $2\leq d\leq n/2$. Suppose there exists an algorithm that with probability $\geq 1/10$, learns an $n$-variate
polynomial $f\in \calP(n,d)$, given $M$ copies of the phase state $\ket{\psi_f}=\frac{1}{\sqrt{2^n}}\sum_{x \in \Fset{n}} (-1)^{f(x)}\ket{x}$, measuring each copy in  an arbitrary  orthonormal basis,  and performing an arbitrary classical processing.~Then  
\be
M=\Omega(\log |\calP(n,d)|)=\Omega(n^d).
\ee
\end{theorem}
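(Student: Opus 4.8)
The plan is to prove this lower bound via an information-theoretic argument that matches the ``abstract'' sketch given in the introduction (Eq.~\eqref{eq:abstractentropy}). I would sample $f \in \calP(n,d)$ uniformly at random, so that $H(f) = \log|\calP(n,d)| = \Theta(n^d)$ since $f$ is specified by $\binom{n}{\le d}$ independent uniform coefficients $\alpha_J \in \FF_2$. The learning algorithm, after measuring $M$ copies each in some (adaptively chosen) orthonormal basis $\{U\ket{x}\}_x$, produces an estimate $\tilde f$ that equals $f$ with probability $\ge 1/10$. By Fano's inequality (Lemma~\ref{lem:fano}), success probability bounded away from zero forces the mutual information between $f$ and the full transcript of measurement outcomes to be $\Omega(\log|\calP(n,d)|)$. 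Since mutual information is subadditive over the $M$ independent (per-copy) measurements, it suffices to show that a \emph{single} measurement of $\ket{\psi_f}$ in any fixed basis leaks only $O(1)$ bits about $f$, i.e.\ $I(f;x) = n - H(x\mid f) \le O(1)$, which is exactly Eq.~\eqref{eq:abstractentropy}. Combining these gives $M \cdot O(1) \ge \Omega(n^d)$, hence $M = \Omega(n^d)$.

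The technical heart is establishing the per-copy entropy bound $\EE_f[H(x\mid f)] \ge n - O(1)$ for an \emph{arbitrary} measurement basis $U$. I would first lower-bound the Shannon entropy by the Rényi-2 (collision) entropy, $H(x\mid f) \ge H_2(x\mid f) = -\log \sum_x P(x\mid f)^2$, where $P(x\mid f) = |\la x|U^\ast|\psi_f\ra|^2$. By Jensen's inequality applied to the convex function $-\log(\cdot)$, it then suffices to upper-bound the \emph{average} collision probability $\EE_f \sum_x P(x\mid f)^2$. The key observation is that $\sum_x P(x\mid f)^2 = \tr\big[(\ketbra{\psi_f}{\psi_f}^{\otimes 2})\, M\big]$ for a fixed operator $M$ (built from $U$ and the diagonal projectors), so everything reduces to computing the second moment $\EE_f\big[\ketbra{\psi_f}{\psi_f}^{\otimes 2}\big]$ explicitly.

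To evaluate this second moment, write $\ketbra{\psi_f}{\psi_f}^{\otimes 2} = 2^{-2n}\sum_{x,y,u,v} (-1)^{f(x)+f(y)+f(u)+f(v)}\ketbra{x,u}{y,v}$ and take the expectation over uniform $f$. The expectation of $(-1)^{f(x)+f(y)+f(u)+f(v)}$ is $1$ if $\mathrm{eval}_d(x)+\mathrm{eval}_d(y)+\mathrm{eval}_d(u)+\mathrm{eval}_d(v) = 0$ in $\FF_2^{|\calM(n,d)|}$ and $0$ otherwise, since each coefficient $\alpha_J$ is an independent fair coin. This linear-algebraic constraint over the $d$-evaluation vectors is what controls the answer: the dominant solutions are the ``trivial'' pairings ($x=y,u=v$ or $x=v,u=y$, etc.), which contribute the identity-like and swap-like terms, and the correction from nontrivial solutions is lower-order for $d \le n/2$. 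Feeding the resulting explicit form of $\EE_f[\ketbra{\psi_f}{\psi_f}^{\otimes 2}]$ into the collision-probability expression yields $\EE_f\sum_x P(x\mid f)^2 \le O(2^{-n})$, hence $\EE_f H_2(x\mid f) \ge n - O(1)$, and the single-copy bound follows.

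I expect the main obstacle to be the explicit evaluation of $\EE_f[\ketbra{\psi_f}{\psi_f}^{\otimes 2}]$ and, in particular, showing that the contribution of the nontrivial solutions to the $\FF_2$-linear constraint above is genuinely negligible uniformly over all bases $U$. Counting the quadruples $(x,y,u,v)$ whose $d$-evaluations sum to zero requires understanding the rank/structure of the $d$-evaluation map, and the bound must hold for every measurement $U$ simultaneously (the operator $M$ is basis-dependent but trace-norm bounded), so I would need the second-moment operator to be close to a fixed low-rank combination of $\Id$ and the swap operator with a controlled error term. A secondary subtlety is the adaptivity of the measurement bases across copies; this is handled by the subadditivity/chain-rule bound on mutual information (the single-copy bound holds for \emph{any} fixed basis, including one chosen based on previous outcomes), but it must be stated carefully so that the per-copy $O(1)$ leakage composes into the clean $M \cdot O(1)$ bound.
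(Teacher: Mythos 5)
Your skeleton is the same as the paper's: reduce to a per-copy entropy bound via Fano's inequality and the chain rule (which, as you note, also covers adaptive choices of basis), lower-bound the Shannon entropy by the R\'enyi-2 collision entropy, apply Jensen, and reduce everything to the second-moment operator $\EE_f[\ketbra{\psi_f}{\psi_f}^{\otimes 2}]$. Up to that point the proposal is sound and matches the paper step for step.

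The genuine gap is the step you defer as ``the main obstacle,'' and it is not a routine estimate --- it is the technical heart of the theorem. You assert that the quadruples $(x,y,u,v)$ satisfying $\mathrm{eval}_d(x)+\mathrm{eval}_d(y)+\mathrm{eval}_d(u)+\mathrm{eval}_d(v)=0$ are dominated by the trivial pairings, with a ``lower-order'' correction to be controlled uniformly over all bases $U$. The paper instead proves an \emph{exact} statement, and this is precisely where the hypothesis $d\ge 2$ enters (note the theorem is false for $d=1$: Bernstein--Vazirani learns linear phase states from $O(1)$ copies, and correspondingly for $d=1$ the constraint has $2^{3n}$ solutions rather than $O(2^{2n})$). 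Averaging over the degree-$1$ coefficients alone forces $x+y+u+v=0^n$; substituting $v=x+y+u$ into the constraints coming from the degree-$2$ coefficients then forces, by a short case analysis, that at least two of $x,y,u$ coincide. Hence the solution set consists exactly of the pairings, yielding $\EE_f[\ketbra{\psi_f}{\psi_f}^{\otimes 2}]=\frac{1}{4^n}(\id+\mathrm{SWAP})+\frac{1}{2^n}\ketbra{\Phi^+}{\Phi^+}-\frac{2}{4^n}\sum_x \ketbra{x,x}{x,x}$ with no error term at all. Once you have this identity, your worry about uniformity in $U$ evaporates: for any $U$, the $(\id+\mathrm{SWAP})$ term contributes at most $2\cdot 2^{-n}$ to the collision probability, the EPR term at most $2^{-n}$ (using $|\la\Phi^+|U^{\otimes 2}|x,x\ra|^2\le 2^{-n}$), and the last term is negative, so $\EE_f\sum_x P(x\mid f)^2\le 3\cdot 2^{-n}$ and $\EE_f[H(x\mid f)]\ge n-2$. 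Without this characterization (or a genuine counting argument replacing it), the inequality $\EE_f\sum_x P(x\mid f)^2\le O(2^{-n})$ on which your entire proof rests is unsupported.
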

\begin{proof}
Let $U$ be an $n$-qubit unitary operator. Define the probability distribution
\be
\Pr_{U}(x|f)=|\la x|U|\psi_f \ra|^2.
\ee
Let $H_{U}(x|f)$ be the Shannon entropy of $x$ sampled from $\Pr_{U}(x|f)$, i.e., 
\be
H_{U}(x|f) =-\sum_x \Pr_{U}(x|f) \log{\Pr_{U}(x|f)}.
\ee
Below we prove the following.
\begin{lemma}
\label{lemma:entropy}
Suppose $d\ge 2$ and $U$ is an $n$-qubit unitary. Then
\be
\label{entropy_bound}
\EE_f  [H_{U}(x|f)] \ge n-2,
\ee
where the expectation is over a uniformly random $f\in \calP(n,d)$
\end{lemma}
We will assume the lemma now and prove the theorem statement. Below we assume that $f\in \calP(n,d) $ is picked uniformly at random.
Suppose we measure the $j$-th copy of $|\psi_f \ra$
in a basis $\{ U_j^\dag |x\ra\}_x$ for some $n$-qubit unitary $U_j$.
Let $x^1,x^2,\ldots,x^M\in \FF_2^n$ be the measured bit strings. 
The joint probability  distribution of $f$ and $x$ is given by
\be
\Pr(f,x) = \frac1{| \calP(n,d) |} \prod_{j=1}^M \Pr_{U_j}(x^j|f).
\ee
The conditional entropy of $x$ given $f$ is
\be
H(x|f) =
\frac1{| \calP(n,d) |} \sum_{f\in  \calP(n,d) } \; \sum_{j=1}^M H_{U_j}(x^j|f)
=\sum_{j=1}^M \EE_f  H_{U_j}(x^j|f) \ge M(n-2),
\ee
where the inequality used Lemma~\ref{lemma:entropy}.
It follows that the conditional entropy of $f$ given $x$ obeys 
\be
\label{Hfx}
H(f|x) = H(x|f) -H(x) + H(f) \ge M(n-2)- H(x)+H(f).
\ee
Since $H(x)\le n M$ and $H(f)=\log |\calP(n,d)|$, we get
\be
H(f|x) \ge \log |\calP(n,d)| - 2M.
\ee
Assuming there exists a learning algorithm that given $x$ learns $f$ with probability $\geq 1/10$, by Fano's inequality (Lemma~\ref{lem:fano}), we know that $H(f|x)\leq H_b(1/10)+(1/10)\cdot \log |\calP(n,d)|$. It remains to prove Lemma~\ref{lemma:entropy}. 
\begin{proof}[\bf Proof of Lemma~\ref{lemma:entropy}]
It is  known that the Shannon entropy of a distribution is lower bounded by its
Renyi entropy of order two. Thus we have 
\be
\label{H1H2}
H_U(x|f)\ge -
\log{\left[ R_U(x|f)  \right]},
\ee
where
\be
R_U(x|f):= \sum_x (\Pr_{U}(x|f))^2 = \sum_x \la x,x|  U^{\otimes 2}|\psi_f \ra\la \psi_f|^{\otimes 2} (U^\dag)^{\otimes 2} |x,x\ra.
\ee
Taking the expected value of Eq.~(\ref{H1H2}) and noting that  the function $-\log{(\cdot)}$ is convex, one gets
\be
\label{EHbound}
\EE_f  [H_U(x|f)]  \ge -\EE_f\left[\log{ R_U(x|f)}  \right] \ge -\log{ \EE_f\left[ R_U(x|f) \right]}.
\ee
Below we prove 
\begin{prop}
\label{prop:1}
Let $f \in \calP(n,d) $ be a  uniformly random degree-$d$ polynomial with $d\ge 2$.
Then 
\be
\label{second_moment}
\EE_f [|\psi_f \ra\la \psi_f|^{\otimes 2}] = 
\frac1{4^n} (\id+\mathrm{SWAP}) + \frac1{2^n} |\Phi^+\ra\la \Phi^+| - \frac2{4^n} \sum_x |x,x\ra\la x,x|,
\ee
where $\mathrm{SWAP}$ swaps two $n$-qubit registers and
$|\Phi^+\ra = 2^{-n/2} \sum_x |x,x\ra$
is the EPR state of $2n$ qubits.
\end{prop}
Combining the proposition and the bound $|\la \Phi^+|U^{\otimes 2}|x,x\ra|^2\le 2^{-n}$
gives 
\be
\EE_f [R_U(x|f)]  \le \frac{3}{2^n}  \le \frac1{2^{n-2}}.
\ee
Substituting this into 
Eq.~(\ref{EHbound}) completes the proof. We now prove the proposition. 
\begin{proof}[\bf Proof of Proposition~\ref{prop:1}]
Let
\[
Q=\EE_f \big[|\psi_f \ra\la \psi_f|^{\otimes 2}\big] =\frac1{4^n} \sum_{w,x,y,z} E(w,x,y,z) |w,x\ra\la y,z|,
\]
where
\[
E(w,x,y,z)=\EE_f \big[(-1)^{f(w) + f(x)+f(y)+f(z)}\big].
\]
Our proof strategy uses a couple of lemmas from~\cite[Proposition~5]{bravyi2019simulation} and~\cite[Lemma~11]{bremner2016average}.
\begin{claim}
$E(w,x,y,z)=0$ unless $w+x+y+z =0^n$ and at least two of the strings $w,x,y$ coincide.
\end{claim}
\begin{proof}
We can write
\[
f(v) = \sum_{p=1}^n A_p v_p + \sum_{1\le p<q\le n} A_{p,q} v_p v_q + \ldots {\pmod 2}
\]
where $A_p\in \{0,1\}$ and $A_{p,q}\in \{0,1\}$ are picked uniformly at random 
and dots represents higher order terms.
Taking the expectation value over $A_p$ gives 
\[
\EE_{A_p} \Big[(-1)^{A_p(w_p + x_p + y_p + z_p)}\Big] = 0 \quad \mbox{unless} \quad w_p+ x_p + y_p + z_p = 0 {\pmod 2}.
\]
This proves the first part of the claim. Taking the expectation value over $A_{p,q}$ gives
\[
\EE_{A_{p,q}} \Big[(-1)^{A_{p,q}(w_pw_q + x_px_q + y_py_q + z_p z_q)}\Big] = 0
\quad \mbox{unless} \quad w_pw_q+ x_px_q + y_py_q + z_pz_q = 0 {\pmod 2}.
\]
Substituting $z_p=x_p+y_p+w_p {\pmod 2}$ and $z_q=x_q+y_q+w_q {\pmod 2}$ in the above
expression one concludes that $E(w,x,y,z)=0$ unless $w+x+y+z =0^n{\pmod 2}$ and
\be
\label{eq1}
(w_p x_q + x_p w_q ) + (x_p y_q + y_p x_q)  + (w_p y_q + y_p w_q) = 0 {\pmod 2}.
\ee
If $w=x=y$ then we are done. Otherwise there exists an index $p\in [n]$
such that exactly two of the variables $w_p$, $x_p$, $y_p$ coincide. 
Since Eq.~(\ref{eq1}) is symmetric under permutations of $w,x,y$ we can assume wlog that
$x_p=y_p\ne w_p$. Consider two cases:\\
{\em Case~1:} $x_p=y_p=0$ and $w_p=1$. Then Eq.~(\ref{eq1}) gives $x_q=y_q$ for all $q\ne p$. Thus $x=y$.\\
{\em Case~2:} $x_p=y_p=1$ and $w_p=0$. Then Eq.~(\ref{eq1}) gives  $w_q+y_q+x_q+w_q=0{\pmod 2}$ for all $q\ne p$.
Thus $x_q=y_q$ for all $q\ne p$, that is, $x=y$.
\end{proof}
Note  that $E(w,x,y,z)=1$ whenever $w+x+y+z =0^n{\pmod 2}$ and at least two of the strings $w,x,y$ coincide.
For example, if $w=x$ then one must have $y=z$ and thus the sum $f(w)+f(x)+f(y)+f(z)$ is zero modulo two.
This leads to
\[
Q=\frac1{4^n} \sum_{w,x,y,z} E(w,x,y,z) |w,x\ra\la y,z|
=\frac1{4^n} (I+\mathrm{SWAP}) + \frac1{2^n} |\Phi^+\ra\la \Phi^+| - \frac2{4^n} \sum_x |x,x\ra\la x,x|.
\]
Here the last term is introduced to avoid overcounting. 
\end{proof}
This concludes the proof of the proposition and Lemma~\ref{lemma:entropy}. 
\end{proof}
The proof of this lemma concludes the proof of Theorem~\ref{thm:lowerboundbinaryphase}. 
\end{proof}

\section{Learning sparse and low Fourier-degree binary phase states}\label{sec:learning_sparse_bps}
In this section, we first consider the problem of learning binary phase states (Eq.~\eqref{eq:binary_phase_state}) using separable measurements under the assumption that $f$ is an $s$-sparse $\FF_2$-degree $d$ Boolean function written as
\begin{equation}
    f(x) = \sum \limits_{J\subseteq [n]} \alpha_{J} \prod_{i \in J} x_i \pmod 2
    \label{eq:anf_sparse_polynomial_qs}
\end{equation}
where $|\{J:\alpha_J\neq 0\}|=s$, i.e., there are $s$ terms in the $\FF_2$ representation of $f$. 

\subsection{Sparse Learning Algorithm}
Our algorithm for learning sparse binary phase states and analysis of its sample complexity is similar to that in Section~\ref{sec:learning_bps_separable_measurements}. Similar to Algorithm~\ref{algo:learn_bps_separable_measurements} in the $t$-th round, for an $m \geq 1$ to be fixed later (where $m$ is the sample complexity), we use RPDS along $e_t$ on $m$ copies of $\ket{\psi_f}$ to obtain $m$ samples $\{\left(y^{(k)}, D_t f(y^{(k)})\right)\}_{k \in [m]}$ where $y^{(k)} \in \{0,1\}^{n-1}$ is uniformly random. We now describe how to learn $D_t f$ using these $m$ examples.

Let $A_t \in \FF_2^{m\times |\calM(n-1,d-1)|}$ be the transposed $(d-1)$-evaluation matrix (defined in Eq.~\eqref{eq:d-evaluation_at_point_x}) such that the $k$th row of $A_t$ is given by the vector $\big(y^{(k)}_{S}\big)_{|S|\leq d-1}$, where $y^{(k)}_S=\prod_{j\in S}y^{(k)}_j$. We can then write a system of linear equations $A_t \beta_t = D_t f(\mathbf{y})$ where $\beta_t \in \FF_2^{|\calM(n-1,d-1)|}$ is the vector of coefficients corresponding to monomials in $\calM(n-1,d-1)$ and $(D_t f(\mathbf{y}))_k = D_t f(y^{(k)})$. Instead of explicitly solving $A_t \beta_t = D_t f(\mathbf{y})$ (which we did in Section~\ref{sec:learning_bps_separable_measurements}), we propose to estimate the unknown coefficients vector $\beta_t$ by solving the following linear program by drawing connection to compressed sensing~\cite{draper2009compressed}.
\begin{equation}
    \hat{\beta}_t \in \argmin \norm{\beta}_1 \text{ such that }  A_t \beta = D_t f(\mathbf{y}) {\pmod 2} \quad \text{ over } \beta \in \FF_2^{|\calM(n-1,d-1)|}.
    \label{eq:optimization_learning_sparse_bps}
\end{equation}
The solution vector $\hat{\beta}_t$ produced by the above linear program corresponds to solving the subset of coefficients $\alpha_J$ in the $\FF_2$-representation of $f$ corresponding to sets $J$ which contain $t$. Like in Section~\ref{sec:learning_bps_separable_measurements}, we repeat the above procedure over $n$ rounds to learn $D_1 f, D_2 f,\ldots, D_n f$ and then eventually learn $f$ using Fact~\ref{fact:stitchingpoly}.  We give details of this algorithm in Algorithm~\ref{algo:learn_sbps_separable_measurements}. 
\begin{algorithm}[h]
    \caption{Learning sparse binary phase states through separable measurements} \label{algo:learn_sbps_separable_measurements}
    \textbf{Input}: Access to $M = O(2^d s d n \log n)$ copies of $\ket{\psi_f}$ where $f \in \mathcal{P}(n,d,s)$
    \begin{algorithmic}[1]
    \For{qubit $t=1,\ldots,n$}
        \State Set $m=M/n$
        \State Perform RPDS to obtain $\{(y^{(k)},D_t f(y^{(k)})\}_{k \in [m]}$ by measuring $m$ copies of $\ket{\psi_f}$.
        \State Solve linear program $\hat{\beta}_t \in \argmin \norm{\beta}_1 \text{ s.t. } A_t \beta = D_t f(\mathbf{y}) \text{ for } \beta \in \mathF^{|\calM(n-1,d-1)|}$ (Eq.~\eqref{eq:optimization_learning_sparse_bps})
    \EndFor 
    \State Use Fact~\ref{fact:stitchingpoly} to learn $f$ using $D_1 f,\ldots,D_n f$ (let $\tilde{f}$ be the output).
    \end{algorithmic}
    \textbf{Output}: Output $\tilde{f}$
\end{algorithm}

We now argue the correctness of the algorithm. 
\begin{theorem}\label{thm:sample_complexity_sparse_degree_d_polynomial_qs}
An $n$-qubit state $\ket{\psi_f}$ with the unknown Boolean function $f$ of given $\FF_2$-degree $d$ and $\FF_2$-sparsity $s \leq |\calM(n-1,d-1)|/2$ can be learned with an algorithm that consumes $M$ copies of $\ket{\psi}$ with probability $1 - 2^{- \Omega(n)}$ provided that $M = O(2^d s d n \log n)$.  Moreover the algorithm only uses $\{X,Z\}$ single-qubit measurements.
\end{theorem}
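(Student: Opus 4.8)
The plan is to run Algorithm~\ref{algo:learn_sbps_separable_measurements}, which is identical to the dense procedure of Theorem~\ref{thm:binaryphasedegreed} except that the exact linear solve is replaced by the minimum-$\ell_1$ recovery of Eq.~\eqref{eq:optimization_learning_sparse_bps}, so that sparsity can be exploited for a smaller sample count. As before, I would reduce learning $f$ to learning its $n$ directional derivatives and stitch them via Fact~\ref{fact:stitchingpoly}. In round $t$, Random Partial Derivative Sampling along $e_t$ (Def.~\ref{def:rpds}) --- which uses only single-qubit $X$ and $Z$ measurements, giving the claimed measurement type --- produces $m = M/n$ examples $(y^{(k)}, D_t f(y^{(k)}))$ with $y^{(k)} \in \FF_2^{n-1}$ uniform. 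By Fact~\ref{lemma:existence_polynomial}, $D_t f$ is a degree-$\le(d-1)$ polynomial that is itself $s$-sparse, so its coefficient vector $\beta_t^\star$ (indexed by $\calM(n-1,d-1)$, with $N := |\calM(n-1,d-1)| = O(n^{d-1})$) has Hamming weight at most $s$. With $A_t$ the transposed $(d-1)$-evaluation matrix of Eq.~\eqref{eq:d-evaluation_at_point_x}, the examples impose $A_t \beta_t^\star = D_t f(\mathbf{y})$, and the program of Eq.~\eqref{eq:optimization_learning_sparse_bps} returns a minimizer, which I must show equals $\beta_t^\star$.

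The first step is a deterministic recovery guarantee. Since every coordinate lies in $\FF_2$, $\norm{\beta}_1$ is just the Hamming weight $\norm{\beta}_0$, so the program returns a minimum-weight solution of $A_t\beta = D_t f(\mathbf{y})$. I would show $\beta_t^\star$ is the unique such minimizer provided $\ker A_t$ contains no nonzero vector of weight at most $2s$. Indeed, any other solution $\beta'$ differs from $\beta_t^\star$ by $\gamma = \beta' + \beta_t^\star \in \ker A_t$; if this $\gamma$ is nonzero and has weight exceeding $2s$, then $\norm{\beta'}_0 \ge \norm{\gamma}_0 - \norm{\beta_t^\star}_0 > 2s - s = s \ge \norm{\beta_t^\star}_0$, so $\beta'$ is strictly heavier and $\beta_t^\star$ wins.

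The second step is to certify this null-space condition with high probability over the random samples. A nonzero weight-$\le 2s$ element of $\ker A_t$ is exactly a nonzero $g \in \calP(n-1,d-1)$ with at most $2s$ monomials such that $g(y^{(k)}) = 0$ for all $k \in [m]$. By Lemma~\ref{lem:schwartz} applied in degree $\le d-1$, each fixed such $g$ vanishes on all $m$ independent uniform points with probability at most $(1-2^{-(d-1)})^m \le \exp(-\Omega(m 2^{-d}))$. A union bound over the $\binom{N}{\le 2s}$ candidate sparse polynomials, whose count I would bound by $2^{O(s\log(N/s))} = 2^{O(sd\log n)}$ using Fact~\ref{fact:binomial} (legitimised by the hypothesis $s \le N/2$), gives per-round failure probability at most $2^{O(sd\log n) - \Omega(m 2^{-d})}$. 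Unioning over the $n$ rounds and invoking Fact~\ref{fact:stitchingpoly} --- whose majority vote tolerates errors in a minority of the $\beta_t$ --- finishes the proof.

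The main obstacle is calibrating $m$ in this last step so that the two competing exponents balance at the advertised sample complexity. Choosing $m = O(2^d sd\log n)$, hence $M = O(2^d sdn\log n)$, makes the Schwartz-Zippel decay $\Omega(m 2^{-d}) = \Omega(sd\log n)$ dominate the union-bound exponent $O(sd\log n)$ by a constant factor, so the failure probability becomes exponentially small and the high-probability guarantee follows. Two subtleties deserve care: sparsity enters only through the cardinality $\binom{N}{\le 2s}$ of the union, since the per-polynomial vanishing bound of Lemma~\ref{lem:schwartz} depends on degree and not on sparsity; and the hypothesis $s \le N/2$ is exactly what is needed to apply the binomial-sum estimate of Fact~\ref{fact:binomial}. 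I would also verify that the recovery program can be solved efficiently, but that is a runtime matter separate from the sample-complexity claim.
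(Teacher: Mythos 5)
Your proposal is correct and follows essentially the same route as the paper: RPDS in each direction, the $\ell_1$ program of Eq.~\eqref{eq:optimization_learning_sparse_bps}, Schwartz--Zippel (Lemma~\ref{lem:schwartz}) plus a union bound controlled via Fact~\ref{fact:binomial}, and stitching via Fact~\ref{fact:stitchingpoly}, with the same choice $m = O(2^d s d \log n)$ per round. The only difference is bookkeeping: the paper unions directly over alternative feasible solutions $\beta$ of weight at most $s$ (so Fact~\ref{fact:binomial} applies under exactly the stated hypothesis $s \le N/2$), whereas your union over weight-$\le 2s$ kernel vectors requires $2s \le N/2$ for a literal application of Fact~\ref{fact:binomial} --- a trivially patched constant-factor point, since in the remaining regime $N/4 < s \le N/2$ the candidate count is at most $2^N \le 2^{4 s \log(N/s)}$ anyway.
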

\begin{proof}
Algorithm~\ref{algo:learn_sbps_separable_measurements} learns $f$ by learning $D_1 f,\ldots,D_n f$ and thereby learns $f$ completely. Here we prove that each $D_t f$ can be learned with with $m=O(2^d s d \log n)$ copies of $\ket{\psi_f}$ and an exponentially small probability of error. This results in an overall sample complexity of $O(2^d s d n \log n)$ for learning $f$ and hence $\ket{\psi_f}$. Let us consider round $t$ in Algorithm~\ref{algo:learn_sbps_separable_measurements}. We generate $m$ samples $\{y^{(k)},D_t f(y^{(k)}\}$ through RPDS. Using these $m$ samples, we can solve Eq.~\eqref{eq:optimization_learning_sparse_bps} and obtain the solution $\hat{\beta}_t$. An error occurs when $\hat{\beta}_t \neq \beta_t^\star$ where we have denoted the true solution by $\beta^\star_t$.\footnote{Note that the true coefficients $(\beta^\star_t)_S = \alpha_{S \cup t}$ where $\alpha_J$ is the true coefficient in the $\FF_2$-representation of $f$, corresponding to set $J$.} Probability of this error occurring is then given by
\begin{align}
    \Pr[\hat{\beta}_t \neq \beta^\star_t] &= \Pr[\exists \beta \in \{0,1\}^N, \beta \neq \beta^\star_t \, \mid \, A_t \beta = D_t f(\mathbf{y}) \cap \norm{\beta}_1 \leq \norm{\beta^\star_t}_1]
\end{align}
where we have denoted $N=|\calM(n-1,d-1)|$. Below we prove that the probability of this bad event can be bounded through Schwartz-Zippel (Lemma~\ref{lem:schwartz_zippel}). Define event
\begin{align}
\mathsf{BAD}(y^{(1)},\ldots,y^{(m)}) =[\exists \beta \in \{0,1\}^N, \beta \neq \beta^\star_t \, \mid \, A_t \beta = D_t f(\mathbf{y}) \cap \norm{\beta}_1 \leq \norm{\beta^\star_t}_1].
\end{align}
Let us consider the $k$th row of $A_t$. We note that the corresponding equation can be rewritten as
\begin{align}
    (A_t)_k \cdot \beta &= (D_t f(\mathbf{y}))_k \pmod 2 \\
    (A_t)_k \cdot \beta &= (A_t)_k \beta_t^\star \pmod 2 \\
    (A_t)_k \cdot (\beta - \beta_t^\star)  &= 0 \pmod 2
\end{align} 
As $\beta \neq \beta_t^\star$, this means there exists a non-zero polynomial $g \in \calP(n-1,d-1)$ corresponding to the coefficients $\beta - \beta_t^\star \pmod 2$. Applying Lemma~\ref{lem:schwartz} and by noting that $y^{(j)} \in \FF_2^{(n-1)}$ are independent and uniformly distributed, we have that
\begin{equation}
    \Pr[g(y^{(1)}) = g(y^{(2)}) = \cdots = g(y^{(m)}) = 0] \leq (1-2^{-d})^{m}\leq e^{-m2^{-d}}
    \label{eq:schwartz_zippel_lemma_multiple_samples2}
\end{equation}
Let $\mathcal{P}_{\text{nnz}}(n,d,s)$ be the set of all degree-$d$ polynomials $g:\FF_2^n \rightarrow \FF_2$ with sparsity $s$ which are not identically zero. By union bound and Eq.~\eqref{eq:schwartz_zippel_lemma_multiple_samples2}, we have that
\begin{align}
    \Pr[\mathsf{BAD}(y^1,\ldots,y^m)] &\leq |\mathcal{P}_{\text{nnz}}(n-1,d-1,s)| \cdot (1-2^{-d})^m \\
    &= \sum \limits_{ \substack{\{\beta \in \{0,1\}^N, \beta \neq \beta^\star_t : \\ \norm{\beta}_1 \leq \norm{\beta^\star_t}_1\} }} (1-2^{-d})^m \\
    &\leq \sum \limits_{\ell=1}^{\norm{\beta^\star_t}_1} |\{\beta \in \{0,1\}^N \, : \, \norm{\beta}_1 = \ell \}| 2^{-m 2^{-d} (\ln 2)} \\
    &= \sum \limits_{\ell=1}^{s} \binom{N}{\ell} 2^{-m 2^{-d} (\ln 2)}\leq  2^{2 s 
    \log(N/s) - m 2^{-d} (\ln 2)},
\end{align}
where the final inequality used Fact~\ref{fact:binomial}. We can thus learn all the coefficients $\beta^\star_t$ with an exponentially small probability of error by choosing $m=O(2^d s \log N)$. We need to repeat this over all the $n$ qubits, giving an overall sample complexity of $O(2^d s d n \log n)$ (by noting that $N=\calM(n-1,d-1)=O(n^d)$) of learning sparse binary phase states using only separable measurements. Using Fact~\ref{fact:stitchingpoly}, we can completely learn $f$. Observe that Algorithm~\ref{algo:learn_sbps_separable_measurements} uses only single qubit $\{X,Z\}$ measurements. 
\end{proof}

\subsection{Learning low Fourier-degree phase states}
We conclude this section with a theorem about learning low Fourier-degree phase states. 
\begin{theorem}
\label{thm:fourierbinaryphasedegreed}
Consider binary phase states $\ket{\psi}=\frac{1}{\sqrt{2^n}}\sum_x(-1)^{f(x)}\ket{x}$ where $f:\01^n\rightarrow \01$ has Fourier-degree $d$. Then $\widetilde{O}(2^{2d})$ copies of $\ket{\psi}$ are sufficient to identify $f$ with probability $\geq 2/3$. 
\end{theorem}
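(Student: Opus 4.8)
The plan is to treat $\{\ket{\psi_f} : f \text{ has Fourier-degree} \le d\}$ as a finite ensemble and identify the unknown $f$ with the pretty good measurement, invoking Theorem~\ref{thm:pgm}. Two quantities govern the resulting sample complexity: the pairwise fidelity $F=\max_{f\ne g}|\la \psi_f|\psi_g\ra|$ and the number $m$ of candidate functions. I would first pin down $F$ and then bound $m$, so that $M=O(\log(m/\delta)/\log(1/F))$ copies suffice.

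For the fidelity I would use the standard fact that the $\FF_2$-degree of a Boolean function never exceeds its real (Fourier) degree: writing $f$ in its real multilinear form in the variables $x_i\in\{0,1\}$ produces integer coefficients whose reduction mod $2$ is exactly the $\FF_2$-representation, and the affine substitution $x_i\mapsto(1-(-1)^{x_i})/2$ preserves total degree. Hence Fourier-degree $\le d$ forces $\FF_2$-degree $\le d$. Normalizing $f(0^n)=0$ (legitimate since $\ket{\psi_{1\oplus f}}=-\ket{\psi_f}$, so each state determines $f$), for distinct $f,g$ the difference $h=f\oplus g$ is a nonzero, non-constant $\FF_2$-polynomial of degree $\le d$. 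Applying Lemma~\ref{lem:schwartz} to both $h$ and $1\oplus h$ gives $\Pr_x[f(x)\ne g(x)]\in[2^{-d},1-2^{-d}]$, and since $\la\psi_f|\psi_g\ra=1-2\Pr_x[f\ne g]$ this yields $F\le 1-2^{1-d}$, so $1/\log(1/F)=O(2^{d-1})$.

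For the count I would invoke the junta theorem of Nisan--Szegedy~\cite{nisan1994degree} (the same source as Lemma~\ref{lem:schwartz}): a Boolean function of degree $\le d$ depends on at most $d\,2^{d-1}$ variables. This renders the ensemble effectively finite and is what one would use to bound $m$ in terms of $d$. Substituting $F$ and $m$ into Theorem~\ref{thm:pgm} with $\delta=1/3$ gives $M=O(2^{d-1}\log m)$, so the claimed $\widetilde O(2^{2d})$ follows \emph{provided} one can establish $\log m=\widetilde O(2^d)$.

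I expect that last count to be the main obstacle, and it is where the real work lies; the fidelity estimate above is routine by comparison. The naive route only says each $f$ lives on $\le d\,2^{d-1}$ coordinates and is then an arbitrary degree-$\le d$ function on them, of which there are up to $2^{\binom{d2^{d-1}}{\le d}}=2^{2^{\Theta(d^2)}}$ — doubly exponential in $d$ and far too large. Moreover this bottleneck cannot be bypassed by the weight-distribution/Gram-sum refinement used in Theorem~\ref{thm:entangledupperbound}: grouping the terms of $\tfrac1m\sum_{f\ne g}|\la\psi_f|\psi_g\ra|^{2k}$ by $\textsf{wt}(f\oplus g)$ controls the few small-weight (near-fidelity-$1$) differences cheaply, but the many differences with weight near $1/2$ still contribute $\approx m\cdot 2^{-2k}$, again forcing $k\gtrsim\log m$. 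Thus the crux is a sharp structural bound showing that the number of $\pm1$-valued functions of Fourier-degree $\le d$ is only $2^{\widetilde O(2^d)}$ — exploiting the extra rigidity of low-Fourier-degree (as opposed to merely low-$\FF_2$-degree) functions beyond what the plain junta bound provides.
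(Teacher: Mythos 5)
Your skeleton is the same as the paper's: treat the degree-$d$ phase states as a finite ensemble, apply the pretty good measurement via a bound of the form $M=O(\log(m/\delta)/\log(1/F))$ (the paper invokes Theorem~1 of~\cite{montanaro2019pretty}, which is the same statement as Theorem~\ref{thm:pgm}), and control $F$ by Schwartz--Zippel. Your fidelity step matches the paper's and is in fact slightly more careful: the paper applies Lemma~\ref{lem:schwartz} directly to the real multilinear polynomial $f-g$ to get $\la \psi_f|\psi_g\ra = 1-2\Pr_x[f(x)\neq g(x)]\le 1-2^{-d+1}$, whereas you also normalize away the complement $f\mapsto 1\oplus f$ and bound $\Pr_x[f\ne g]$ from both sides, which is what one actually needs to control $|\la\psi_f|\psi_g\ra|$. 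The genuine gap is exactly where you said it is --- the count --- and the ingredient you are missing is the \emph{granularity} of Fourier coefficients. The paper does not route the count through the junta theorem at all: it uses the standard fact (cited to~\cite{o2014analysis}) that every Fourier coefficient of a Boolean function of Fourier-degree $d$ is an integer multiple of $2^{-d}$. Combined with Parseval, $\sum_S \widehat{f}(S)^2\le 1$, each non-zero coefficient has magnitude at least $2^{-d}$, so there are at most $2^{2d}$ of them, and each ranges over only $O(2^d)$ admissible values. This is precisely the ``extra rigidity of low Fourier degree beyond the junta bound'' whose existence you conjectured in your last paragraph; it is a textbook fact, not new work. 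The paper concludes that the number of candidate functions is $(2^d)^{2^{2d}}$, written there as $m=2^{2d2^d}$, and then $M=O\big((\log m)/2^{-d}\big)=\widetilde{O}(2^{2d})$.

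Two caveats, neither of which changes the verdict. First, the paper's counting step is itself loose: it charges only for the \emph{values} of the non-zero coefficients, not for which subsets $S$ carry them nor for the identity of the relevant variables inside $[n]$ (this is the one place where your junta theorem would genuinely be needed, and it introduces $n$-dependence, e.g.\ the $\binom{n}{d}$ parities of degree $d$); moreover $(2^d)^{2^{2d}}$ equals $2^{d\cdot 2^{2d}}$, not $2^{2d\cdot 2^d}$, and feeding $\log m = d\,2^{2d}$ into the PGM bound gives $\widetilde{O}(2^{3d})$ rather than $\widetilde{O}(2^{2d})$. So your instinct that one needs $\log m=\widetilde{O}(2^d)$ to honestly reach the stated exponent is correct, and the paper's own argument would need tightening on this point --- but granularity-plus-Parseval is unambiguously the idea it uses to escape your doubly-exponential $2^{2^{\Theta(d^2)}}$ bottleneck. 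Second, your side claim that the Gram-sum refinement of Theorem~\ref{thm:entangledupperbound} can never beat $k\gtrsim \log m$ is false as a general statement: for the sub-ensemble of parities all pairwise inner products vanish and one copy suffices even though $m=\Theta(n)$; the refinement fails here for a different reason (you lack a weight-distribution theorem for this ensemble), so it should not be offered as a structural obstruction.
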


\begin{proof}
The proof is a simple application of couple of results. Observe that the number of degree-$d$ Boolean functions is $2^{2d2^d}$. {To see this, observe that~\cite{o2014analysis} for a degree-$d$ Boolean function, all the Fourier coefficients are integer multiple of $2^{-d}$, and since by Parseval's theorem $\sum_S \widehat{f}(S)^2\leq 1$, the number of non-zero Fourier coefficients is $\leq 2^{2d}$. Hence the number of degree-$d$ Boolean functions is $(2^{d})^{2^{2d}}=2^{2d2^d}$.} Additionally,  for degree-$d$ functions $f,g$ we have that 
$$
\langle \psi_f | \psi_g\rangle =\mathop{\mathbb{E}}_x[(-1)^{f(x)+g(x)}]=1-2\Pr_{x}[f(x)\neq g(x)]\leq 1-2^{-d+1}
$$
where the final inequality uses the Schwartz-Zippel lemma (Lemma~\ref{lem:schwartz}) and the fact that $f-g$ has degree at most $d$. Now, putting this together with Theorem~1 in~\cite{montanaro2019pretty}, we get that the number of copies of $\ket{\psi_f}$ sufficient to learn $f$ is given by
$
O\big(\big(\log 2^{2d2^d}\big)/2^{-d}\big)=\widetilde{O}(2^{2d}).
$
\end{proof}

\section{Learning generalized phase states}
\label{sec:generalizedphase}
In this section, we consider the problem of learning generalized phase states $\ket{\psi_f}$ 
as given by Eq.~(\ref{eq:generalized_phase_state}), assuming that $f$ is a 
degree-$d$ $\ZZ_q$-valued polynomial,
 $f\in \calP_q(n,d)$. Note that since our goal is to learn $\ket{\psi_f}$ up to an overall phase, we shall identify polynomials which differ only by a constant~shift.
\begin{dfn}
Polynomials $f,g\in \calP_q(n,d)$ are  equivalent if $f(x)-g(x)$ is a constant function.
\end{dfn}
To simplify notation, here and below we omit modulo operations keeping in mind that degree-$d$ polynomials take values in the ring $\ZZ_q$. Thus all equal or not-equal 
constraints  that involve a polynomial's value are modulo $q$.

\subsection{Learning using separable measurements}

Let $q\ge 2$ and $d\ge 1$ be  integers.
For technical reasons, we shall assume that $q$ is even.
Let $\omega_q=e^{2\pi i/q}$.
Our main result is as follows.
\begin{theorem}
\label{thm:generalizedphasedegreed}
Let $d\leq n/2$. There exists an  algorithm that uses $M=O(2^d q^3 n^{d}\log{q})=O(n^d)$ copies of a generalized phase state $\ket{\psi_f}=\frac{1}{\sqrt{2^n}}\sum_{x\in \{0,1\}^n} \omega_q^{f(x)}\ket{x}$ with an unknown polynomial $f\in \calP_q(n,d)$  and  outputs a polynomial $g\in \calP_q(n,d)$ such that  $g$ is equivalent to $f$ with the probability at least $1-2^{-\Omega(n)}$. The quantum part of the algorithm requires only single-qubit unitary gates and measurements in the standard basis.

Moreover, suppose there exists an algorithm that with probability $\geq 1/10$, learns an $n$-variate
polynomial $f\in \calP_q(n,d)$, given $k$ copies of the phase state $\ket{\psi_f}$, measuring each copy in  an arbitrary  orthonormal basis,  and performing an arbitrary classical processing.~Then 
\be
\label{main_lower}
M=\Omega(n^d).
\ee
\end{theorem}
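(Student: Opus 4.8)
For the upper bound I would adapt the random--partial--derivative scheme of Section~\ref{sec:learning_bps_separable_measurements} to $\ZZ_q$, running $n$ rounds indexed by $t\in[n]$. In round $t$, measuring every qubit but the $t$-th in the standard basis leaves the $t$-th qubit (up to global phase) in the state $(\ket0+\omega_q^{c}\ket1)/\sqrt2$ with $c=(D_tf)(y)$, for a uniformly random outcome $y\in\FF_2^{n-1}$. For $q>2$ these states are not pairwise orthogonal, so instead of reading off $c$ I would apply the POVM $\{\tfrac2q\ketbra{\phi_b}{\phi_b}\}_{b\in\ZZ_q}$, $\ket{\phi_b}=(\ket0-\omega_q^{b}\ket1)/\sqrt2$, which resolves the identity because $\sum_b\omega_q^{b}=0$. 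Outcome $b$ then appears with probability $\tfrac1{2q}|1-\omega_q^{c-b}|^2$, so $b=c$ never occurs while every $b\neq c$ occurs with probability $\ge\tfrac2q\sin^2(\pi/q)=\Omega(q^{-3})$. Each copy therefore yields a uniformly random $y$ together with a value $b$ certified to satisfy $(D_tf)(y)\neq b$.

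Collecting $m$ such not-equal constraints, I would output any $g'\in\calP_q(n-1,d-1)$ consistent with all of them. If $g'\ne D_tf$ then $h=D_tf-g'$ is a nonzero degree-$\le d-1$ polynomial over $\ZZ_q$, and by a Schwartz--Zippel variant for $\ZZ_q$-valued polynomials that I would establish (either $h$ is a nonzero constant, vanishing nowhere, or it is nonconstant and then $\Pr_y[h(y)\neq0]\ge 2^{-(d-1)}$) a single sample rules $g'$ out with probability $\ge\Omega(q^{-3})\cdot2^{-(d-1)}$. A union bound over the $q^{O(n^{d-1})}$ candidates shows $m=O(2^dq^3n^{d-1}\log q)$ samples pin down $D_tf$ with failure probability $2^{-\Omega(n)}$, for a total of $M=O(2^dq^3n^d\log q)=O(n^d)$. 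Finally I recover $f$ up to an additive constant (i.e.\ up to equivalence) from $D_1f,\dots,D_nf$ exactly as in Fact~\ref{fact:stitchingpoly}, which carries over verbatim to $\ZZ_q$ since $D_tf$ collects precisely the monomials of $f$ containing $x_t$. Only single-qubit unitaries and standard-basis measurements are used.

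For the lower bound I would follow Theorem~\ref{thm:lowerboundbinaryphase}: with $\Pr_U(x|f)=|\la x|U|\psi_f\ra|^2$ and $H_U(x|f)$ its Shannon entropy, bounding Shannon entropy by Rényi-two entropy and using convexity of $-\log$ reduces the task to bounding $\EE_f[|\psi_f\ra\la\psi_f|^{\otimes2}]$. The one new computation is a $\ZZ_q$-analogue of Proposition~\ref{prop:1}. Writing $E(w,x,y,z)=\EE_f[\omega_q^{f(w)+f(x)-f(y)-f(z)}]$ and averaging over the independent uniform coefficients $\alpha_J\in\ZZ_q$, $E$ vanishes unless $\prod_{i\in J}w_i+\prod_{i\in J}x_i\equiv\prod_{i\in J}y_i+\prod_{i\in J}z_i\pmod q$ for all $|J|\le d$. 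The crucial difference from the binary case is that each monomial is $\{0,1\}$-valued, so for $q\ge4$ these are exact integer equalities: the degree-$1$ equalities split $[n]$ into a block $T$ where $\{w_i,x_i\}=\{y_i,z_i\}=\{0,1\}$ and a block where all four strings agree, and the degree-$2$ equalities (available since $d\ge2$) force $w_i=w_j\!\iff\! y_i=y_j$ on $T$, leaving only the families $(w{=}y,x{=}z)$ and $(w{=}z,x{=}y)$. Thus the $\ket{\Phi^+}$ term of the binary case disappears and
\be
\EE_f[|\psi_f\ra\la\psi_f|^{\otimes2}]=\frac1{4^n}\Big(\id+\mathrm{SWAP}-\sum_x\ketbra{x,x}{x,x}\Big).
\ee
Since $U^{\otimes2}$ commutes with $\id$ and $\mathrm{SWAP}$ and the diagonal term contributes non-positively, $\EE_f[R_U(x|f)]\le 2/2^{n}$, so $\EE_f[H_U(x|f)]\ge n-1$. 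The chain rule gives $H(f|x)\ge\log|\calP_q(n,d)|-M$, and Fano's inequality (Lemma~\ref{lem:fano}) bounds $H(f|x)$ by a constant plus $\tfrac1{10}\log|\calP_q(n,d)|$, so with $f$ drawn uniformly from equivalence classes (the additive constant cancels identically in $E$ and is anyway unlearnable), $M=\Omega(\log|\calP_q(n,d)|)=\Omega(n^d)$.

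I expect the main obstacle to be this second-moment computation—specifically recognizing that, unlike the binary setting, the constraints defining $E(w,x,y,z)$ are integer equalities among $\{0,1\}$-valued monomials, and checking carefully that the degree-$\le2$ constraints already eliminate every ``cross'' tuple so that no analogue of the EPR term survives (this is exactly where $d\ge2$ and $q\ge4$ enter; $q=2$ reverts to Theorem~\ref{thm:lowerboundbinaryphase}). On the upper-bound side the corresponding care is in proving the $\ZZ_q$ Schwartz--Zippel estimate and in verifying that the POVM excludes the correct value with certainty while keeping each wrong value $\Omega(q^{-3})$-likely.
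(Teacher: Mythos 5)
Your proposal is correct, and it splits naturally into two halves relative to the paper. The upper bound is essentially the paper's own argument: the same $Z$-basis measurement of all but one qubit, the same $q$-outcome POVM $\{\tfrac{2}{q}\ketbra{\phi_b}{\phi_b}\}_{b\in\ZZ_q}$ producing certified not-equal constraints $(D_tf)(y)\neq b$, the same $\ZZ_q$ Schwartz--Zippel statement (this is exactly Lemma~\ref{lemma:characterization_f_gps}, which the paper proves by induction on $n$), and the same union bound over $q^{O(n^{d-1})}$ candidates. The only deviations are minor: you reconstruct $f$ up to a constant by stitching coefficients $(\beta_t)_S=\alpha_{S\cup\{t\}}$ as in Fact~\ref{fact:stitchingpoly}, while the paper learns cyclic-shift derivatives $g_0,\dots,g_{n-1}$ and reconstructs the values $f(x)$ by induction on Hamming weight (both work); and you assert but do not verify that only single-qubit gates and standard-basis measurements are used, whereas the paper makes this precise by observing that for even $q$ the POVM is a probabilistic mixture of the projective bases $\{\ket{\phi_b},Z\ket{\phi_b}\}$. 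The lower bound, by contrast, is a genuinely different route. The paper disposes of it in two lines by an embedding: for $q=2r$ set $f=rf'\pmod q$ with $f'\in\calP(n,d)$, so that $\omega_q^{f(x)}=(-1)^{f'(x)}$ and Theorem~\ref{thm:lowerboundbinaryphase} applies verbatim. You instead redo the second-moment entropy argument over $\ZZ_q$, and your key observation is right: writing $m_J(v)=\prod_{i\in J}v_i$, averaging over uniform $\alpha_J\in\ZZ_q$ with $q\ge 3$ forces the \emph{integer} equalities $m_J(w)+m_J(x)=m_J(y)+m_J(z)$ (since the left-minus-right lies in $\{-2,\dots,2\}$), and the degree-$1$ and degree-$2$ constraints then leave only $(y,z)=(w,x)$ and $(y,z)=(x,w)$, so the EPR term of Proposition~\ref{prop:1} disappears and $\EE_f[R_U(x|f)]\le 2/2^n$. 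What your route buys: a self-contained proof, a cleaner second moment (entropy bound $n-1$ instead of $n-2$), and a quantitatively stronger conclusion $M=\Omega(\log|\calP_q(n,d)|)=\Omega(n^d\log q)$ with explicit $q$-dependence, which the paper's reduction cannot give since it only embeds the binary family. What the paper's route buys: brevity and full reuse of Theorem~\ref{thm:lowerboundbinaryphase}, with no new computation to check.
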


Before stating our learning algorithm and sample complexity, we need the following lemmas. 
\begin{lemma}
\label{lemma:characterization_f_gps}
Choose any $f\in \calP_q(n,d)$ and $c\in \ZZ_q$. Then either $f(x)$ is a constant function or the fraction of inputs $x\in \{0,1\}^n$ such that $f(x)\ne c$  is at least $1/2^d$.
\end{lemma}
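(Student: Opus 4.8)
The plan is to recognize Lemma~\ref{lemma:characterization_f_gps} as a Schwartz--Zippel-type statement over the ring $\ZZ_q$, and to prove it by adapting the inductive argument behind Lemma~\ref{lem:schwartz} (the $\FF_2$-valued case). First I would reduce to the case $c=0$: setting $g(x)=f(x)-c$, the shifted polynomial $g\in\calP_q(n,d)$ has degree at most $d$, is constant exactly when $f$ is (a constant shift only touches the empty-monomial coefficient), and satisfies $\{x : f(x)\neq c\}=\{x : g(x)\neq 0\}$. Thus it suffices to show that every non-constant $g\in\calP_q(n,d)$ obeys $\Pr_x[g(x)\neq 0]\geq 2^{-d}$, where $x$ is uniform on $\{0,1\}^n$.

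I would prove this statement by induction on $n$, keeping the degree $d$ universally quantified inside the inductive hypothesis. For the base case $n=1$, a non-constant $g$ has $g(0)\neq g(1)$, so at least one value is nonzero and $\Pr[g\neq 0]\geq 1/2\geq 2^{-d}$. For the inductive step I would single out the last variable and write $g(x)=x_n\,g_1(x')+g_0(x')$ with $x'=(x_1,\dots,x_{n-1})$, where $\deg g_1\leq d-1$ and $\deg g_0\leq d$, then split into three cases according to the function $g_1$. If $g_1\equiv 0$, then $g=g_0$ is a non-constant degree-$\leq d$ polynomial in $n-1$ variables, and the claim follows from the inductive hypothesis applied to $g_0$ (with the same $d$). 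If $g_1\equiv a$ for some nonzero constant $a\in\ZZ_q$, then for every fixed $x'$ the two values $g_0(x')$ and $g_0(x')+a$ are distinct, so at least one choice of $x_n$ makes $g$ nonzero, giving $\Pr[g\neq 0]\geq 1/2\geq 2^{-d}$. If $g_1$ is non-constant, then $\deg g_1\geq 1$ forces $d\geq 2$, and the inductive hypothesis at degree $d-1$ gives $\Pr_{x'}[g_1(x')\neq 0]\geq 2^{-(d-1)}$; for each such $x'$ the values $g_0(x')$ and $g_1(x')+g_0(x')$ differ, so one value of $x_n$ yields $g\neq 0$, whence $\Pr_x[g\neq 0]\geq \tfrac12\cdot 2^{-(d-1)}=2^{-d}$.

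The key point --- and the reason one must adapt rather than directly invoke the field-based Schwartz--Zippel lemma --- is that $\ZZ_q$ is not a field when $q$ is composite, so there is no dividing by a leading coefficient. The argument sidesteps this entirely: the only algebraic fact it uses is that translation by a nonzero element $a\in\ZZ_q$ is fixed-point free, so $t$ and $t+a$ are never both zero. Consequently nothing in the proof uses the evenness of $q$, the bound $2^{-d}$ holds for all $q\geq 2$, and it is tight, as witnessed by $g(x)=x_1x_2\cdots x_d$. I expect the only thing requiring care to be the bookkeeping of the induction --- stating the hypothesis for all $d$ simultaneously so that the non-constant $g_1$ case may invoke it at degree $d-1$, and checking the case analysis on $g_1$ (zero, nonzero constant, non-constant) is exhaustive --- rather than any genuine analytic obstacle.
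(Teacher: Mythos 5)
Your proof is correct, but it takes a genuinely different route from the paper's. The paper induces on $n$ with base case $n=d$ (where a non-constant function on $\{0,1\}^d$ must miss the value $c$ at least once out of $2^d$ points, giving the bound directly), and its inductive step never touches the degree: it fixes a non-zero monomial $\alpha_S\prod_{j\in S}x_j$ of \emph{maximal} degree $d'$, restricts a variable $x_i$ with $i\notin S$, and argues that both restrictions $g_0,g_1$ retain that monomial (since $\alpha_{S\cup\{i\}}=0$ by maximality of $d'$) and hence are non-constant --- a step that leans on the auxiliary Proposition~\ref{prop:constant} characterizing constant functions by the vanishing of all non-empty coefficients mod $q$; averaging the inductive bound over the two restrictions finishes the argument. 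You instead reduce to $c=0$, decompose on the last variable as $g=x_n g_1+g_0$, and run the classical DeMillo--Lipton--Schwartz--Zippel-style case split on $g_1$, recursing at the same degree when $g_1\equiv 0$ and at degree $d-1$ (paying a factor $\tfrac12$) when $g_1$ is non-constant. Your version is more self-contained --- it needs no coefficient-characterization proposition, only the trivial fact that two distinct elements of $\ZZ_q$ cannot both be zero --- and it directly parallels Lemma~\ref{lem:schwartz}, making explicit that evenness of $q$ plays no role here and that the bound is tight (via $x_1\cdots x_d$). The paper's version buys a case-free induction in which the $2^{-d}$ appears only through the base case, at the cost of the supporting proposition and the maximal-monomial observation. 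One bookkeeping point you handled correctly but should keep visible in a write-up: your case analysis is on $g_1$ \emph{as a function}, which is what lets Case~1 conclude that $g_0$ is non-constant and lets Case~3 invoke the inductive hypothesis legitimately, without needing uniqueness of the multilinear representation over $\ZZ_q$.
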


\begin{proof}
We shall use the following simple fact.
\begin{prop}
\label{prop:constant}
Consider a function $f\, : \, \{0,1\}^n\to \ZZ_q$ specified as a polynomial 
\be
f(x) = \sum_{J\subseteq [n] } \alpha_J \prod_{j\in J} x_j {\pmod {q}}.
\ee
Here $\alpha_J\in \ZZ_q$ are coefficients.
The function $f$ is constant if and only if  $\alpha_J =0 {\pmod q}$ for all non-empty subsets $J\subseteq [n]$.
\end{prop}
\begin{proof}
If $\alpha_J =0 {\pmod q}$ for all non-empty subsets $J$ then $f(x)=f(0^n) {\pmod q}$ for all $x$, that is,
$f$ is constant. Conversely,
suppose $f$ is constant.
Choose a subset $J\subseteq [n]$. We can consider $J$ as an $n$-bit string
with the Hamming weight $|J|$ such that $J_i=1$
if $i\in J$ and $J_i=0$ otherwise. 
If $|J|=1$ then $f(J)=f(0^n) + \alpha_J {\pmod q}$
and thus $\alpha_J=0 {\pmod q}$ for all subsets $J$
with $|J|=1$. Suppose we have already proved that $\alpha_J=0 {\pmod q}$
for any subset with $1\le |J|\le w$. If $|J|=w+1$ then 
$f(J)=f(0^n)+ \alpha_J {\pmod q}$ and thus $\alpha_J=0 {\pmod q}$ for all subsets $J$
with $|J|=w+1$. Proceeding inductively proves the claim.
\end{proof}

We shall prove Lemma~\ref{lemma:characterization_f_gps}  by induction in $n$. The base case of induction is $n=d$. 
Clearly, a non-constant function $f\, : \, \{0,1\}^d \to \ZZ_q$ takes a value different from $c$ at least one time, that is,
the fraction of inputs $x\in \{0,1\}^d$ such that $f(x)\ne c$ is at least $1/2^d$.

Suppose $n>d$ and $f\in \calP_q(n,d)$ is not a constant function.
Let $d'$ be the maximum degree of non-zero monomials in $f$.
Clearly $1\le d'\le d$.
Suppose $f$ contains a monomial
$\alpha_S \prod_{j\in S}x_j$ where $\alpha_S \in \ZZ_{q}\setminus \{0\}$
and $|S|=d'$.
Since $|S|<n$, one can choose a variable $x_i$ 
with
$i\in [n]\setminus S$. Let $g_a\, : \{0,1\}^{n-1}\to \ZZ_{q}$
be a function obtained from $f$ by setting the variable $x_i$ to a constant value $a\in \{0,1\}$.
Clearly, $g_a\in \calP_q(n-1,d)$.
The coefficients of the monomial $\prod_{j\in S}x_j$ in $g_0$ and $g_1$
 are $\alpha_S$ and $\alpha_S + \alpha_{S\cup \{i\}} {\pmod q}$ respectively.
 However, $\alpha_{S\cup \{i\}} =0 {\pmod q}$ since otherwise $f$ would contain
 a monomial $x_i \prod_{j\in S} x_j$ of degree larger than $d'$. 
We conclude that both $g_0$ and $g_1$
contain a non-zero monomial $\alpha_S \prod_{j\in S}x_j$.
 By Proposition~\ref{prop:constant},
$g_0$ and $g_1$ are not constant functions. 
Since $g_0$ and $g_1$ are degree-$d$ polynomials in $n-1$ variables, 
the induction hypothesis gives
\be
\Pr_y[g_a(y)\ne c] \ge  \frac1{2^{d}}.
\ee
Here $y\in \{0,1\}^{n-1}$ is picked uniformly at random. 
Thus 
\be
\Pr_x[f(x)\ne c] = \frac12\left[ \Pr_y[g_0(y)\ne c]+ \Pr_y[g_1(y)\ne c]\right] \ge  \frac1{2^d}.
\ee
Here $x\in \{0,1\}^n$ is picked uniformly at random. This proves the induction step.
\end{proof}

With this lemma, we are now ready to prove Theorem~\ref{thm:generalizedphasedegreed}. In the section below we first describe our learning algorithm and in the next section we prove the theorem by proving the sample complexity upper bound.

\subsubsection{Learning Algorithm in  Theorem~\ref{thm:generalizedphasedegreed}}
We are now ready to state our learning algorithm. As in Section~\ref{sec:learning_bps_separable_measurements} for learning binary phase states with separable measurements, we learn generalized phase states through examples containing information about the derivatives of $f(x)$. The crucial difference between the binary phase state learning algorithm and the generalized setting is, in the binary case, we obtained a measurement outcome $b_y$ that corresponded to $b_y=f(0y)-f(1y)$, however in the generalized scenario, we obtain a measurement outcome $b'_y$ that satisfies $f(0y)-f(1y)\neq b'_y$. Nevertheless, we are able to still learn $f$ using such measurement outcomes which we describe in the rest of the section.  

We now describe the learning algorithm. We carry out the algorithm in $n$ rounds, which we index by $t$. For simplicity, we describe the procedure for the first round. Suppose we measure qubits $2,3,\ldots,n$ of the state $\ket{\psi_f}$ in the $Z$-basis. Let $y\in \{0,1\}^{n-1}$ be the measured bit string. Note that the probability distribution of $y$ is uniform.  The post-measurement state of qubit $1$ is 
\be
    \ket{\psi_{f,y}} =\frac1{\sqrt{2}} ( \omega_q^{f(0y)} \ket{0} + \omega_q^{f(1y)} \ket{1})
    \label{eq:post_measurement_state_gps_sep_meas}
\ee
For each $b\in \ZZ_q$ define a single-qubit state 
\begin{equation}
    \ket{\phi_b} = \frac{1}{\sqrt{2}}(\ket{0} -\omega_q^{b} \ket{1} )   
\label{phi_b}
\end{equation}
Using the identity 
$\sum_{b\in \ZZ_q} \omega_q^b = 0$ one gets 
\be
    I = \frac2{q} \sum_{b\in \ZZ_q} \ket{\phi_b} \bra{\phi_b}
    \label{eq:POVM_sep_meas_gps}
\ee
One can view Eq.~(\ref{eq:POVM_sep_meas_gps}) as a single-qubit POVM with $q$ elements $(2/q)|\phi_b\ra\la \phi_b|$. Let $\calM$ be the single-qubit measurement described by this POVM. Applying $\calM$ to the state $\ket{\psi_{f,y}}$  returns an outcome $b\in \ZZ_q$ with the probability
\be
    \Pr(b|y):= \frac2{q} |\la \phi_b|\psi_{f,y}\ra|^2 = \frac1{2q}\left| 1 - \omega_q^{f(1y)-f(0y)-b}\right|^2.
    \label{eq:prob_sample_sep_meas_gps}
\ee
Clearly, $\Pr(b|y)$ is a normalized probability distribution, $\sum_{b\in \ZZ_q} \Pr(b|y)=1$. Furthermore, 
\begin{align}
    f(1y)-f(0y)=b  \quad \text{implies } & \Pr(b|y)=0,\\
    f(1y)-f(0y)\ne b  \quad \text{implies } &  \Pr(b|y)\ge 
    \frac2{q} \sin^2{(\pi/q)} = \Omega(1/q^3).
    \label{eq:inference_sample_sep_meas_gps}
\end{align}
To conclude, the combined $n$-qubit measurement consumes one copy of the state $\ket{\psi_f}$ and returns a pair $(y,b) \in \{0,1\}^{n-1} \times \ZZ_q$ such that 
\begin{equation}
    f(1y)-f(0y) \ne b   
\label{eq:constraint_sample_sep_meas_gps}    
\end{equation}
with certainty and all outcomes $b$ satisfying Eq.~(\ref{eq:constraint_sample_sep_meas_gps}) appear with a non-negligible probability. 
Define a function $g\, : \, \{0,1\}^{n-1}\to \ZZ_q$
such that
\be
    \label{g(z)}
    g(y) = f(1y)-f(0y).
\ee
We claim that $g$ is a degree-$(d-1)$ polynomial,
that is, $g\in \calP_q(n-1,d-1)$.
Indeed, it is clear that $g(y)$ is a degree-$d$ polynomial. Moreover, 
all degree-$d$ monomials in $f(x)$ that do not contain the variable $x_1$ appear in $f(1y)$ and $f(0y)$
with the same coefficient. Such monomials do not contribute to $g(y)$. A degree-$d$ monomial in $f(x)$ that contains the variable $x_1$ contributes a degree-$(d-1)$ monomial to $g(y)$.
Thus $g \in \calP_q(n-1,d-1)$, as claimed. 

From Eq.~(\ref{eq:constraint_sample_sep_meas_gps}) one infers a constraint
\be
    g(y)\ne b
    \label{eq:constraint2_sample_sep_meas_gps}
\ee
whenever the combined $n$-qubit measurement of $\ket{\psi_f}$ returns an outcome $(y,b)$. Suppose we repeat the above process $m$ times  obtaining constraints 
\be
    g(y^{(k)})\ne b^{(k)}, \qquad k=1,2,\ldots,m.
    \label{eq:system_constraints_sep_meas_gps}
\ee
This consumes $m$ copies of $\ket{\psi_f}$. We claim that the probability of having more than one polynomial 
$g\in \calP_q(n-1,d-1)$ satisfying the constraints Eq.~(\ref{eq:system_constraints_sep_meas_gps}) is exponentially small if we choose
\be
    m=O(q^3 \log{(q)} 2^d n^{d-1}).
    \label{eq:sample_complexity_sep_meas_gps}
\ee

\subsubsection{Sample Complexity bound in Theorem~\ref{thm:generalizedphasedegreed}}
Define a probability distribution  $\pi(\vec{y},\vec{b})$ where 
\begin{equation}
    \vec{z}=(y^{(1)},\ldots,y^{(m)})\in \{0,1\}^{(n-1) m} \quad \mbox{and} \quad
\quad \vec{b}=(b^{(1)},\ldots,b^{(m)}) \in  (\ZZ_q)^{\times m}    
\end{equation}
such that $y^{(j)}$ are picked uniformly at random and $b^{(k)}$ are sampled from the distribution $\Pr(b^{(k)}|y^{(k)})$ defined in Eq.~(\ref{eq:prob_sample_sep_meas_gps}). For each polynomial $h\in \calP_q(n-1,d-1)$ define an event
\begin{equation}
    \mathsf{BAD}(h)=\{ (\vec{y},\vec{b})\, : \, h(y^{(k)}) \ne b^{(k)}  \quad \mbox{for all} \quad k \in [m]\}.    
\end{equation}
We claim that 
\be
    \Pr[\mathsf{BAD}(h)]:= \sum_{(\vec{y},\vec{b})\in \mathsf{BAD}(h)}\; \pi(\vec{y},\vec{b}) \le \left[1-\Omega(2^{-d} q^{-3})\right]^m
    \label{eq:prob_bad_event_sep_meas_gps}
\ee
for any $h\ne g$. Indeed, consider some fixed $k \in [m]$. The event $b^{(k)} \ne h(y^{(k)})$ occurs automatically if $h(y^{(k)})= g(y^{(k)})$. Otherwise, if $h(y^{(k)}) \ne g(y^{(k)})$, the event $b^{(k)} \ne h(y^{(k)})$ occurs with the probability at most $1-\Omega(1/q^3)$ since $b^{(k)} = h(y^{(k)})$ with the probability at least $\Omega(1/q^3)$ due to Eq.~(\ref{eq:inference_sample_sep_meas_gps}). It follows that 
\begin{align}
\Pr_{y^{(k)},b^{(k)}}[h(y^{(k)})\ne b^{(k)}] & \le \Pr_{y^{(k)}}[h(y^{(k)})= g(y^{(k)})] + \Pr_{y^{(k)}}[h(y^{(k)}) \ne g(y^{(k)})] \left( 1- \Omega(1/q^3) \right) \\
& = 1 - \Pr_{y^{(k)}}[h(y^{(k)})\ne g(y^{(k)})] \cdot \Omega(1/q^3).
\end{align}
If $h$ and $g$ are equivalent then $h(y)=g(y)+c$ for some constant $c\in \ZZ_q$. Note that $c\ne 0$ since we assumed $h\ne g$. In this case 
\begin{equation}
    \Pr_{y^{(k)}}[h(y^{(k)}) \ne g(y^{(k)})]=1.    
\end{equation}
If $h$ and $g$ are non-equivalent,  apply Lemma~\ref{lemma:characterization_f_gps} to a non-constant degree-$(d-1)$ polynomial $h-g$. It gives
\begin{equation}
    \Pr_{y^{(k)}}[h(y^{(k)})\ne g(y^{(k)})]\ge \frac1{2^{d-1}}.    
\end{equation}
In both cases we get 
\begin{equation}
    \Pr_{y^{(k)},b^{(k)}}[h(y^{(k)})\ne b^{(k)}]\le  1- \Omega(2^{-d} q^{-3}),    
\end{equation}
which proves Eq.~(\ref{eq:prob_bad_event_sep_meas_gps}) since the pairs $(y^{(k)},b^{(k)})$ are i.i.d. random variables.

As noted earlier in the preliminaries, observe that $|\calP_q(n-1,d-1)|\le q^{O(n^{d-1})}
= 2^{O(\log{(q)} n^{d-1})}$. By the union bound, one can choose $m=O(2^{d} q^3 \log{(q)}n^{d-1})$ such that 
\be
    \Pr\left[\bigcup_{h\in \calP_q(n-1,d-1)\setminus g}\; \mathsf{BAD}(h)\right]\le 2^{O( \log{(q)}n^{d-1})}  \left[1-\Omega(2^{-d}q^{-3})\right]^m \le 2^{-\Omega(n)}.
    \label{eq:prob_bad_event_polynomials_sep_meas_gps}
\ee
In other words, the probability that $g$ is the unique element of $\calP_q(n-1,d-1)$ satisfying all the constraints Eq.~(\ref{eq:system_constraints_sep_meas_gps}) is at least $1-2^{-\Omega(n)}$. One can identify such polynomial $g$ by  checking the constraints Eq.~(\ref{eq:system_constraints_sep_meas_gps}) for every $g\in \calP_q(n-1,d-1)$. If the constraints are satisfied for more than one polynomial, declare a failure.

At this point we have learned a polynomial $g\in \calP_q(n-1,d-1)$ such that $f(1y)-f(0y)=g(y)$ for all $y\in \{0,1\}^{n-1}$. For simplicity,  we ignore the exponentially small  failure probability. 
Applying the same protocol $n$ times to copies of the quantum state $\ket{\psi_f}$ by a cyclic shift of qubits, one can learn polynomials $g_0,g_1,\ldots,g_{n-1} \in \calP_q(n-1,d-1)$ such that 
\begin{equation}
    f(C^i(1y))-f(C^i(0y)) = g_i(y) \quad \mbox{for all} \quad i\in [n] \quad \mbox{and} \quad y \in \{0,1\}^{n-1},
    \label{eq:cyclic_shift}    
\end{equation}
where $C$ is the cyclic shift of $n$ bits.  This consumes $M=O(nm)= O(2^d q^3\log{(q)} n^d)$ copies of the state $\ket{\psi_f}$. We can assume wlog that $f(0^n)=0$ since our goal is to learn $f(x)$ modulo a constant shift. Suppose we have already learned values of $f(x)$ for all bit strings $x$ with the Hamming weight $|x|\le w$ (initially $w=0$).  Any bit string $x$ with $|x|=w+1$ can be represented as $x=C^i(1y)$ for some $y\in \{0,1\}^{n-1}$ such that $|y|=w$. Now Eq.~(\ref{eq:cyclic_shift}) determines $f(x)$ since $|C^i(0y)|=|y|=w$ so that $f(C^i(0y))$ is already known and the polynomial $g_i(y)$ has been learned. Proceeding inductively one can learn $f(x)$ for all $x$. 

It remains to note that the POVM Eq.~(\ref{eq:POVM_sep_meas_gps}) is a probabilistic mixture of projective single-qubit measurements whenever $q$ is even. Indeed, in this case the 
states $|\phi_b\ra$ and $|\phi_{b+q/2}\ra=Z|\phi_b\ra$ form an orthonormal basis of a qubit, see Eq.~(\ref{phi_b}). Thus the POVM defined in Eq.~(\ref{eq:POVM_sep_meas_gps}) can be implemented by picking a random uniform $b\in \ZZ_q$ and measuring a qubit in the basis $\{ |\phi_b\ra, Z|\phi_b\ra\}$. Thus the learning protocol only requires single-qubit unitary gates and measurements in the standard basis. 

The lower bound in the proof of Theorem~\ref{thm:generalizedphasedegreed} follows in a straightforward manner from the lower bound  for binary phase states.
Indeed, suppose 
\[
f'(x)=\sum_{J\in [n]} \alpha_J \prod_{j\in J} x_j {\pmod 2}
\]
is an $\FF_2$-valued degree-$d$ polynomial, $f'\in \calP(n,d)$.
Suppose 
$q=2r$ for some integer $r$. 
Define a polynomial
\[
f(x) = r f'(x) {\pmod q}.
\]
Clearly $f\in \calP_q(n,d)$
and
$\omega_q^{f(x)} = (-1)^{f'(x)}$ for all $x$, that is the binary phase state corresponding to $f'$ coincides with the generalized phase state corresponding to $f$. Using Theorem~\ref{thm:lowerboundbinaryphase}, we obtain a lower bound of $M=\log |\calP(n,d)|=
\Omega(n^d)$ for learning $\psi_f$. This concludes the proof of Theorem~\ref{thm:generalizedphasedegreed}.

\subsection{Learning stabilizer states}
We now describe how the algorithm stated in Theorem~\ref{thm:generalizedphasedegreed} could be used to learn any $n$-qubit stabilizer state (produced by a Clifford circuit applied to $\ket{0^n}$ state) using separable measurements. Note that we can learn a subclass of stabilizer states called graph states (which are simply binary phase states with $d=2$)  using Algorithm~\ref{algo:learn_bps_separable_measurements} with the sample complexity of $O(n^2)$ (as shown in Theorem~\ref{thm:binaryphasedegreed}).

From a result in \cite{dehaene2003clifford}, we know that a stabilizer state can be represented as follows
\begin{equation}
    \ket{\psi} = \frac{1}{\sqrt{|A|}} \sum \limits_{x \in A} i^{\ell(x)} (-1)^{q(x)} \ket{x},
\end{equation}
where $A$ is an affine subspace of $\FF_2^n$, $\ell: \FF_2^n \rightarrow \FF_2$ is a linear function and $q: \FF_2^n \rightarrow \FF_2$ is quadratic function. Clearly, an alternate form is a generalized phase state with degree-$2$
\begin{equation}
    \ket{\psi_f} = \frac{1}{\sqrt{|A|}} \sum \limits_{x \in A} i^{f(x)} \ket{x}
\end{equation}
where the summation is over $A$ instead of the entire $\FF_2^n$, and the function $f: \FF_2^n \rightarrow \ZZ_4$ has its coefficients corresponding to the quadratic monomials take values in $\{0,2\}$. We can now learn this using separable measurements as stated in the following statement as opposed to entangled measurements as required by Bell sampling \cite{montanaro2017learning}.

\begin{corollary}
\label{thm:learn_stab_states_sep_meas}
There exists an algorithm that uses $M=O(n^{2})$ copies of a stabilizer state $\ket{\psi_f} = \frac{1}{\sqrt{|A|}} \sum \limits_{x \in A} i^{f(x)} \ket{x}$ with an unknown polynomial $f\in \calP_4(n,2)$  and  outputs a polynomial $g\in \calP_4(n,2)$ such that  $g$ is equivalent to $f$ with the probability at least $1-2^{-\Omega(n)}$. The quantum part of the algorithm requires only single-qubit unitary gates and measurements in the standard basis.
\end{corollary}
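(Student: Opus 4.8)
The plan is to reduce the stabilizer-state learning problem to the full-support generalized phase-state problem already solved in Theorem~\ref{thm:generalizedphasedegreed}, in three stages: (i) learn the affine support $A$; (ii) rotate $A$ to a coordinate subspace so that the state becomes a genuine $k$-qubit degree-$2$ generalized phase state with $k=\dim A$; and (iii) invoke Theorem~\ref{thm:generalizedphasedegreed} on $k\le n$ qubits with $q=4$ and $d=2$.

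For stage (i), I would measure $O(n)$ copies of $\ket{\psi_f}$ in the standard basis. Since $|i^{f(x)}|^2=1/|A|$ for $x\in A$ and $0$ otherwise, each measurement returns a uniformly random element of $A$. Writing $A=a_0+V$ for a linear subspace $V$, I would take pairwise differences of the sampled strings to obtain vectors of $V$ and run Gaussian elimination; a standard rank argument shows that $O(k+\log(1/\delta))=O(n)$ samples suffice to recover both $V$ and an offset $a_0$ with probability $1-2^{-\Omega(n)}$.

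Stage (ii) is the crux. Having learned $V$, choose a linear reversible map $L$ over $\FF_2$ taking $V$ to the coordinate subspace $\FF_2^k\times\{0^{n-k}\}$; applying the basis-permuting unitary $U_L$ (with $U_L\ket{x}=\ket{Lx}$) sends $\ket{\psi_f}$ to a state supported on $\{(z,c):z\in\FF_2^k\}$ for a fixed $c$, so the last $n-k$ qubits are frozen to $\ket{c}$ and may be discarded, leaving the $k$-qubit state $2^{-k/2}\sum_z i^{g(z)}\ket{z}$ with $g(z)=f(L^{-1}(z,c))$. The key lemma I must prove is that $g\in\calP_4(k,2)$, i.e.\ that degree is preserved under the affine $\FF_2$-substitution $z\mapsto L^{-1}(z,c)$. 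Writing $f=\ell+2q$ with $\ell$ an $\FF_2$-linear form and $q$ an $\FF_2$-quadratic form (as in the stabilizer normal form of~\cite{dehaene2003clifford}), each input bit $x_i$ becomes an $\FF_2$-affine form in $z$. The nontrivial point is that lifting a mod-$2$ XOR into $\ZZ_4$ stays degree $2$: using $\bigoplus_i w_i \equiv \sum_i w_i - 2\sum_{i<j} w_i w_j \pmod 4$ (all higher terms carry a factor of $4$), the composed linear part $\ell$ becomes degree-$\le 2$ over $\ZZ_4$, while for the quadratic part one writes the composed $\FF_2$-form as $Q(z)=\bigoplus_k m_k$ over degree-$\le 2$ monomials $m_k$ and uses $2\,(\bigoplus_k m_k)\equiv 2\sum_k m_k \pmod 4$, so $2q$ composed with the affine map is again degree-$\le 2$. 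Hence $g\in\calP_4(k,2)$, as needed.

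With $g\in\calP_4(k,2)$ in hand, stage (iii) applies Theorem~\ref{thm:generalizedphasedegreed} with $q=4$ and $d=2$, consuming $O(2^d q^3 k^d\log q)=O(k^2)=O(n^2)$ copies, and, after the $U_L$ reduction, using only the RPDS subroutine together with the single-qubit POVM $\calM$ acting on one copy at a time. A final classical step pulls the learned $g$ back through the known affine relation to output a polynomial in $\calP_4(n,2)$ equivalent to $f$ on $A$, so adding the $O(n)$ samples of stage (i) gives total sample complexity $O(n^2)$, all from separable (single-copy) measurements. I expect the main obstacle to be stage (ii): carefully verifying the degree-preservation lemma (the $\ZZ_4$ bookkeeping for the XOR lift) and reconciling the support-flattening step with the stated single-qubit-gate requirement, since $U_L$ is realized by CNOTs acting within a single copy; these do not entangle distinct copies, so the measurements remain separable, but strictly single-qubit unitaries cannot in general move an arbitrary $V$ onto a coordinate subspace.
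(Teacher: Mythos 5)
Your reduction is sound and it takes a genuinely different route from the paper's. Both proofs start the same way, learning $A=a+S_A$ from computational-basis measurements (the paper budgets $O(n\log n)$ copies; your $O(n)$ rank argument is correct and slightly tighter), but they diverge at the second stage. The paper never changes basis: it keeps the $n$-qubit state as is and observes that knowledge of $A$ determines along which coordinate directions $e_i$ the derivative-sampling subroutine of Theorem~\ref{thm:generalizedphasedegreed} returns nontrivial information, and then reruns that algorithm restricted to those directions. You instead flatten $S_A$ onto a coordinate subspace with an in-copy CNOT circuit $U_L$, discard the $n-k$ frozen qubits, and invoke Theorem~\ref{thm:generalizedphasedegreed} as a black box on $k$ qubits; the price is your degree-preservation lemma, whose $\ZZ_4$ bookkeeping is correct: for bits $w_i$ one has $\bigoplus_i w_i \equiv \sum_i w_i - 2\sum_{i<j}w_iw_j \pmod 4$ and $2\bigoplus_i w_i \equiv 2\sum_i w_i \pmod 4$, so composing $f=\ell+2q$ with an $\FF_2$-affine substitution stays degree $\le 2$ over $\ZZ_4$ (and the same lemma justifies your final pull-back step). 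Your route buys rigor in exactly the place where the paper is terse: if $S_A$ contains few or no coordinate vectors (e.g.\ a GHZ-like support with $S_A=\mathrm{span}\{1^n\}$), then no coordinate direction yields a superposition after measuring the other qubits, and the paper's one-sentence appeal to Theorem~\ref{thm:generalizedphasedegreed} along ``a subset of the $n$ directions $\{e_i\}$'' does not literally apply, whereas your flattening handles such supports automatically and makes the invocation of Theorem~\ref{thm:generalizedphasedegreed} truly modular.

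What the paper's route buys in exchange is the corollary's final sentence: by avoiding any basis change it uses only single-qubit unitaries and standard-basis measurements. As you note yourself, your algorithm needs the two-qubit gates in $U_L$, so it establishes the $O(n^2)$ separable-measurement claim but not the ``single-qubit unitary gates only'' claim; this is the one point where your proof delivers less than the stated corollary. Nor can $U_L$ be pushed into classical post-processing: the conjugated observables $U_L^\dagger X_j U_L$ and $U_L^\dagger Z_i U_L$ are multi-qubit Pauli strings that commute globally but can overlap anticommutingly on individual qubits (as with $X_1X_2$ and $Z_1Z_2$), so they cannot in general be measured jointly by single-qubit measurements. Closing this gap would require either making the paper's in-place argument rigorous for arbitrary $S_A$ or designing adapted single-qubit measurement patterns for non-coordinate directions.
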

\begin{proof}
The subspace $A$ of an unknown stabilizer state can be denoted as $a + S_A$ where $a \in \FF_2^n$ is a translation vector and $S_A$ is a linear subspace of $\FF_2^n$. To learn $a$ and a basis of the subspace $S_A$, it is enough to measure $O(n \log n)$ copies of $\ket{\psi_f}$ in the computational basis. This in turn defines a subset of the $n$ directions $\{ e_i \}$ along which we need to search for non-zero monomials in the partial derivatives of $f$. We can now use the learning algorithm in Theorem~\ref{thm:generalizedphasedegreed} to learn the unknown stabilizer state using $O(n^2)$ copies with the desired probability.
\end{proof}

\section{Learning noisy phase states}
\label{sec:learningnoisy}
In this section, we consider learning algorithms in the presence of noise (in particular we consider global depolarizing noise, local depolarizing noise and local depolarizing noise when the phase state has additional graph structure). 

\subsection{Global depolarizing noise}\label{sec:globalnoise}
Let 
$$
\ket{\psi_f}=\frac{1}{2^{n/2}}\sum_{x\in \01^n}(-1)^{f(x)}\ket{x}
$$
where $f:\01^n\rightarrow \01$ is a degree-$2$ polynomial in $\FF_2$. For simplicity, we assume $f(x)=x^\top A x$ (where $A\in \FF_2^{n\times n}$ is upper triangular). Suppose we are given noisy copies of $\ket{\psi}$ of the form
$$
{\psi_f}=(1-\varepsilon)\cdot \ketbra{\psi_f}{\psi_f}+\varepsilon \cdot \mathbb{I}/2^n
$$
for some $\varepsilon>0$, then how many copies of  ${\psi_f}$ are necessary and sufficient to learn $f$?

Below, we observe the following theorem:
\begin{theorem}
\label{lem:noisystablearning}
Let $\varepsilon>0$ be a constant. Given $2n^{1+\delta}$ copies of $\psi_f$ (with error $\varepsilon$)  and $n\cdot 2^{O(n/\log \log n)}$ time (for some constant $\delta \in (0,1)$ dependent on $\varepsilon$), there exists a procedure to learn $A$.
\end{theorem}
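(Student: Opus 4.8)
The plan is to reduce the problem to Learning Parity with Noise (LPN) using Bell sampling, solve the resulting LPN instances with the algorithm of Lyubashevsky~\cite{lyubashevsky2005parity}, and then recover the diagonal of $A$ by a cheap post-processing step. First I would pair up the $2n^{1+\delta}$ noisy copies into $n^{1+\delta}$ pairs and apply the Bell-sampling procedure of Lemma~\ref{lem:bellsampling} to each pair. Writing $B=A+A^\top$, each pair yields an independent sample $(z,w_z)$ with $z$ uniform and $w_z=Bz$ with probability $(1-\varepsilon)^2$, and $w_z$ uniformly random otherwise. Fixing any coordinate $i\in[n]$, the bit $(w_z)_i$ equals $(Bz)_i=B_i\cdot z$ with probability $\tfrac12+\tfrac12(1-\varepsilon)^2$; that is, $(z,(w_z)_i)$ is a standard LPN sample for the secret $B_i$ (the $i$-th row of $B$) at constant noise rate $\eta=\tfrac12\bigl(1-(1-\varepsilon)^2\bigr)<1/2$. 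Because the pairs are processed independently, the noise is independent across samples for every fixed row.

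Next I would run the LPN algorithm of~\cite{lyubashevsky2005parity} once per row. That algorithm recovers an $n$-bit LPN secret from $n^{1+\delta}$ samples at constant noise rate in time $2^{O(n/\log\log n)}$, where the exponent $\delta\in(0,1)$ is governed by the noise rate $\eta$ and hence by $\varepsilon$; this is the quantitative sample--time trade-off refining the BKW algorithm. Reusing the same $n^{1+\delta}$ Bell samples for all $n$ rows---legitimate because only the marginal $(z,(w_z)_i)$ is used for row $i$---recovers all of $B$ in total time $n\cdot 2^{O(n/\log\log n)}$. Since $A$ is upper triangular, $B_{ij}=A_{ij}$ for all $i<j$ while $B_{ii}=0$, so at this stage I have learned every off-diagonal (quadratic) coefficient of $A$ but not its diagonal (linear) part.

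Finally, to recover each diagonal entry $A_{kk}$ I would reserve $O(n\log n)$ copies and, for each $k$, perform RPDS along $e_k$ (Definition~\ref{def:rpds}) on $O(\log n)$ noisy copies. On the noisy state this returns $(y,b)$ with $b=(D_kf)(y)=A_{kk}+\sum_{j\ne k}B_{kj}\,y_j$ except with probability $\varepsilon/2$, since the maximally-mixed branch of $\psi_f$ contributes a fully random outcome. As the $B_{kj}$ are already known, subtracting $\sum_{j\ne k}B_{kj}\,y_j$ yields $A_{kk}$ with probability $1-\varepsilon/2>1/2$, so a majority vote over $O(\log n)$ samples pins down $A_{kk}$ with failure probability $o(1/n)$, and a union bound over $k\in[n]$ recovers the whole diagonal. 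This reserved budget is lower order than $n^{1+\delta}$, so the total sample count stays $2n^{1+\delta}$ and the runtime stays $n\cdot 2^{O(n/\log\log n)}$.

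The main obstacle is the middle step: importing Lyubashevsky's LPN result with the correct parameter matching. Specifically, one must verify that a merely polynomial sample budget $n^{1+\delta}$ (rather than the $2^{\Theta(n/\log n)}$ samples used by plain BKW) suffices to attain the claimed $2^{O(n/\log\log n)}$ runtime at the induced constant noise rate $\eta(\varepsilon)$, and track how the exponent $\delta$ must grow as $\varepsilon\to 1$ (equivalently, as $\eta\to 1/2$).
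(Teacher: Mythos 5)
Your proposal is correct, and its core is the same as the paper's proof: pair the $2n^{1+\delta}$ noisy copies, apply Bell sampling (Lemma~\ref{lem:bellsampling}) to obtain, for each row $B^i$ of $B=A+A^\top$, i.i.d.\ LPN samples $(z,(w_z)_i)$ with constant bias $(1-\varepsilon)^2/2$, then run Lyubashevsky's algorithm (Theorem~\ref{thm:lyubashevsky}) once per row, reusing the same samples across rows, for total time $n\cdot 2^{O(n/\log\log n)}$. The ``main obstacle'' you flag at the end is already discharged by the statement of Theorem~\ref{thm:lyubashevsky} as quoted in the paper: it explicitly allows $n^{1+\delta}$ samples and any bias $\eta>2^{-(\log n)^{\delta}}$ (a constant bias certainly qualifies) while guaranteeing time $2^{O(n/\log\log n)}$, so the $\varepsilon$-dependence sits only in the constants and no further parameter tracking is needed. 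Where you genuinely depart from the paper is the recovery of the diagonal of $A$: the paper applies, to fresh noisy copies, the diagonal unitary $\ket{x}\mapsto(-1)^{x_ix_j}\ket{x}$ for every learned off-diagonal entry $A_{ij}=1$, which turns $\psi_f$ into a noisy \emph{linear} phase state $\sum_x(-1)^{\sum_i A_{ii}x_i}\ket{x}$, and then invokes Bernstein--Vazirani; you instead keep the quantum part trivial, performing RPDS along $e_k$ (Definition~\ref{def:rpds}) on $O(\log n)$ noisy copies to obtain $D_kf(y)=A_{kk}+\sum_{j\neq k}B_{kj}y_j$ with per-sample error probability $\varepsilon/2$, and since $B$ is already known you recover $A_{kk}$ by classical subtraction plus a majority vote, with a union bound over $k$. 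Both are valid: your variant uses only the single-qubit $X/Z$ measurements already employed elsewhere in the paper (no coherent CZ corrections applied to noisy states) and comes with explicit confidence amplification, whereas the paper's variant reduces the diagonal problem in one conceptual step to the clean, already-solved degree-$1$ case. Your sample accounting (reserving $O(n\log n)=o(n^{1+\delta})$ copies for the diagonal phase) is, if anything, more careful than the paper's.
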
 
Our argument crucially uses the result of Lyubashevsky.
\begin{theorem}[\cite{lyubashevsky2005parity}]
\label{thm:lyubashevsky}
We are given $n^{1+\delta}$ ordered pairs $(a_i,\ell_i)$ where $a_i$ are chosen uniformly and independently at random from the set $\01^n$ and for some $c\in \01^n$, 
\[ \ell_i=\begin{cases} 
      c\cdot a_i \pmod 2 & \text{w.p. } 1/2+\eta \\
      1+c\cdot a_i \pmod 2 & \text{w.p. } 1/2-\eta 
   \end{cases}
\]
If $\eta> 2^{-(\log n)^\delta}$ for constant $\delta <1$, then there is an algorithm that can recover $c$ in time $2^{O(n/\log \log n)}$ with high probability. 
\end{theorem}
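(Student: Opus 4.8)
The plan is to prove this as a sample-efficient variant of the Blum--Kalai--Wasserman (BKW) algorithm for learning parity with noise. Recall the structure of BKW: one partitions the $n$ coordinates into $a$ blocks of size $b=n/a$ and iteratively eliminates blocks by collision-finding, i.e.\ grouping the current samples according to their restriction to a block and, within each group, XORing every sample against a fixed representative so as to zero out that block. After $a-1$ elimination rounds one is left with samples supported on a single coordinate $e_i$, whose labels are noisy copies of $c_i$, and each bit of $c$ is recovered by a majority vote. The only fact about noise propagation I would need is the piling-up lemma: if $m$ independent bits each equal $0$ with probability $1/2+\eta$, their XOR equals $0$ with probability $1/2+2^{m-1}\eta^m$, so XORing $m$ samples degrades the bias from $\eta$ to $\approx (2\eta)^m/2$, and controlling $m$ is the whole game.

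The obstacle to running BKW directly is that it consumes $2^{\Omega(b)}$ samples, whereas we are handed only $n^{1+\delta}$. The first step is therefore sample amplification: from the $N_0=n^{1+\delta}$ base pairs $(a_i,\ell_i)$ I would manufacture a much larger pool of derived pairs $\bigl(\bigoplus_{i\in S}a_i,\ \bigoplus_{i\in S}\ell_i\bigr)$ indexed by size-$k$ subsets $S\subseteq[N_0]$, of which there are $\binom{N_0}{k}\approx n^{(1+\delta)k}$. Two things must be checked. First, the derived $a$-parts should be close to uniform and spread across the $2^b$ buckets of each elimination round; since each $a_i$ is uniform and independent, any fixed nonempty XOR is exactly uniform, and I would bound the bounded-order correlations between distinct subset sums to argue the buckets fill up as in the independent case. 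Second, the label of a $k$-fold XOR has bias $\eta_k\approx (2\eta)^k/2$, and using the hypothesis $\eta>2^{-(\log n)^\delta}$ this is still at least $2^{-\Theta(k(\log n)^\delta)}$, i.e.\ not too small provided $k$ is controlled.

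I would then run block-elimination on this amplified pool and balance parameters. Taking $a=\Theta(\log\log n)$ blocks forces $b=n/a=\Theta(n/\log\log n)$, giving the claimed running time $2^{O(n/\log\log n)}$, dominated by sorting $\sim 2^b$ samples in each of the $a$ rounds times the $O(n)$ cost of manipulating each vector. The number of derived samples $n^{(1+\delta)k}$ must exceed the $\Theta(a\,2^b)$ that elimination needs, which fixes $k$ up to constants; and the final bias, obtained by composing the amplification XOR (of $k$ base samples) with the $\Theta(2^a)=\mathrm{polylog}(n)$-fold XOR performed over the elimination rounds, must stay large enough that a majority vote over the surviving $e_i$-samples recovers $c_i$. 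This is exactly where the quasi-constant lower bound $\eta>2^{-(\log n)^\delta}$ is used, since it keeps the bias above the inverse square root of the available sample count after all the XORing.

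The main obstacle I anticipate is the dependence problem in the amplification step: the derived samples share base samples, so their noises are not jointly independent and the piling-up lemma cannot be applied blindly during elimination. The careful part is to organize the collision-finding so that any two samples ever XORed together are built from disjoint base sets (keeping their noises independent), while still retaining enough samples per bucket to continue, and in parallel to show via a second-moment/bounded-independence argument that the correlations among the derived $a$-parts do not spoil the uniform population of buckets. Granting these two structural facts, correctness follows from Hoeffding bounds on each of the $n$ majority votes together with a union bound over the coordinates of $c$, and the sample and time counts are as balanced above.
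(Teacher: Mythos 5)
First, a framing note: the paper does not prove this statement at all --- it is imported verbatim from Lyubashevsky~\cite{lyubashevsky2005parity} and used as a black box in the proof of Theorem~\ref{lem:noisystablearning}. So the comparison must be against Lyubashevsky's own proof, and your proposal does reconstruct its skeleton correctly: amplify the $n^{1+\delta}$ base samples by taking $k$-wise subset XORs, run BKW block elimination on the amplified pool, track the bias via the piling-up lemma ($m$-fold XOR has bias $(2\eta)^m/2$), and balance $a=\Theta(\log\log n)$ blocks, $b=n/a$, $k\approx n/(a\log n)$, with the hypothesis $\delta<1$ providing exactly the slack needed to keep the final bias above the inverse square root of the $2^{O(n/\log\log n)}$ sample budget. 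All of that matches the source.

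The genuine gap is in the one step you yourself flagged as the crux: your proposed cure for the dependence problem --- organizing the collision-finding so that ``any two samples ever XORed together are built from disjoint base sets'' --- cannot work, by counting. BKW with block size $b=\Theta(n/\log\log n)$ must process $2^{\Theta(n/\log\log n)}$ derived samples, while a pool of $N_0=n^{1+\delta}$ base samples contains at most $N_0/k=\mathrm{poly}(n)$ pairwise-disjoint $k$-subsets; even demanding disjointness only between the two operands of each single XOR fails after the first round, since surviving samples have base-supports of size $2^{j-1}k$ and there are far more of them than $N_0/(2^{j-1}k)$. The correct resolution is the opposite of enforcing disjointness: overlaps between base sets \emph{cancel} modulo $2$, so every final sample's label is $\bigoplus_{i\in T}e_i$ for some set $T$ with $|T|\le k\,2^{a-1}$, and since the bucketing is a function of the $a$-parts alone (independent of the noise bits), the piling-up bound $\mathrm{bias}\ge(2\eta)^{k2^{a-1}}$ survives arbitrary overlaps with no disjointness needed. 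Independence is then required only at the level of second moments: distinct subsets have nonempty symmetric difference, so their $a$-parts are exactly uniform and pairwise independent, which is enough for Chebyshev-type concentration both for bucket occupancy and for the final majority vote (whose covariance terms are themselves biases of XORs over nonempty symmetric differences and must be controlled, not assumed zero). This second-moment route is precisely the content of Lyubashevsky's key lemma; your parallel suggestion of a ``second-moment/bounded-independence argument'' for the $a$-parts is the right instinct, and the repair is to let it carry the entire argument rather than pairing it with the infeasible disjointness scheme.
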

We also use the following simple lemma.
The procedure above is an application of Bell sampling~\cite{montanaro2017learning} to the pure state $\ket{\psi_f}^{\otimes 2}$ and mixed state $\psi_f^{\otimes 2}$.  We now prove our main theorem statement. 
\begin{proof}[Proof of Theorem~\ref{lem:noisystablearning}]
 For simplicity let $B=A+A^\top$. One way to view Lemma~\ref{lem:bellsampling} is that, it uses two copies of $\psi_n$ and produces a $(z,w_z)\in \01^{2n}$ such that $(w_z)_i=B^i\cdot z$ (where $B^i$ is the $i$ row of $B$) with probability $(1-\varepsilon)^2$ and is a uniformly random bit $b\in \01$ with probability $1-(1-\varepsilon)^2$. In particular,
 \[ (w_z)_i=\begin{cases} 
      B^i\cdot z & \text{w.p. }  1/2+(1-\varepsilon)^2/2\\
      1+ B^i\cdot z & \text{w.p. } 1/2-(1-\varepsilon)^2/2.
   \end{cases}
\]
Hence two copies of $\psi_f$ can be used to obtain $(z,(w_z)_1),\ldots,(z,(w_z)_n)$.   
So the learning algorithm first uses $T=n^{1+\delta}$ many copies of $\psi_f$ and produces
\begin{align*}
    &(z^1,(w_{z^1})_1),\ldots,(z^1,(w_{z^1})_n)\\
    &(z^2,(w_{z^2})_1),\ldots,(z^2,(w_{z^2})_n)\\
    &\hspace{10mm}\vdots\\
    &(z^T,(w_{z^T})_1),\ldots,(z^T,(w_{z^T})_n).
\end{align*}
Each column above (i.e., $(z^1,(w_{z^1})_1),\ldots,(z^T,(w_{z^T})_1)$) can be now be given as input to the algorithm of Theorem~\ref{thm:lyubashevsky} where $\eta=(1-\varepsilon)^2/2$ is a constant (and $\delta>0$ is also a tiny constant), which  produces $B^i$ with high probability.\footnote{The high probability in Theorem~\ref{thm:lyubashevsky} is in fact inverse exponential in $n$.} Hence feeding all the $n$ different columns to Theorem~\ref{thm:lyubashevsky} allows the algorithm to learn $B^1,\ldots,B^n$ explicitly. The overall sample complexity is $2n^{1+\delta}$ and time complexity is $n\cdot 2^{O(n/\log\log n)}$. Once we learn the off-diagonal elements of $A$ (since above we only obtain information of $A+A^\top$ which zeroes the diagonal entries of $A$), a learning algorithm, applies the operation $\ket{x}\rightarrow (-1)^{x_{ij}}\ket{x}$ if $A_{ij}=1$ for $i\neq j$. Repeating this for all the $n(n-1)/2$ different $i\neq j$, the resulting quantum state is $\sum_x (-1)^{\sum_i x_i A_{ii}}\ket{x}$ which is a linear phase state, and we can learn using Bernstein-Vazirani algorithm. 
\end{proof}

\subsection{Local Depolarizing Noise}

Let us now show that learning phase states subject to a local depolarizing noise
has an exponential sampling complexity in the worst case. 

\begin{theorem}
\label{thm:localdepolarizing}
For every $\varepsilon>0$, learning degree-$2$ phase states with $\varepsilon$-local depolarizing noise has sample complexity $\Omega((1-\varepsilon)^n)$.
\end{theorem}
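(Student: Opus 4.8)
The plan is to prove the lower bound information-theoretically, by exhibiting two genuinely distinct degree-$2$ phase states whose noisy versions are exponentially close in trace distance, and then invoking the standard conversion from trace distance to sample complexity. Write $\mathcal{D}_\varepsilon(\sigma)=(1-\varepsilon)\sigma+\varepsilon\,\Tr(\sigma)\,\id/2$ for the single-qubit depolarizing channel and let $\rho_f=\mathcal{D}_\varepsilon^{\otimes n}(\ketbra{\psi_f}{\psi_f})$ be the locally-noisy phase state. If I can produce inequivalent degree-$2$ polynomials $f\ne g$ with $\norm{\rho_f-\rho_g}_1=O((1-\varepsilon)^n)$, then any learner consuming $k$ copies induces a two-outcome test (``did it output $f$ or $g$?'') that distinguishes $\rho_f^{\otimes k}$ from $\rho_g^{\otimes k}$; the Holevo--Helstrom bound forces $\tfrac12+\tfrac14\norm{\rho_f^{\otimes k}-\rho_g^{\otimes k}}_1\ge\tfrac23$, and subadditivity $\norm{\rho_f^{\otimes k}-\rho_g^{\otimes k}}_1\le k\,\norm{\rho_f-\rho_g}_1$ then gives $k=\Omega((1-\varepsilon)^{-n})$, an exponential sample complexity (the damping factor $(1-\varepsilon)^n$ is exactly the coherence gap appearing in the statement, and the sample count is its reciprocal).

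The hard pair is built from the star-graph state. Take $f(x)=\sum_{i=2}^n x_1x_i$, so that $\ket{\psi_f}=\prod_{i=2}^n \mathsf{CZ}_{1,i}\ket{+}^{\otimes n}$, and $g(x)=f(x)+x_1$, so that $\ket{\psi_g}=Z_1\ket{\psi_f}$; both are degree-$2$ and $g-f=x_1$ is non-constant, so they are inequivalent phase states that a learner must separate. The key observation is that $\ket{\psi_f}$ is local-Clifford equivalent to a GHZ state: applying a Hadamard $H_i$ on every leaf $i\ge 2$ turns each $\mathsf{CZ}_{1,i}$ into a CNOT and $\ket{+}_i$ into $\ket{0}_i$, so $U\ket{\psi_f}=\ket{\mathrm{GHZ}_+}$ with $U=\bigotimes_{i\ge2}H_i$ a product unitary and $\ket{\mathrm{GHZ}_\pm}=(\ket{0^n}\pm\ket{1^n})/\sqrt2$. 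Since $U$ acts only on qubits $\ge 2$ it commutes with $Z_1$, whence $U\ket{\psi_g}=Z_1\ket{\mathrm{GHZ}_+}=\ket{\mathrm{GHZ}_-}$. Now I use that single-qubit depolarizing is covariant under single-qubit unitaries (as $\Tr$ and $\id/2$ are preserved), so $\mathcal{D}_\varepsilon^{\otimes n}$ commutes with conjugation by the product unitary $U$, and trace norm is unitarily invariant; therefore $\norm{\rho_f-\rho_g}_1=\norm{\mathcal{D}_\varepsilon^{\otimes n}(\ketbra{\mathrm{GHZ}_+}{\mathrm{GHZ}_+})-\mathcal{D}_\varepsilon^{\otimes n}(\ketbra{\mathrm{GHZ}_-}{\mathrm{GHZ}_-})}_1$. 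The two GHZ density matrices have identical diagonals and differ only in the coherences $\ketbra{0^n}{1^n}$ and $\ketbra{1^n}{0^n}$, and each depolarized coherence is damped by $(1-\varepsilon)$ per qubit, i.e. $\mathcal{D}_\varepsilon^{\otimes n}(\ketbra{0^n}{1^n})=(1-\varepsilon)^n\ketbra{0^n}{1^n}$. Hence the difference equals $(1-\varepsilon)^n(\ketbra{0^n}{1^n}+\ketbra{1^n}{0^n})$, whose only nonzero eigenvalues are $\pm(1-\varepsilon)^n$, giving $\norm{\rho_f-\rho_g}_1=2(1-\varepsilon)^n$ and the claimed bound.

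The main obstacle is the construction itself: a phase state is a maximally coherent uniform superposition, the \emph{opposite} of a GHZ state, so it is not a priori clear that two inequivalent degree-$2$ phase states can be locally indistinguishable. The resolution is to reason at the level of the stabilizer group: the star-graph state has exactly one full-support stabilizer generator $K_1=X_1Z_2\cdots Z_n$, while its remaining generators $X_iZ_1$ ($i\ge2$) all have weight $2$; adding the linear term $x_1$ (equivalently applying $Z_1$) flips only the sign of the coset $K_1\langle X_2Z_1,\dots,X_nZ_1\rangle$, every element of which has full weight $n$. Thus every Pauli on which $\rho_f$ and $\rho_g$ disagree has weight exactly $n$ and is suppressed by $(1-\varepsilon)^n$, which is precisely what the GHZ reduction makes quantitative. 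Once this pair is in hand, the covariance argument, the per-qubit coherence damping, and the trace-distance-to-samples conversion are all routine.
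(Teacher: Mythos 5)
Your proposal is correct and follows essentially the same route as the paper: both arguments reduce the problem to distinguishing the noisy GHZ pair $(\ket{0^n}\pm\ket{1^n})/\sqrt{2}$, exploit covariance of local depolarizing noise under single-qubit unitaries to transfer the bound to genuine degree-$2$ phase states, compute the $(1-\varepsilon)^n$ damping of the $\ketbra{0^n}{1^n}$ coherence, and finish with Helstrom plus subadditivity of trace distance over $k$ copies. The only difference is cosmetic: you realize the GHZ pair as the star-graph state $f=\sum_{i\ge 2}x_1x_i$ and $g=f+x_1$ via Hadamards on the leaves, whereas the paper uses the complete-graph polynomial $f=\sum_{i<j}x_ix_j$ and $g=f+\sum_i x_i$ via $e^{i(\pi/4)X}$ rotations (which forces their restriction $n\equiv 1\pmod 4$, a restriction your construction avoids).
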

\begin{proof}
 Let $\calD_1$ be a single-qubit depolarizing channel 
that implements the identity  with probability $1-\varepsilon$
and outputs a maximally mixed state with probability $\varepsilon$,
\[
\calD_1(\rho)=(1-\varepsilon) \rho + \varepsilon \mathrm{Tr}(\rho) \frac{I}2.
\]
Let $\calD=\calD_1^{\otimes n}$ be the $n$-qubit depolarizing channel. 
Consider $n$-qubit GHZ-like states 
\[
|\phi^\pm\rangle = (|0^n\rangle \pm |1^n\rangle)/\sqrt{2}.
\]
Using the identity $\calD_1(|0\rangle\langle 1|)=(1-\varepsilon)|0\rangle\langle 1|$ one gets
\[
\calD(|\phi^+\rangle\langle \phi^+|)-\calD(|\phi^-\rangle\langle \phi^-|)=(1-\varepsilon)^n (
|\phi^+\rangle\langle \phi^+| - |\phi^-\rangle\langle \phi^-|)
\]
which implies
\begin{equation}
\label{GHZdistance}
\| \, \calD(|\phi^+\rangle\langle \phi^+|)-\calD(|\phi^-\rangle\langle \phi^-|) \, \|_1
\le 2(1-\varepsilon)^n.
\end{equation}
It follows that the trace distance between $k$ copies of the states
$\calD(|\phi^+\rangle\langle \phi^+|)$ and $\calD(|\phi^-\rangle\langle \phi^-|)$
is at most $2k(1-\varepsilon)^n$. By Helstrom theorem, these states cannot be
distinguished reliably unless $k=\Omega((1-\varepsilon)^{-n})$.
Next we  observe that $|\phi^\pm\rangle$ are degree-two phase states
modulo single-qubit rotations $R_x=e^{i(\pi/4)X}$.
Indeed, suppose $n=1{\pmod 4}$.
Then
a simple algebra shows that
\begin{equation}
\label{Rx}
R_x^{\otimes n}|\phi^+\rangle=e^{i\pi/4} |\psi_f\rangle
\quad \mbox{and} \quad  
R_x^{\otimes n}|\phi^-\rangle=e^{-i\pi/4}|\psi_g\rangle,
\end{equation}
where $\psi_f$ and $\psi_g$ are $n$-qubit phase states
associated with  degree-two  polynomials 
\[
f(x)=\sum_{1\le i<j\le n} x_i x_j {\pmod 2}
\quad \mbox{and} \quad g(x)=f(x)+\sum_{i=1}^n x_i {\pmod 2}.
\]
Since the depolarazing channel $\calD$ commutes with single-qubit unitary operators, Eqs.~(\ref{GHZdistance},\ref{Rx})
give
\[
\| \, \calD(|\psi_f\rangle\langle \psi_f|)-\calD(|\psi_g\rangle\langle \psi_g|) \, \|_1
\le 2(1-\varepsilon)^n.
\]
Thus
$k$ copies of the noisy phase states $\calD(|\psi_f\rangle\langle \psi_f|)$
and $\calD(|\psi_g\rangle\langle \psi_g|)$ cannot be
distinguished reliably unless $k=\Omega((1-\varepsilon)^{-n})$.
We conclude that the sampling complexity of learning
phase states subject to local $\varepsilon$-depolarizing noise
is at least $\Omega((1-\varepsilon)^{-n})$, which is 
exponentially large in $n$ for any constant error
rate $\varepsilon>0$.
\end{proof}

\subsection{Local depolarizing noise and small graph degree}
Although learning phase states with local depolarizing is hard in general, a restricted class of states is again easy to learn using the same technique from Section~\ref{sec:globalnoise}. Recall from that section the notation $f(x)=x^\top Ax$ and $B=A+A^\top$, where $A$ is upper triangular and $B$ is symmetric. Interpreting $B$ as the adjacency matrix of a graph, we define the graph-degree of $f$ to be $\text{gd}(f)=\max_i|B^i|$, where $B^i$ is the $i^{\text{th}}$ row of $B$. The graph-degree of $f$ is also one less than the maximum stabilizer weight of the stabilizer state $\ket{\psi_f}$.

It is possible to learn phase states suffering from local depolarizing using only a few copies if their graph degree is promised to be small.

\begin{theorem}
Let $\varepsilon>0$ be a constant, $\mathcal{D}$ be local depolarizing noise on $n$ qubits with strength $\varepsilon$, and $f(x)$ be a degree-2 polynomial with $\text{gd}(f)<(\log n)^{\delta'}$ for some constant $\delta'$. Given $2n^{1+\delta}$ copies of $\mathcal{D}(\psi_f)$ and $n2^{O(n/\log \log n)}$ time (for some constant $\delta \in (0,1)$ dependent on $\varepsilon$ and $\delta'$), there exists a procedure to learn $A$.
\end{theorem}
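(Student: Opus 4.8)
The plan is to extend the global-noise argument of Theorem~\ref{lem:noisystablearning}: reduce the learning problem to a collection of parities-with-noise instances solved by Lyubashevsky's algorithm (Theorem~\ref{thm:lyubashevsky}). The essential new difficulty is that under local (rather than global) depolarizing noise the bias of the Bell-sampling labels is no longer a single global constant; I will show that the promise $\text{gd}(f)<(\log n)^{\delta'}$ keeps each bias large enough to invoke Theorem~\ref{thm:lyubashevsky}. First I would rewrite $\calD$ in its Pauli-twirl form, so that a noisy copy is $P\ket{\psi_f}$ for a random Pauli $P=X^aZ^b$ with $a_j,b_j$ independent over qubits and $\Pr[a_j=1]=\Pr[b_j=1]=\varepsilon/2$. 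The key observation is that a Pauli error maps a quadratic phase state to another quadratic phase state with the same quadratic part: since $Z^b$ adds the linear form $b\cdot x$ and $X^a$ adds $a^\top Bx$ (using $f(x+a)=f(x)+a^\top Bx+\text{const}$), one gets $P\ket{\psi_f}=\pm\ket{\psi_{f+\ell}}$ with linear coefficient vector $u=b+Ba\pmod 2$.

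Next I would run the Bell-sampling procedure of Lemma~\ref{lem:bellsampling} on two noisy copies carrying independent errors $P,P'$, i.e.\ linear shifts $u,u'$. A short calculation gives that the measured second register $z$ is uniform and independent of the errors, while the first register collapses (after $H^{\otimes n}$) to $\ket{Bz+v}$, where $v=u+u'$ is the combined linear shift. Hence each output bit is $w_i=B^i\cdot z+v_i$, an LPN label for the secret $B^i$. The crucial point is locality: since $B_{ii}=0$, the $i$th coordinate is $v_i=u_i+u_i'$ with $u_i=[P_i\in\{Z,Y\}]+\sum_{j\in N(i)}[P_j\in\{X,Y\}]$, depending only on the $1+|B^i|$ qubits in the closed graph-neighborhood of $i$ in each copy. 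Because these per-qubit error indicators are independent with bias $1-\varepsilon$, I get the clean formula $\Exp[(-1)^{v_i}]=(1-\varepsilon)^{2(1+|B^i|)}$, so the label is correct with bias $\eta_i=\tfrac12(1-\varepsilon)^{2(1+|B^i|)}\ge \tfrac12(1-\varepsilon)^{2(1+\text{gd}(f))}=2^{-O((\log n)^{\delta'})}$.

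With this in hand the reduction closes just as in Theorem~\ref{lem:noisystablearning}: for each $i$ the rounds give i.i.d.\ pairs $(z,w_i)$ with uniform $z$ and bias $\eta_i>2^{-(\log n)^\delta}$ provided $\delta>\delta'$ and $n$ is large, which is exactly the hypothesis of Theorem~\ref{thm:lyubashevsky}. Feeding the $i$th column of $n^{1+\delta}$ rounds into that algorithm recovers $B^i$ in time $2^{O(n/\log\log n)}$ with inverse-exponential failure probability; a union bound over the $n$ columns learns all of $B=A+A^\top$, i.e.\ the off-diagonal part of $A$, using $2n^{1+\delta}$ copies. To recover the diagonal (linear part $c_i=A_{ii}$) I would apply the now-known quadratic correction $C_q=\text{diag}((-1)^{\sum_{i<j}A_{ij}x_ix_j})$ to fresh noisy copies and measure in the $X$ basis; conjugating the Pauli error by $C_q$ turns each $X_j$ into $X_j\prod_{k\in N(j)}Z_k$, so the $X$-measurement of qubit $i$ returns $c_i$ flipped by the same neighborhood-local variable $u_i$, again biased by $(1-\varepsilon)^{1+|B^i|}$. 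A majority vote over $2^{O((\log n)^{\delta'})}\log n=n^{o(1)}$ copies then fixes every $c_i$, which is negligible against the $n^{1+\delta}$ budget.

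The main obstacle is the locality computation of the previous two paragraphs: carefully tracking how the independent single-qubit Pauli errors propagate through the Bell-sampling circuit (and through $C_q$), verifying that $v_i$ really depends only on the graph-neighborhood of $i$ and that $z$ is independent of $v$, and assembling these into the exact bias $(1-\varepsilon)^{2(1+|B^i|)}$. Everything else --- the uniformity of $z$, the i.i.d.\ structure across rounds needed by Theorem~\ref{thm:lyubashevsky}, and the parameter choice $\delta'<\delta<1$ --- is routine once this bias estimate is established. It is worth noting that the bounded graph-degree is exactly what rescues us from the exponential lower bound of Theorem~\ref{thm:localdepolarizing}: without it, $|B^i|$ could be $\Omega(n)$ and the bias would decay like $(1-\varepsilon)^{\Omega(n)}$, matching the hardness.
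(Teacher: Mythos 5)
Your proposal is correct and takes essentially the same route as the paper's proof: Bell sampling on pairs of locally depolarized copies, the observation that the $i$-th LPN label is biased by $(1-\varepsilon)^{2(|B^i|+1)}\ge(1-\varepsilon)^{2(\mathrm{gd}(f)+1)}=2^{-O((\log n)^{\delta'})}$ because only the closed graph-neighborhood of $i$ enters the parity, then Lyubashevsky's algorithm applied row by row to recover $B$, and the same diagonal-recovery step as in the global-noise theorem. One harmless inaccuracy worth noting: the Pauli twirl of depolarizing noise does not make $a_j$ and $b_j$ independent on the same qubit (a $Y$ error has probability $\varepsilon/4$, not $\varepsilon^2/4$), but since $B_{ii}=0$ your bias computation only ever multiplies error bits living on distinct qubits or distinct copies, so the formula $(1-\varepsilon)^{2(1+|B^i|)}$ and the rest of the argument stand.
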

\begin{proof}
Suppose we apply Bell-sampling, Lemma~\ref{lem:bellsampling}, to phase states suffering from local depolarizing noise $\mathcal{D}$. Recall that this involves measuring the two-body operators $Z\otimes Z$ and $X\otimes X$ on corresponding pairs of qubits from two-copies of the state. Since these are two-qubit operators, they are randomized by the noise with probability $1-(1-\varepsilon)^2$. 

One use of Bell-sampling on two copies of the state gives $(z,w_z)\in\{0,1\}^{2n}$, where each of the $2n$ bits is correct (i.e.~is the same as we would get without noise) with probability $(1-\varepsilon)^2$ and uniformly random with probability $1-(1-\varepsilon)^2$. Therefore, $(w_z)_i+B^i\cdot z$, which is a sum of $|B^i|+1$ bits, is $0$ (mod 2) with probability $(1-\varepsilon)^{2(|B^i|+1)}$ and uniformly random otherwise. Just as in the proof of Theorem~\ref{lem:noisystablearning}, we can apply Theorem~\ref{thm:lyubashevsky}, now with $\eta=(1-\varepsilon)^{2(\text{gd}(f)+1)}>(1-\varepsilon)^{2((\log n)^{\delta'}+1)}$, to learn $B$. Once we learn $B$, we can learn the diagonal elements of $A$, using the same procedure as in the proof of Theorem~\ref{lem:noisystablearning}.
\end{proof}

\section{Applications} \label{sec:applications}
In this section, we describe how the algorithms for learning phase states (see Table~\ref{tab:summary_results_paper}) can be used to learn the quantum circuits that produce binary phase states in Section~\ref{sec:learning_circuits_bps} and generalized phase states in Section~\ref{sec:learning_circuits_gps}. For each, we firstly describe the quantum circuits that produce the phase states of interest followed by our results for learning these quantum circuits.

\subsection{Learning quantum circuits producing binary phase states}
\label{sec:learning_circuits_bps}
We consider a $n$-qubit circuit  $\mathsf{C}$ produced from the set of gates $S=\{\mathsf{H},\mathsf{Z},\mathsf{CZ},\mathsf{CCZ},\ldots,\mathsf{C^{d-1}Z}\}$ where $\mathsf{H}$ is the Hadamard gate and $\mathsf{C^{d-1}Z}$ denotes the controlled-Z gate with $(d-1)$ control qubits. We will actually restrict ourselves to circuits $\mathsf{C}$ which start and end with a column of Hadamard gates over all $n$-qubits, with its internal part $\mathsf{C'}$ containing gates from $S \setminus \mathsf{H}$. We then have the following statement regarding the states produced by $\mathsf{C}$
\begin{prop}\label{prop:circuits_bps}
Let $\mathsf{C}$ be an $n$-qubit quantum circuit, starting and ending with a column of Hadamard gates, with its internal part $\mathsf{C'}$ only containing $s$ gates from $\{\mathsf{Z},\mathsf{CZ},\mathsf{CCZ},\ldots,\mathsf{C^{d-1}Z}\}$,~then
\begin{equation}
    \ket{\psi_f} = \mathsf{C'}\ket{+}^{\otimes n} = \frac{1}{\sqrt{2^n}} \sum \limits_{x \in \Fset{n}} (-1)^{f(x)} \ket{x},
\end{equation}
where the corresponding Boolean function $f \in \calP(n,d,s)$.
\end{prop}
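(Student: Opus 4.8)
The plan is to exploit the fact that every gate in $S \setminus \mathsf{H} = \{\mathsf{Z},\mathsf{CZ},\mathsf{CCZ},\ldots,\mathsf{C^{d-1}Z}\}$ is diagonal in the computational basis, so that the internal part $\mathsf{C'}$ acts on a basis state $\ket{x}$ purely by an overall sign. Concretely, a gate $\mathsf{C}^{m}\mathsf{Z}$ applied to a set $J\subseteq[n]$ of $|J|=m+1$ qubits maps $\ket{x}\mapsto (-1)^{\prod_{j\in J}x_j}\ket{x}$, since the phase $-1$ is picked up exactly when all qubits indexed by $J$ equal $1$. As $m$ ranges over $0,1,\ldots,d-1$, the size $|J|=m+1$ ranges over $1,\ldots,d$, so each such gate contributes a single monic monomial $\prod_{j\in J}x_j$ of degree at most $d$.

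First I would observe that, because all these gates are diagonal, they pairwise commute, and hence the product $\mathsf{C'}$ of the $s$ internal gates acts as $\ket{x}\mapsto (-1)^{f(x)}\ket{x}$, where $f(x)=\sum_{\text{gate}}\prod_{j\in J_{\text{gate}}}x_j \pmod 2$ is the sum over the $s$ internal gates of their associated monomials. Then, using $\ket{+}^{\otimes n}=2^{-n/2}\sum_{x\in\Fset{n}}\ket{x}$, applying $\mathsf{C'}$ term by term yields $\mathsf{C'}\ket{+}^{\otimes n}=2^{-n/2}\sum_{x\in\Fset{n}}(-1)^{f(x)}\ket{x}=\ket{\psi_f}$, which is exactly the claimed binary phase state. (The leading column of Hadamards in $\mathsf{C}$ is what prepares $\ket{+}^{\otimes n}$ from $\ket{0^n}$, while the trailing column plays no role in this particular claim, which concerns only $\mathsf{C'}\ket{+}^{\otimes n}$.)

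It remains to verify the membership $f\in\calP(n,d,s)$. By construction $f$ is a sum of $s$ monic monomials, each of degree $\leq d$, so $\deg(f)\leq d$ and $f$ has at most $s$ terms in its $\FF_2$ representation. The only subtlety, and the point I would flag as the mild obstacle, is that repeated gates can cause monomials to cancel in pairs modulo $2$, so the resulting $\FF_2$-sparsity may be strictly smaller than $s$; thus $f$ has sparsity \emph{at most} $s$, which is the intended reading of $f\in\calP(n,d,s)$. No further work is needed, since the entire argument reduces to the elementary observation that diagonal phase gates compose additively in the exponent.
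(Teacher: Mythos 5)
Your proof is correct and takes essentially the same approach as the paper's: both exploit that every internal gate is diagonal, compute the sign $(-1)^{\prod_{j\in J}x_j}$ it contributes on each basis state $\ket{x}$, and multiply these contributions to conclude that $\mathsf{C'}\ket{x}=(-1)^{f(x)}\ket{x}$ with $f$ a sum of at most $s$ monomials of degree at most $d$. Your explicit note that cancellations modulo $2$ can only reduce the sparsity, so $f\in\calP(n,d,s)$ should be read as ``at most $s$ terms,'' matches the paper's own phrasing that ``the number of terms in $f$ is at most the number of gates applied in $\mathsf{C'}$.''
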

\begin{proof}
We follow a proof strategy similar to that in \cite[Prop.~1]{montanaro2007distinguishability} and \cite[Appendix~B]{bremner2016average}, which treated the case of $d=3$. Let $\mathsf{Z}_{i_1}$ be the $\mathsf{Z}$ gate acting on the $i_1$th qubit, $\mathsf{CZ}_{i_1,i_2}$ be the controlled-Z gate with $i_1$th qubit as the control, and similarly $\mathsf{C^{d-1}Z}_{i_1,i_2,\dots,i_d}$ with controls on $(i_1,\ldots,i_{d-1})$ qubits. We note that for any $x \in \FF_2^n$, 
\begin{align}
    \braket{x}{\mathsf{Z}_{i_1}}{x} = (-1)^{x_{i_1}}, \, \braket{x}{\mathsf{CZ}_{i_1,i_2}}{x} = (-1)^{x_{i_1}x_{i_2}}, \, \braket{x}{\mathsf{C^{d-1}Z}_{i_1,i_2,\ldots,i_d}}{x} = (-1)^{x_{i_1}x_{i_2}\ldots x_{i_d}}.
\end{align}
As all these gates are diagonal, we can obtain an expression for $\braket{x}{\mathsf{C'}}{x}$ by simply multiplying the expressions of $\braket{x}{G}{x}$ for the different gates $G$ in $\mathsf{C'}$. To complete the proof, we note that the $\FF_2$-degree of $f$ is $k$ if and only if $\mathsf{C^{k-1}Z}$ is the gate with highest controls present in $\mathsf{C'}$ and the number of terms in $f$ is at most the number of gates applied in $\mathsf{C'}$.
\end{proof}
Note that the states produced in Proposition~\ref{prop:circuits_bps} are exactly the binary phase states corresponding to Boolean functions $f$ given by~Eq.~\eqref{eq:anf_boolean_function}. Some special classes of circuits included in the above statement are Clifford circuits which produce graph states for $d=2$, and $\mathsf{IQP}$ circuits for $d=3$. We observe from the above proposition that there can be more than one quantum circuit $\mathsf{C}$ corresponding to a given polynomial $f \in \calP(n,d,s)$. As the internal gates of $\mathsf{C'}$ in $S$ commute, these gates can be reordered arbitrarily while still producing the same Boolean function $f$.

To learn a circuit representation of $\mathsf{C}$ from samples, we have the following result.

\begin{theorem}
Let $\mathsf{C}$ be an unknown $n$-qubit quantum circuit, starting and ending with a column of Hadamard gates, with its internal part $\mathsf{C'}$ only containing gates from $\{\mathsf{Z},\mathsf{CZ},\mathsf{CCZ},\ldots,\mathsf{C^{d-1}Z}\}$. A circuit representation of $\mathsf{C}$ can then be learned through $O(n^d)$ queries to $\mathsf{C}$ and using only separable measurements. This can be improved to $O(n^{d-1})$ queries to $\mathsf{C}$ and using entangled measurements.
\end{theorem}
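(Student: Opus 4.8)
The plan is to reduce learning the circuit $\mathsf{C}$ to learning the phase state it produces, and then to invert Proposition~\ref{prop:circuits_bps} to turn the learned polynomial back into a gate list. First I would observe that, by Proposition~\ref{prop:circuits_bps}, $\mathsf{C'}\ket{+}^{\otimes n}=\ket{\psi_f}$ for some $f\in \calP(n,d,s)\subseteq \calP(n,d)$, and since $\mathsf{C}=H^{\otimes n}\mathsf{C'}H^{\otimes n}$ we have $\mathsf{C}\ket{0^n}=H^{\otimes n}\ket{\psi_f}$. Hence a single query to $\mathsf{C}$ on the input $\ket{0^n}$ produces one copy of $\ket{\psi_f}$ up to the trailing layer $H^{\otimes n}$. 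Because the separable algorithm (RPDS, Definition~\ref{def:rpds}) only ever measures each qubit in the $X$ or $Z$ basis, the trailing Hadamards can be absorbed for free by swapping $X\leftrightarrow Z$ in the chosen measurement basis; equivalently, one may simply apply $H^{\otimes n}$ to the output register. Either way, $M$ queries to $\mathsf{C}$ yield $M$ copies of $\ket{\psi_f}$.

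Next I would invoke Theorem~\ref{thm:binaryphasedegreed}: with $M=O(n^d)$ such copies and only single-qubit $\{X,Z\}$ measurements, Algorithm~\ref{algo:learn_bps_separable_measurements} recovers the full $\FF_2$ representation $f(x)=\sum_{J}\alpha_J\prod_{i\in J}x_i$ with high probability. The final step is purely classical: for every set $J$ with $\alpha_J=1$ and $|J|=k$, place a $\mathsf{C^{k-1}Z}$ gate on the qubits indexed by $J$ (a single $\mathsf{Z}_i$ when $|J|=1$), and sandwich the resulting diagonal circuit between two columns of Hadamards. By Proposition~\ref{prop:circuits_bps} this circuit outputs exactly $\ket{\psi_f}$, so it is a valid circuit representation of $\mathsf{C}$ in the natural sense that it reproduces the same output state; note that the reconstructed circuit uses exactly $s=|\{J:\alpha_J=1\}|\le |\calM(n,d)|$ gates from $S\setminus\{\mathsf{H}\}$.

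For the entangled bound I would replace Theorem~\ref{thm:binaryphasedegreed} by Theorem~\ref{thm:entangledupperbound}, which learns $f$ from $O(n^{d-1})$ copies of $\ket{\psi_f}$ via the pretty-good measurement; here one first applies $H^{\otimes n}$ to each queried copy (or folds it into the POVM) before running the PGM, and the classical reconstruction step is identical. The argument involves essentially no new obstacle beyond what is already proved: the only point requiring care is the notion of \emph{circuit representation}, since Proposition~\ref{prop:circuits_bps} is many-to-one (the commuting diagonal gates may be reordered, and distinct gate lists realizing the same $f$ give the same state). I would therefore state explicitly that the target is a circuit equivalent to $\mathsf{C}$ up to this ambiguity, i.e.\ one producing the identical phase state, rather than a verbatim copy of the unknown gate sequence.
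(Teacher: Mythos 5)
Your proposal is correct and follows essentially the same route as the paper: query $\mathsf{C}$ on $\ket{0^n}$, undo the trailing Hadamard layer to obtain copies of $\ket{\psi_f}$, invoke Theorem~\ref{thm:binaryphasedegreed} (resp.\ Theorem~\ref{thm:entangledupperbound}) to learn $f$, and reconstruct the circuit by placing a $\mathsf{C^{|J|-1}Z}$ gate for each monomial of $f$. Your explicit remarks on absorbing the Hadamards into the $\{X,Z\}$ measurement basis and on the reconstruction being unique only up to reordering of commuting diagonal gates match caveats the paper itself makes (the latter in its discussion preceding the theorem), so there is no substantive difference.
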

\begin{proof}
From Proposition~\ref{prop:circuits_bps}, we note the correspondence between $\mathsf{C'}$ and the binary phase state $\ket{\psi_f}$. From Theorem~\ref{thm:binaryphasedegreed}, we have that we can learn the $\FF_2$ representation of $f$ corresponding to such a state, using $O(n^d)$ separable measurements. Given $O(n^d)$ uses of the unknown $\mathsf{C}$, we thus learn $f$ from samples generated by applying $\mathsf{H}^{\otimes n} \mathsf{C}$ on $\ket{0}^{\otimes n}$ followed by separable measurements. We obtain a circuit representation of $\mathsf{C'}$ (and hence $\mathsf{C}$ which is $\mathsf{H}^{\otimes n} \mathsf{C'}\mathsf{H}^{\otimes n}$) by inserting gates $\mathsf{C^{|J|-1}Z}_{i_1,i_2,\ldots,i_{|J|}}$ (where $i_1,i_2,\ldots,i_{|J|} \in J$) for each monomial $\prod_{i \in J \subseteq [n]} x_i$, characterized by set $J$, present in $f$. The result for entangled measurements is obtained through application of Theorem~\ref{thm:entangledupperbound}.
\end{proof}

\subsection{Learning circuits containing diagonal gates in the Clifford hierarchy} \label{sec:learning_circuits_gps}
For any two $n$-qubit unitaries $U,V\in\mathcal{U}(n)$, let $[U,V]=UVU^\dag V^\dag$ denote the group commutator, and let $P(n)=\langle iI,X_j,Y_j,Z_j:j\in[n]\rangle$ be the $n$-qubit Pauli group. The $d^{\text{th}}$ level of the Clifford hierarchy on $n$-qubits, denoted $C_d(n)$, is defined inductively
\begin{align}
C_1(n) &= P(n),\\
C_d(n) &= \{U\in\mathcal{U}(n):[U,p]\in C_{d-1}(n),\forall p\in P(n)\}.
\end{align}
The second level $C_2(n)$ is the $n$-qubit Clifford group, while higher levels are not groups at all, as they fail to be closed. In general, $C_d(n)$ includes $C_{d-1}(n)$ and more -- for instance, some gates that are in the set $C_{d}(n)$ and not in any lower level of the hierarchy are $Z^{1/2^{d-1}}$ and the controlled-$Z$ gate with $d-1$ control qubits.

Let $D_d(n)$ denote the subset of diagonal unitaries in $C_d(n)$. In fact, $D_d(n)$ are groups for all $d$ and $n$. Moreover, $D_d(n)$ can be generated by $Z^{1/2^{d-1}}$ and $C^{(i)}Z^{1/2^j}$ with $i+j=d-1$ \cite{zeng2008semi}. In \cite{cui2017diagonal}, the authors characterize unitaries in the diagonal Clifford hierarchies for qudits with prime power dimension. We reproduce one of these results for qubits.
\begin{theorem}\label{thm:diagonal_cliffords_FB}
For $d>1$, $V\in D_d(n)$ if and only if, up to a global phase, $V$ takes the form
\begin{equation}\label{eq:characterize_diagonal}
\exp\left(i\frac{\pi}{2^d}\sum_{\substack{S\subseteq[n]\\ S\neq\emptyset}}a_SZ^{S}\right)
\end{equation}
with $Z^{S}=\prod_{j\in S}Z_j$ and $a_S\in\mathbb{Z}$ for all $S\subseteq[n]$. We also have $V\not\in D_{d-1}(n)$ if and only if at least one $a_S$ is odd.
\end{theorem}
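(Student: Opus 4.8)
The plan is to control a diagonal unitary through its action on the Pauli group one bit-flip at a time. Throughout, write $G_d(n)$ for the set of unitaries of the form displayed in the theorem (all coefficients in $\frac{\pi}{2^d}\mathbb{Z}$), so the claim is $D_d(n)=G_d(n)$. Any diagonal unitary is, up to a global phase, $V=\exp\!\big(i\sum_{\emptyset\neq S\subseteq[n]}\theta_S Z^{S}\big)$, and this representation is unique at the level of the Hermitian generator since $\{Z^{S}\}_{S}$ is a basis of the real diagonal matrices. The one identity that drives everything is the commutator formula: for a Pauli $p=\alpha X^{a}Z^{b}$ the factor $Z^{b}$ and the scalar $\alpha$ cancel in $VpV^{\dagger}p^{\dagger}$, so $[V,p]=VX^{a}V^{\dagger}X^{a}$, and using $X^{a}Z^{S}X^{a}=(-1)^{|S\cap a|}Z^{S}$ one finds
\[
[V,X^{a}]=\exp\!\Big(2i\!\!\sum_{S:\,|S\cap a|\ \mathrm{odd}}\!\!\theta_S\,Z^{S}\Big),
\]
while $[V,p]=I$ whenever $p$ is diagonal. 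Hence testing $[V,p]\in C_{d-1}(n)$ over all Paulis reduces to the single-qubit flips $X_j$, and the commutator simply doubles the coefficients of the monomials $Z^{S}$ with $j\in S$.

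\emph{Reverse direction (``if'').} I would induct on $d$. The base $d=1$ is that $G_1(n)$ equals, up to phase, the diagonal Paulis $\{\text{phase}\cdot Z^{S'}\}=D_1(n)$. For the step, if every $\theta_S\in\frac{\pi}{2^{d}}\mathbb{Z}$ then the formula above makes every coefficient of $[V,X^{a}]$ lie in $\frac{\pi}{2^{d-1}}\mathbb{Z}$, so $[V,X^{a}]\in G_{d-1}\subseteq D_{d-1}\subseteq C_{d-1}$ by the inductive hypothesis; combined with $[V,p]=I$ for diagonal $p$ this gives $V\in C_d(n)$, and $V$ is diagonal, so $V\in D_d(n)$.

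\emph{Forward direction (``only if'').} Here the natural induction is on $n$, invoking the full equality $D_{d-1}=G_{d-1}$ (and $D_d=G_d$ at $n-1$). For $V\in D_d(n)$, each $[V,X_j]$ is diagonal and lies in $C_{d-1}$, hence in $D_{d-1}=G_{d-1}$; reading off its character coefficient at a set $S\ni j$, the identity says this coefficient equals $2\theta_S$, and membership in $G_{d-1}$ wants it in $\frac{\pi}{2^{d-1}}\mathbb{Z}$, which would force $\theta_S\in\frac{\pi}{2^{d}}\mathbb{Z}$ for every nonempty $S$. The \textbf{hard part} is exactly this transfer: $[V,X_j]\in G_{d-1}$ only pins down the coefficients of $[V,X_j]$ modulo the kernel of $\exp$ (eigenphases are defined mod $2\pi$), whereas the genuine discrete derivative $\phi(x)-\phi(x\oplus e_j)$ of the phase function of $V$ is a priori a different lift, whose character coefficients can acquire spurious $\tfrac{2\pi}{2^{n}}\mathbb{Z}$ contributions and leave $\frac{\pi}{2^{d-1}}\mathbb{Z}$ when $n>d$. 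I would resolve this by a one-qubit-at-a-time integration rather than a direct Walsh computation: decompose $V$ into its two $Z_n$-blocks $V_0,V_1$; a short computation shows $V_0,V_1\in D_d(n-1)$ and $V_0V_1^{\dagger}\in D_{d-1}(n-1)$, so the inductive hypothesis applies to each. Crucially, one then uses the freedom to shift the representation of $V_1$ by a kernel-lattice vector so that the coefficient difference $a^0_S-a^1_S$ becomes \emph{even}; reassembling $V$ through $\tfrac12(I\pm Z_n)$ produces exactly a factor $\tfrac12(a^0_S-a^1_S)\in\mathbb{Z}$ on each new monomial, keeping all coefficients in $\frac{\pi}{2^d}\mathbb{Z}$. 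The base case $n=1$ has a single nonempty monomial and no such obstruction, so it is verified directly by an inner induction on $d$ using the same commutator identity.

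\emph{The ``odd coefficient'' clause} follows once the form characterization is in hand at levels $d$ and $d-1$, read existentially. If $V$ admits a level-$d$ representation with every $a_S$ even, $a_S=2a'_S$, then $V=\exp\!\big(i\frac{\pi}{2^{d-1}}\sum_S a'_S Z^{S}\big)\in G_{d-1}=D_{d-1}(n)$; conversely any $V\in D_{d-1}(n)=G_{d-1}(n)$ has a level-$(d-1)$ representation, which is a level-$d$ representation with all coefficients $2c_S$ even. I would flag that the parity of an individual $a_S$ is \emph{not} an invariant of $V$ once $n\ge d+1$ (the all-ones coefficient vector lies in the kernel lattice, so $V$ and the shifted representation $a_S+1$ agree up to phase); the statement is therefore to be understood as $V\in D_{d-1}(n)$ iff $V$ can be written with all $a_S$ even, equivalently $V\notin D_{d-1}(n)$ iff every admissible choice of $(a_S)$ has an odd entry.
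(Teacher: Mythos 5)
Your reverse direction and your reading of the ``odd coefficient'' clause are fine, and you are right that there is a genuine kernel-lattice subtlety (eigenphases of a diagonal unitary are only defined modulo $2\pi$, so Walsh coefficients of a chosen exponent are not invariants of $V$) that a naive coefficient comparison misses. But your resolution of it---the induction on $n$ via the two $Z_n$-blocks---has a genuine gap: it loses control of exactly one coefficient, that of the monomial $Z_n$ itself, and that coefficient cannot be recovered from the ingredients you allow yourself. Writing $V_0=e^{i\phi_0}\exp\bigl(i\frac{\pi}{2^d}\sum_{S\neq\emptyset}a^0_SZ^S\bigr)$ and $V_1=e^{i\phi_1}\exp\bigl(i\frac{\pi}{2^d}\sum_{S\neq\emptyset}a^1_SZ^S\bigr)$, your even-difference trick does make $\tfrac12(a^0_S\pm a^1_S)\in\mathbb{Z}$ for every nonempty $S\subseteq[n-1]$, but the reassembled exponent also contains the term $\tfrac12(\phi_0-\phi_1)Z_n$: the coefficient of $Z_n$ is the \emph{relative global phase} of the two blocks. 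None of your three inputs---$V_0\in D_d(n-1)$, $V_1\in D_d(n-1)$, $V_0V_1^{\dagger}\in D_{d-1}(n-1)$---constrains this quantity, because each of those memberships holds only ``up to a global phase.'' Concretely, take $V=\exp(i\theta Z_n)$ for an arbitrary real $\theta$ (say $\theta=\pi/2^{d+1}$). Then $V_0=e^{i\theta}I$ and $V_1=e^{-i\theta}I$ lie in $D_d(n-1)$, $V_0V_1^{\dagger}=e^{2i\theta}I$ lies in $D_{d-1}(n-1)$, and all differences $a^0_S-a^1_S$ are already even (they vanish); yet $V\in D_d(n)$ only when $\theta\in\frac{\pi}{2^d}\mathbb{Z}$. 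So the facts you propagate through the induction are strictly too weak to yield the conclusion.

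The missing ingredient is an argument that pins down the relative block phase using the hierarchy definition directly, and it is essentially the central trick of the paper's proof. Every monomial of the exponent of $[V,X_n]$ contains $Z_n$ (one has $[V,X_n]=\exp\bigl(i[(\phi_0-\phi_1)Z_n+H\otimes Z_n]\bigr)$ with $H$ supported on the first $n-1$ qubits), so repeated commutation with $X_n$ squares: setting $K_1=[V,X_n]$ and $K_{j+1}=[K_j,X_n]$, one gets $K_{j+1}=K_j^2$. Since $V\in C_d(n)$, the chain forces $K_{d-1}\in C_1(n)=P(n)$ and hence $K_d=\pm I$; after the $d-1$ squarings the $H\otimes Z_n$ part (whose coefficients are integer multiples of $\pi/2^{d-1}$ by your induction) contributes only a sign, leaving $K_d=\pm\exp\bigl(i2^{d-1}(\phi_0-\phi_1)Z_n\bigr)$, whence $\phi_0-\phi_1\in\frac{\pi}{2^{d-1}}\mathbb{Z}$ and the $Z_n$ coefficient $\tfrac12(\phi_0-\phi_1)$ lands in $\frac{\pi}{2^d}\mathbb{Z}$ as needed. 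This ``square up the hierarchy until you must hit $\pm I$'' step is exactly what the paper does (there, after conjugating by a CNOT circuit so that a targeted monomial becomes a single-qubit $Z$); some step of any proof must consume the quantitative content of the hierarchy recursion, and block bookkeeping alone does not. A smaller secondary point: your claim that ``a short computation shows $V_0,V_1\in D_d(n-1)$'' is itself a lemma (diagonal blocks of a diagonal element of $C_k(n)$ lie in $C_k(n-1)$) requiring its own induction on the level $k$; it is true, but it should be stated and proved, not assumed.
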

\begin{proof}
The reverse direction, that $V$ in the form of Eq.~\eqref{eq:characterize_diagonal} is in $D_d(n)$, is easy to show inductively.

The forward direction is a simple proof via contradiction. Suppose $V\in D_d(n)$. Any diagonal unitary can be written in the form of Eq.~\eqref{eq:characterize_diagonal} if we allow the $a_S$ to be real numbers.\footnote{Suppose $V=\sum_{T}e^{-i\pi\phi_T/2^n}\ket{T}\bra{T}$. Then we can choose the phases in Eq.~\eqref{eq:characterize_diagonal} to be $a_S=\frac{1}{2^N}\sum_{T}(-1)^{S\cdot T}\phi_T$ (treating $S,T\in\{0,1\}
^n$ as bit strings). The inverse is of course $\phi_T=\sum_S(-1)^{S\cdot T}a_S$.} So assume that for some $S_0\subseteq[n]$, $S_0\neq\emptyset$ we have $a_{S_0}\not\in\mathbb{Z}$. Let $i\in S_0$ and define a Clifford unitary $C$ to be a circuit of CX gates that maps $Z^{S_0}$ to $Z^{(i)}=Z_i$ and any other $Z^{S}$, $S\neq S_0$, to some $Z^{S'}$ with $i\not\in S'$.

Now, define $K_0=CVC^\dag$, $K_j=[K_{j-1},X_i]$. Since $V\in C_d(n)$, also $CVC^\dag\in C_d(n)$, and we must have $K_d=\pm I$. Calculating $K_d$ however we have
\begin{align}
K_1&=[CVC^\dag,X_i]=\exp\left(-i\frac{\pi}{2^{d-1}}a_{S_0}Z_i\right),\\
K_j&=K_{j-1}^2,\text{\space\space} j\ge2,\\
K_d&=\exp\left(-i\pi a_{S_0}Z_i\right).
\end{align}
We thus realize that $K_d$ can only be proportional to identity if $a_{S_0}$ is an integer, from which we get our contradiction.
\end{proof}

We then have the following statement regarding the states produced by circuits $V \in D_d(n)$.

\begin{prop}\label{prop:gps_diagonal_unitaries}
Let $V$ be an $n$-qubit quantum circuit belonging to $D_d(n)$, the subgroup of diagonal unitaries in the $d$-th level of the Clifford hierarchy $C_d(n)$. The state produced by the action of $V$ on $\ket{+}^{\otimes n}=\mathsf{H}^{\otimes n}\ket{0}^{\otimes n}$ (up to a global phase) is
\begin{equation}
    \ket{\psi_f} = V\ket{+}^{\otimes n} = \frac{1}{\sqrt{2^n}} \sum \limits_{x \in \Fset{n}} \omega_q^{f(x)} \ket{x},
\end{equation}
for some $f \in \calP_q(n,d)$ with $q=2^d$, with the $\FF_2$-representation
\begin{equation}
    f(x) = \sum_{\substack{T\subseteq[n]\\ 1\le|T|\le d}}c_T\prod_{j\in T}x_j \pmod{2^d}, \quad c_T \in 2^{|T|-1} \ZZ_{2^{d+1-|T|}}.
    \label{eq:f2_representation_diagonal_unitaries}
\end{equation}
\end{prop}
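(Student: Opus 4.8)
The plan is to start from the explicit normal form of $V$ furnished by Theorem~\ref{thm:diagonal_cliffords_FB}: up to a global phase, $V=\exp\!\big(i\frac{\pi}{2^d}\sum_{\emptyset\ne S\subseteq[n]}a_S Z^S\big)$ with $a_S\in\ZZ$. Since every $Z^S$ is diagonal in the computational basis with $Z^S\ket{x}=(-1)^{\sigma_S(x)}\ket{x}$, where $\sigma_S(x)=\bigoplus_{j\in S}x_j$ denotes the parity of $x$ on $S$, all the $Z^S$ commute and $V$ acts diagonally. Applying $V$ to $\ket{+}^{\otimes n}=2^{-n/2}\sum_x\ket{x}$ therefore gives directly a state of the phase form Eq.~\eqref{eq:generalized_phase_state}, namely $V\ket{+}^{\otimes n}=2^{-n/2}\sum_x \exp\!\big(i\frac{\pi}{2^d}\sum_S a_S(-1)^{\sigma_S(x)}\big)\ket{x}$.

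Next I would extract the $x$-independent contribution as a global phase. Writing $(-1)^{\sigma_S(x)}=1-2\sigma_S(x)$ splits the exponent into the constant $\frac{\pi}{2^d}\sum_S a_S$ (absorbed as a global phase) and the $x$-dependent part $-\frac{2\pi}{2^d}\sum_S a_S\sigma_S(x)$. Matching with $\omega_q^{f(x)}=e^{2\pi i f(x)/2^d}$ for $q=2^d$ then identifies $f(x)=-\sum_{\emptyset\ne S}a_S\,\sigma_S(x)\pmod{2^d}$. The key tool is the standard expansion of a parity as an integer polynomial, $\sigma_S(x)=\sum_{\emptyset\ne T\subseteq S}(-2)^{|T|-1}\prod_{j\in T}x_j$, which I would verify by induction on $|S|$ (equivalently via $\prod_{j\in S}(1-2x_j)=(-1)^{\sigma_S(x)}$). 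Substituting this and interchanging the order of summation over $S$ and $T$ yields $f(x)=\sum_{\emptyset\ne T}c_T\prod_{j\in T}x_j\pmod{2^d}$ with the closed form $c_T=-(-2)^{|T|-1}\sum_{S\supseteq T}a_S$.

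It remains to read off the degree and the coefficient structure from this expression for $c_T$, which I expect to be the only point needing care. Since $\sum_{S\supseteq T}a_S\in\ZZ$ and $(-2)^{|T|-1}=\pm 2^{|T|-1}$, each $c_T$ is an integer multiple of $2^{|T|-1}$; this factor vanishes modulo $2^d$ as soon as $|T|\ge d+1$, so all monomials of degree exceeding $d$ drop out and $f\in\calP_q(n,d)$ with $q=2^d$. Finally, the multiples of $2^{|T|-1}$ inside $\ZZ_{2^d}$ are exactly the $2^{d+1-|T|}$ values $\{0,2^{|T|-1},\dots,2^d-2^{|T|-1}\}$, i.e.\ $c_T\in 2^{|T|-1}\ZZ_{2^{d+1-|T|}}$, which is precisely Eq.~\eqref{eq:f2_representation_diagonal_unitaries}. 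The main obstacle is purely bookkeeping: keeping the modular arithmetic consistent through the change of summation and confirming that the single $2$-adic valuation argument simultaneously delivers both the degree bound and the stated range for $c_T$.
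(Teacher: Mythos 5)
Your proposal is correct and follows essentially the same route as the paper's proof: both start from the exponential normal form of Theorem~\ref{thm:diagonal_cliffords_FB}, convert the $Z^S$ characters into multilinear polynomials via $(-1)^{x_j}=1-2x_j$, absorb the constant term as a global phase, and extract the degree bound and coefficient range $c_T\in 2^{|T|-1}\ZZ_{2^{d+1-|T|}}$ from the $2$-adic valuation of $c_T=-(-2)^{|T|-1}\sum_{S\supseteq T}a_S$, which matches the paper's $b_T/2$ exactly. The only cosmetic difference is that you split off the global phase at the exponent level before expanding (so all arithmetic is mod $2^d$), whereas the paper expands first mod $2^{d+1}$ and then removes $b_\emptyset$ and divides by $2$.
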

\begin{proof}
Applying a unitary from $D_d(n)$ to the state $\ket{+}^{\otimes n}$, we obtain a generalized phase state
\begin{align}\label{eq:Vappliedto+}
V\ket{+}^{\otimes n}&=\frac{1}{\sqrt{2^n}}\sum_{x\in\Fset{n}}\omega_{2^{d+1}}^{g(x)}\ket{x},\\
g(x)&=\sum_{\substack{S\subseteq[n]\\ S\neq\emptyset}}a_S\prod_{i\in S}(-1)^{x_i},\quad a_S\in\mathbb{Z}.
\end{align}
We can also understand this phase state by converting $g(x)$ to its $\FF_2$-representation using $(-1)^{x_j}=1-2x_j$ (as $x \in \Fset{n})$. Since $g(x)$ can be evaluated modulo $2^{d+1}$, monomials with degree greater than $d$ can be removed. We find
\begin{equation}
g(x) = \sum_{\substack{T\subseteq[n]\\ |T|\le d}}b_T\prod_{j\in T}x_j \pmod{2^{d+1}}, \quad b_T=(-2)^{|T|}\sum_{\substack{S\supseteq T\\ S\neq\emptyset}}a_S\in 2^{|T|}\mathbb{Z}.
\end{equation}

We note, from the $b_{\emptyset}$ term, that $V$ introduces a phase of $\omega_{2^{d+1}}^{b_{\emptyset}}$ to the basis state $\ket{0}^{\otimes n}$. Removing this, we obtain $\tilde g(x)=g(x)-b_{\emptyset}$, which is divisible by 2, i.e.~$f(x)=\tilde g(x)/2$ is a polynomial, because all $b_T$ for $T\neq\emptyset$ are even. Therefore, Eq.~\eqref{eq:Vappliedto+} becomes
\begin{equation}
\omega_{2^{d+1}}^{-b_\emptyset}V\ket{+}^{\otimes n}=\frac{1}{\sqrt{2^n}}\sum_{x\in\{0,1\}^n}\omega_{2^d}^{f(x)}\ket{x},
\end{equation}
which are exactly the states from Eq.~\eqref{eq:generalized_phase_state}, since $f(x)$ is the degree-$d$ polynomial
\begin{equation}
    f(x) = \sum_{\substack{T\subseteq[n]\\ 1\le|T|\le d}}c_T\prod_{j\in T}x_j \pmod{2^d}, \quad c_T=b_T/2\in 2^{|T|-1}\mathbb{Z}.
\end{equation}
\end{proof}
To learn a circuit representation of $V$ from samples, we have the following result.
\begin{theorem}
Let $V$ be an unknown $n$-qubit quantum circuit in $D_d(n)$, the group of diagonal unitaries in the $d$-th level of the Clifford hierarchy. A circuit representation of $V$ can then be learned through $O(n^d)$ queries to $V$ and using only separable measurements. 
\end{theorem}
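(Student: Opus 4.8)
The plan is to reduce this to the generalized phase state learning result of Theorem~\ref{thm:generalizedphasedegreed} and then reconstruct a circuit from the learned polynomial. First I would observe that preparing $\ket{+}^{\otimes n}=\mathsf{H}^{\otimes n}\ket{0}^{\otimes n}$ and applying one query of $V$ produces exactly one copy of $V\ket{+}^{\otimes n}$. By Proposition~\ref{prop:gps_diagonal_unitaries}, this state is, up to an irrelevant global phase, a generalized phase state $\ket{\psi_f}=\frac{1}{\sqrt{2^n}}\sum_x \omega_{2^d}^{f(x)}\ket{x}$ with $f\in\calP_q(n,d)$, $q=2^d$, whose $\FF_2$-representation $f(x)=\sum_{1\le|T|\le d} c_T\prod_{j\in T}x_j \pmod{2^d}$ has coefficients constrained to $c_T\in 2^{|T|-1}\ZZ_{2^{d+1-|T|}}$.

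Second, since $q=2^d$ is even, I can invoke Theorem~\ref{thm:generalizedphasedegreed} directly: consuming $O(n^d)$ copies of $\ket{\psi_f}$, i.e.\ $O(n^d)$ queries to $V$, and using only single-qubit gates and standard-basis measurements (hence separable measurements), the algorithm outputs a polynomial $g\in\calP_q(n,d)$ equivalent to $f$ with probability $1-2^{-\Omega(n)}$. Since equivalence permits only a constant shift, all coefficients $c_T$ of the nonempty monomials are recovered exactly; the constant term corresponds precisely to the unlearnable global phase and may be discarded.

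Third---the only genuinely new step---I would reconstruct a circuit for $V$ from the recovered coefficients. Each nonempty monomial $c_T\prod_{j\in T}x_j$ corresponds to a diagonal multi-controlled phase gate on the qubits in $T$. Writing $c_T=2^{|T|-1}m_T$ (an integer $m_T$ exists precisely because of the constraint in Proposition~\ref{prop:gps_diagonal_unitaries}), this monomial contributes a phase $\omega_{2^d}^{2^{|T|-1}m_T\prod_{j\in T}x_j}=\omega_{2^{d-|T|+1}}^{m_T\prod_{j\in T}x_j}$, which is implemented by $\big(\mathsf{C}^{|T|-1}Z^{1/2^{d-|T|}}\big)^{m_T}$. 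Setting $i=|T|-1$ and $j=d-|T|$ gives $i+j=d-1$, so this is an integer power of exactly one of the generators $\mathsf{C}^{(i)}Z^{1/2^j}$ of $D_d(n)$ from \cite{zeng2008semi}. Taking the product of these gates over all monomials present in $g$ yields $V$ up to global phase, and since there are at most $|\calM(n,d)|=O(n^d)$ monomials the reconstruction is efficient and the resulting circuit has $O(n^d)$ gates.

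The sample-complexity analysis is entirely delegated to Theorem~\ref{thm:generalizedphasedegreed}, so no new bound is required; the main thing to get right is the bookkeeping in the reconstruction step---verifying that the precise coefficient structure $c_T\in 2^{|T|-1}\ZZ_{2^{d+1-|T|}}$ guaranteed by Proposition~\ref{prop:gps_diagonal_unitaries} matches exactly the allowed generators $\mathsf{C}^{(i)}Z^{1/2^j}$ with $i+j=d-1$, which is what makes each monomial realizable as an integer power of a single generator. I expect this to be routine rather than a deep obstacle.
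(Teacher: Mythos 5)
Your proposal is correct and follows essentially the same route as the paper: reduce to learning the generalized phase state $V\ket{+}^{\otimes n}$ via Proposition~\ref{prop:gps_diagonal_unitaries}, invoke Theorem~\ref{thm:generalizedphasedegreed} to recover $f$ (up to the constant shift, which is the global phase) with $O(n^d)$ copies and separable measurements, and then reconstruct the circuit monomial-by-monomial with multi-controlled phase gates. Your extra bookkeeping showing each monomial's gate is an integer power of a generator $\mathsf{C}^{(i)}Z^{1/2^j}$ with $i+j=d-1$ is a slightly more explicit version of the paper's reconstruction step (the paper simply inserts a controlled-diagonal gate applying the phase $\exp(i\pi c_T/2^{d-1})$), and both are equivalent.
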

\begin{proof}
From Proposition~\ref{prop:gps_diagonal_unitaries}, we note the correspondence between $V$ and the generalized phase state $\ket{\psi_f}$. Using Theorem~\ref{thm:generalizedphasedegreed}, we can learn the multi-linear representation of $f$~(Eq.~\eqref{eq:f2_representation_diagonal_unitaries}) corresponding to such a state, using separable measurements on $O(n^d)$ copies of $V\mathsf{H}^{\otimes n}\ket{0}^{\otimes n}$. We obtain a circuit representation of $V$ by inserting appropriate gates corresponding to monomials $\prod_{i \in T \subseteq [n]} x_i$, characterized by set $T$, present in $f$. We now define these gates with respect to the state $\ket{x}$ where $x \in \Fset{n}$. For $|T|\geq 2$, we insert a controlled-diagonal gate over qubits in $T$, that puts a phase of $\exp(i \pi c_T/2^{d-1})$ if $x_j=1,\, \forall j \in T$ and no phase otherwise. For monomials corresponding to singletons $x_j$ for $j\in [n]$ (i.e., $|T|=1$), we insert the phase gate $Z^{c_T/2^{d-1}} = \ket{0}\bra{0} + \exp(i \pi c_T/2^{d-1})\ket{1}\bra{1}$ on qubit $j$.
\end{proof}
\bibliographystyle{alpha}
\bibliography{references}

\end{document}